\tikzstyle{blank}=[fill=white, draw=white, shape=circle, inner sep=0, minimum size=.3cm]
\tikzstyle{black_circle}=[fill=black, draw=black, shape=circle]
\tikzstyle{white_circle}=[fill=white, draw=black, shape=circle]
\tikzstyle{blue_circle}=[fill=blue, draw=black, shape=circle]
\tikzstyle{black_square}=[fill=black, draw=black, shape=rectangle]
\tikzstyle{blue_square}=[fill=blue, draw=black, shape=rectangle]
\tikzstyle{grey square}=[fill={rgb,255: red,159; green,159; blue,159}, draw={rgb,255: red,159; green,159; blue,159}, shape=rectangle]
\tikzstyle{ctrl}=[fill=black, draw=black, shape=circle, inner sep=0, minimum size=1mm]
\tikzstyle{target inner}=[draw=black, shape=circle, cross out, rotate=45, inner sep=0, minimum size=1mm, tikzit fill={rgb,255: red,255; green,64; blue,47}]
\tikzstyle{target outer}=[draw=black, shape=circle, inner sep=0, minimum size=2mm, tikzit fill={rgb,255: red,255; green,51; blue,24}]
\tikzstyle{gate}=[fill=white, draw=black, shape=rectangle]
\tikzstyle{small_none}=[inner sep=0, minimum size=.3cm, font={\scriptsize}]
\tikzstyle{blue_dashed_arrow_right}=[->, draw=blue, dashed, thick]
\tikzstyle{black_arrow_left}=[<-]
\tikzstyle{blue_dashed_arrow_left}=[<-, draw=blue, dashed, thick]
\tikzstyle{red_dashed_arrow_left}=[draw=red, <-, dashed, thick]
\tikzstyle{red_dashed_arrow_right}=[draw=red, ->, dashed, thick]
\tikzstyle{new edge style 0}=[-, draw={rgb,255: red,144; green,144; blue,144}]
\tikzstyle{green_edge}=[->, draw={rgb,255: red,26; green,212; blue,79}, dashed, thick]
\tikzstyle{orange_edge}=[draw={rgb,255: red,255; green,176; blue,66}, ->, dashed, thick]
\tikzstyle{black_edge}=[->, thick]
\tikzstyle{new edge style 1}=[-, draw={rgb,255: red,150; green,150; blue,150}, dashed]
\tikzstyle{black_thick_l}=[thick, <-]
\tikzstyle{dotted blue}=[-, draw={rgb,255: red,93; green,195; blue,239}, dashed, thick]
\tikzstyle{new edge style 2}=[draw=red, ->]
\tikzstyle{new edge style 3}=[-, draw={rgb,255: red,230; green,54; blue,19}]
\declaretheoremstyle[
  spaceabove=5mm, spacebelow=5mm,
  headfont=\normalfont\bfseries\sffamily,
  notefont=\mdseries,
  notebraces={(}{)},
  bodyfont=\normalfont,
  postheadspace=1em,
  qed=\(\lrcorner\)
]{def}
\declaretheoremstyle[
  spaceabove=8mm, spacebelow=8mm,
  headfont=\normalfont\bfseries\sffamily,
  notefont=\mdseries,
  notebraces={(}{)},
  bodyfont=\normalfont,
  postheadspace=1em
]{thm}
\declaretheoremstyle[
  spaceabove=5mm, spacebelow=8mm,
  headfont=\normalfont\bfseries\sffamily,
  notefont=\mdseries,
  notebraces={(}{)},
  bodyfont=\itshape,
  postheadspace=1em
]{prop}
\declaretheorem[name=Definition, style=def]{Def}
\declaretheorem[name=Theorem, style=prop]{Thm}
\declaretheorem[name=Proposition, sibling=Def, style=prop]{Prop}
\declaretheorem[name=Lemma, sibling=Prop, style=prop]{Lem}
\declaretheorem[name=Corollary, numbered=no, style=prop]{Cor}
\declaretheorem[name=Remark, style=remark, numbered=no, style=def]{Rem}
\newcommand{\NN}{\mathbb{N}}
\newcommand{\ZZ}{\mathbb{Z}}
\newcommand{\RR}{\mathbb{R}}
\newcommand{\CC}{\mathbb{C}}
\newcommand{\LR}{\mathrm{L}^2(\mathbb{R})}
\newcommand{\Schw}{\Sch}
\renewcommand{\a}{\alpha}
\renewcommand{\b}{\beta}
\newcommand{\g}{\gamma}
\renewcommand{\d}{\delta}
\newcommand{\e}{\varepsilon}
\newcommand{\s}{\sigma}
\newcommand{\Ao}{\mathsf{A}}
\newcommand{\Co}{\mathsf{C}}
\newcommand{\Eo}{\mathsf{E}}
\newcommand{\Fo}{\mathsf{F}}
\newcommand{\Go}{\mathsf{G}}
\newcommand{\Ho}{\mathsf{H}}
\newcommand{\Io}{\mathsf{I}}
\newcommand{\Jo}{\mathsf{J}}
\newcommand{\Ko}{\mathsf{K}}
\newcommand{\Mo}{\mathsf{M}}
\newcommand{\Po}{\mathsf{P}}
\newcommand{\Qo}{\mathsf{Q}}
\newcommand{\So}{\mathsf{S}}
\newcommand{\To}{\mathsf{T}}
\newcommand{\Uo}{\mathsf{U}}
\newcommand{\Wo}{\mathsf{W}}
\newcommand{\Xo}{\mathsf{X}}
\newcommand{\Zo}{\mathsf{Z}}
\newcommand{\Ach}{\mathscr{A}}
\newcommand{\Bch}{\mathscr{B}}
\newcommand{\Cch}{\mathscr{C}}
\newcommand{\Fch}{\mathscr{F}}
\newcommand{\Gch}{\mathscr{G}}
\newcommand{\Hch}{\mathscr{H}}
\newcommand{\Och}{\mathscr{O}}
\newcommand{\Sch}{\mathscr{S}}
\newcommand{\Tch}{\mathscr{T}}
\newcommand{\Uch}{\mathscr{U}}
\newcommand{\Vch}{\mathscr{V}}
\newcommand{\Zch}{\mathscr{Z}}
\newcommand{\Hcal}{\mathcal{H}}
\DeclareMathOperator{\CXo}{\Co\Xo}
\DeclareMathOperator{\CZo}{\Co\Zo }
\begin{document}

\title{Flow conditions for continuous-variable measurement based quantum computing}

\author{Robert I. Booth}
\affiliation{
  Sorbonne Universit\'e, CNRS, LIP6,
  4 place Jussieu, \mbox{F-75005} Paris, France
}
\affiliation{
  LORIA CNRS, Inria Mocqua, Universit\'e de Lorraine, \mbox{F-54000} Nancy,
  France
}
\author{Damian Markham}
\affiliation{
  Sorbonne Universit\'e, CNRS, LIP6,
  4 place Jussieu, \mbox{F-75005} Paris, France
}
\affiliation{
  JFLI, CNRS / National Institute of Informatics, University of Tokyo, Tokyo, Japan
}

\maketitle

\begin{abstract}
  In measurement-based quantum computing (MBQC), computation is carried out by a
  sequence of measurements and corrections on an entangled state. Flow, and
  related concepts, are powerful techniques for characterising the dependence of
  the corrections on previous measurement results. We introduce flow-based
  methods for quantum computation with continuous-variable graph states, which
  we call CV-flow. These are inspired by, but not equivalent to, the notions of
  causal flow and g-flow for qubit MBQC. We also show that an MBQC with CV-flow
  approximates a unitary arbitrarily well in the infinite-squeezing limit,
  addressing issues of convergence which are unavoidable in the
  infinite-dimensional setting.
  In developing our proofs, we provide a method for converting a CV-MBQC
  computation into a circuit form, analogous to the circuit extraction method of
  Miyazaki et al, and an efficient algorithm for finding CV-flow when it exists
  based on the qubit version by Mhalla and Perdrix. Our results and techniques
  naturally extend to the cases of MBQC for quantum computation with qudits of
  prime local dimension.
\end{abstract}

Causal flow is a graph-theoretical tool for characterising the quantum states
used in measurement-based quantum computation (MBQC) and closely related to the
measurement calculus \cite{raussendorf_one-way_2001,
  raussendorf_computational_2002, danos_determinism_2006}.
Its original purpose was to identify a class of qubit graph states that can be
used to perform a deterministic MBQC despite inherent randomness in the outcomes
of measurements, but it has since found applications to a wide variety of
problems in quantum information theory.

Along with its generalisation g-flow \cite{danos_measurement_2007}, causal flow has
been used to parallelise quantum circuits by translating them to MBQC
\cite{broadbent_parallelizing_2009}, to construct schemes for the verification of
blind quantum computation \cite{fitzsimons_unconditionally_2017, mantri_flow_2017},
to extract bounds on the classical simulatability of MBQC
\cite{markham_entanglement_2014}, to prove depth complexity separations between the
circuit and measurement-based models of computation
\cite{broadbent_parallelizing_2009, miyazaki_analysis_2015}, and to study
trade-offs in adiabatic quantum computation \cite{antonio_adiabatic_2014}.
A relaxation of these notions was also used in \cite{mhalla_which_2014} to further
classify which graph states can be used for MBQC.
g-flow can also be viewed as a method for turning protocols with post-selection
on the outcomes of measurements into deterministic protocols without
post-selection, which has been useful for applying ZX-calculus techniques
\cite{backens_there_2021}.
This perspective has been used for the verification of measurement-based quantum
computations \cite{hutchison_rewriting_2010}, as well as state of the art quantum
circuit optimisation techniques \cite{duncan_graph-theoretic_2020} and even to
design new models of quantum computation \cite{de_beaudrap_pauli_2020}.

Concurrently, it has become apparent that quantum computing paradigms other than
qubit based models might offer viable alternatives for constructing a
quantum computer.
Continuous-variable (CV) quantum computation, which has a physical interpretation
as interacting modes of the quantum electromagnetic field, is such a non-standard
model for quantum computation \cite{lloyd_quantum_1999, braunstein_quantum_2005}.
CV computation also has implementational advantages over the discrete variable
case. To the author's knowledge, the largest entangled states observed
experimentally to date (in terms of number of involved systems) remain those
obtained in CV optical experiments using time-multiplexing, beating their
discrete counterparts by many orders of magnitude
\cite{yokoyama_ultra-large-scale_2013, yoshikawa_generation_2016,
asavanant_time-domain_2019}. That said, there of course remain challenges to a CV
quantum computer. While it is relatively easier to generate entanglement in CV,
there are other operations necessary for quantum speed-up, dubbed
\emph{non-Gaussian} operations, which are much more difficult to obtain since
they correspond to interactions typically observed only at very high energies or
in very specific states of matter. Some progress has recently been made on this
aspect \cite{miyata_implementation_2016, konno_nonlinear_2021,
konno_non-clifford_2021}. Secondly, genuinely CV quantum error correction codes
have proved elusive. This is partially because it has been shown that no
satisfactory CV code is possible without non-Gaussian operations
\cite{eisert_distilling_2002, niset_no-go_2009, vuillot_quantum_2019}, but even
when assuming access to such operations, results beginning to tackle natural
classes of errors have only recently been obtained \cite{noh_encoding_2020,
  hao_topological_2021}.

The MBQC framework has been extended to the CV case, with a surprisingly similar
semantics \cite{zhang_continuous-variable_2006, menicucci_universal_2006}.
Accordingly, some structures transfer naturally from DV to CV, and it is of
interest to investigate if it is possible to define notions of flow for CV-MBQC.
That one should be interested in CV-MBQC with arbitrary entanglement topologies,
as we investigate here, is justified by results such as those of
\cite{alexander_flexible_2016}, in which it is shown that various advantages can
be obtained over the standard square-lattice CV cluster states.
However, CV-MBQC comes with an additional complication: the gate teleportation
protocol is an approximation to the desired unitary gate.
This comes about because the unitary can only be understood to be obtained in the
limit of infinite squeezing of a physical protocol.
The convergence of this approximation is implicit in CV teleportation protocols,
but the convergence of an MBQC with arbitrary entanglement topologies is not
assured.

In this paper, we define such a notion, converting the results on flow and 
g-flow from \cite{browne_generalized_2007} to continuous variables, and use it 
to identify a class of graph states that can be used for convergent MBQC 
protocols. In section \ref{sec:prelims}, we review CV quantum computation and 
define our computational model. In section \ref{sec:flow}, we state our CV flow 
conditions, and prove that they give rise to a suitable MBQC protocol with 
auxiliary squeezed states \cite{lvovsky_squeezed_2015}. In section 
\ref{sec:circuit}, we construct a quantum circuit extraction scheme for our 
flow-based CV-MBQC protocol, proving convergence in the infinite-squeezing 
limit. In section \ref{sec:qudit}, we briefly explain how our techniques adapt 
to the qudit case, yielding an analogous MBQC scheme and a corresponding 
circuit extraction. Finally, appendix \ref{app:convergence} contains some of
the more technical proofs, appendix \ref{app:algorithm} a polynomial-time
algorithm for determining if a graph has CV-flow, appendix \ref{app:comparison}
a comparison of our CV flow conditions to the original DV conditions when this
makes sense, and appendix \ref{app:depth} an example of depth-complexity
advantage using CV-MBQC compared to a circuit acting on the same number of
inputs.

\paragraph{Acknowledgements}
The authors were supported by the ANR VanQuTe project (ANR-17-CE24-0035). RIB is
now based at the University of Edinburgh, but did all the work at his previous
institutions. We thank Simon Perdrix, Elham Kashefi, Ulysse Chabaud and
Francesco Arzani for enlightening discussions.

\section{Preliminaries}
\label{sec:prelims}
Our model is based on measurement-based quantum computation using
continuous variable graph states (CV-MBQC), as described in
\cite{zhang_continuous-variable_2006,menicucci_universal_2006,gu_quantum_2009}.
We first recall some background relevant for continuous-variable quantum
computation, and briefly review CV-MBQC, which is the primary motivation for
deriving flow conditions for continuous variables.
We then introduce open graphs and how they relate to the states used in CV-MBQC.

\paragraph{\sffamily Notation.}
We use \(\abs{X}\) to denote the cardinality of the set \(X\).

Sans-serif font will denote linear operators on a Hilbert space:
\(\mathsf{A, B, \dots, X, Y, Z}\), and \(\mathsf{A^*}\) the Hermitian
adjoint of \(\mathsf{A}\).
\(\Io\) is the identity operator.
Cursive fonts are used for completely positive trace non-increasing maps
(quantum channels): \(\Ach, \Bch, \dots\)

\subsection{Computational model}
\label{ssec:cv-comp}

In CV quantum computation, the basic building block is the
\textbf{qumode}\footnote{This terminology comes from quantum optics, where we can
  identify each quantisation mode of the quantum electromagnetic field with a
  space \(\mathrm{L}^2(\RR)\) \cite{fabre_modes_2020}.}, a complex, countably
infinite-dimensional, separable Hilbert space \(\Hcal = \mathrm{L}^2(\RR,\CC)\)
which takes the place of the qubit. \(\Hcal\) is a space of square integrable
complex valued functions: an element of \(\phi \in \Hcal\) is a function \(\RR
\to \CC\) such that
\begin{equation}
  \int_{\RR} \abs{\phi(x)}^2 \dd{x} < \infty,
\end{equation}
and where the Hilbert inner product is
\begin{equation}
  \langle \psi, \phi \rangle \coloneqq \int_{\RR} \bar{\psi}(x) \phi(x) \dd{x},
\end{equation}
with corresponding norm \(\norm{\psi}_\Hcal \coloneqq \sqrt{\langle \psi,\psi
  \rangle}.\)

Each qumode is equipped with a pair of unbounded linear position and momentum
operators \(\Qo\) and \(\Po\), which are defined on the dense subspace \(\Schw
\subseteq \Hcal\) of Schwartz functions\footnote{This is a technical condition
  which ensures that for any real polynomial \(\mathsf{p}\) in \(\Qo\)
  and \(\Po\), \(\mathsf{p} \phi\) remains a square-integrable function,
  something which is not true in general.
  The natural setting for this discussion is the \emph{rigged Hilbert space}, or
  Gel'fand triple, \(\Schw \subseteq \Hcal \subseteq \Schw^*,\) where \(\Schw^*\)
  is the continuous dual of Schwartz space or \emph{space of tempered
    distributions}.
  We refer the interested reader to the classic series of books by
  Gel'fand et al. \cite{gelfand_generalized_2016, gelfand_generalized_2016-1,
    gelfand_generalized_2016-2, gelfand_generalized_2016-3}, and also to
  \cite{gadella_unified_2002, celeghini_groups_2019} and the references therein
  for a recent discussion of its application to quantum mechanics.
}, along with any inhomogeneous polynomial thereof:
\begin{equation}
  \qq{for any \(\phi \in \Schw\),} \Qo \phi(x) \coloneqq x \phi(x) \qand \Po
  \phi(x) \coloneqq -i\dv{\phi(x)}{x}.
\end{equation}
From these, we can define the corresponding translation operators (continuously
extendable to all \(\Hcal\)):
\begin{equation}
  \qq{for any \(s \in \RR\),} \Xo(s) \coloneqq \exp( -is \Po ) \qq{and} \Zo(s)
  \coloneqq \exp( is \Qo ),
\end{equation}
such that
\begin{align}
  \Xo(-s) \Qo \Xo(s) &= \Qo + s\Io; \\
  \Zo(-s) \Po \Zo(s) &= \Po + s\Io.
\end{align}
In fact, all four of these operators are defined by the exponential Weyl
commutation relations (up to unitary equivalence, by the Stone-von Neumann
theorem, see \cite{hall_quantum_2013} section 14):
\begin{equation}
  \qq{for any \(s,t \in \RR\),} \Xo(s) \Zo(t) = e^{ist} \Zo(t) \Xo(s),
\end{equation}
which generalise the canonical commutation relations, and further related by the
Fourier transform operator \(\mathsf{F} : \Hcal \to \Hcal\):
\begin{equation}
  \mathsf{FQF^*} =\Po
  \qq{and}
  \mathsf{FPF^*} = \mathsf{-Q}.
\end{equation}

The \textbf{squeeze operator} is defined for any real number \(\eta > 0\),
called the \textbf{squeezing factor}, by
\begin{equation}
  \So(\eta) \coloneqq \exp\big(-i \ln(\eta) (\Qo\Po + \Po\Qo)\big).
\end{equation}
Then \(\So(\eta)\psi(x) = \sqrt{\eta^{-1}} \psi(\eta^{-1} x)\) so that:
\begin{align}
  \So(\eta)^* \Qo \So(\eta) = \eta \Qo, &\quad  \So(\eta)^* \Po \So(\eta) = \eta^{-1} \Po \\
  \So(\eta)^* \Zo(s) \So(\eta) = \Zo(\eta s), &\quad  \So(\eta)^* \Xo(s) \So(\eta) = \Xo(\eta^{-1} s).
\end{align}

Following Lloyd and Braunstein \cite{lloyd_quantum_1999,
  braunstein_quantum_2005}, the state of a set of \(N\) qumodes can be used to
encode information and perform computations just as one would with a register of
qubits, using unitaries from the set
\begin{equation}
  \{\Fo, \exp(is\Qo_j), \exp(is\Qo_j^2), \exp(is\Qo_j^3), \exp(i s \Qo_j \Qo_k)
    \mid s \in \RR, j,k \in \{1,...,N\}\},
\end{equation}
and where states are obtained with the usual tensor product of Hilbert spaces.
The indices \(j,k\) indicate on which subsystems in the tensor product the
operators act.
For brevity and by analogy with DV, we write:
\begin{align}
  \Co\Zo_{j, k}(s) &\coloneqq \exp(i s \Qo_j \Qo_k), \label{eq:CZ} \\
  \Co\Xo_{j, k}(s) &\coloneqq \exp(i s \Qo_j \Po_k) = \Fo_k \CZo_{j,k}(s) \; \Fo_k^*, \label{eq:unitary} \\
  \Uo_k(\a,\b,\g) &\coloneqq \exp( i \a \Qo_k ) \exp( i \b \Qo_k^2 )
                    \exp( i \g \Qo_k^3 ). \label{eq:unitaryb}
\end{align}

This model of computation is strong enough to encode qubit quantum computation
\cite{gottesman_encoding_2001}, and is universal in the sense that any unitary can be approximated by combinatons of applications of (\ref{eq:CZ}) - (\ref{eq:unitaryb})
\cite{lloyd_quantum_1999}.

\subsubsection*{Mixed states}

Even ignoring inevitable experimental noise, since the teleportation procedures that we consider are not unitary in general,
but only in an ideal limit, we need to work with mixed states.
In continuous variables, there are further mathematical technicalities involved
with defining density operators, which we deal with here
\cite{shirokov_approximation_2008, hall_quantum_2013}.

Let \(\Hcal\) be a separable Hilbert space, and \(\mathfrak{B}(\Hcal)\) be the
algebra of bounded operators on \(\Hcal\).
We say that a self-adjoint, positive operator \(\Ao \in \mathfrak{B}(\Hcal)\) is
\textbf{trace-class} if for an arbitrary choice of basis \(\left\{ e_i
\right\}\) of \(\Hcal\), we have
\begin{equation}
  \sum_{n \in \NN} \langle e_n, \Ao e_n \rangle < +\infty.
\end{equation}
An operator \(\Ao \in \mathfrak{B}(\Hcal)\) is itself trace-class if
the positive self-adjoint operator \(\sqrt{\Ao^* \Ao}\), defined using the
functional spectral calculus, is trace-class.

The set of trace-class operators forms a Banach algebra
\(\mathfrak{T}(\Hcal)\) with norm given by the trace:
\begin{equation}
  \qq{for any \(\Ao \in D(\Hcal),\)} \tr(\Ao) \coloneqq \sum_{n \in \NN} \langle e_n, \sqrt{\Ao^* \Ao} e_n \rangle,
\end{equation}
and a positive self-adjoint operator \(\rho \in \mathfrak{T}(\Hcal)\) is called
a \textbf{density operator} if \(\tr(\rho) = 1\). We denote \(D(\Hcal)\) the set
of density operators, and for any state \(\psi \in \Hcal\), the projector
\(\rho_\psi : \phi \mapsto \langle \psi, \phi \rangle \psi\) is a density
operator since \(\tr(\rho_\psi) = \norm{\psi}^2\). The set \(D(\Hcal)\) thus
corresponds to a set of quantum states which extends the space \(\Hcal\). If
a density operator takes the form \(\rho_\psi\) for some \(\psi \in \Hcal\), we
say it is a \textbf{pure state}, otherwise it is a \textbf{mixed state}.

However, some of the proofs of convergence we use will depend on stronger 
assumptions on the set of input states which are allowed. We will need to make
use of the fact that the Wigner function of inputs to a quantum teleportation
are Schwartz functions. A theory of these density operators, Schwartz density
operators, was developed in \cite{keyl_schwartz_2016}, but since all reasonable
physical states are included in this set \cite{bohm_dirac_1989}, we will simply
refer to them as \textbf{physical states}.
 
\subsubsection*{Topologies on the set of quantum operations}

A linear map \(\Vch : \mathfrak{T}(\Hcal) \to \mathfrak{T}(\Hcal)\) is
trace non-increasing if for any \(\rho \in D(\Hcal)\),
\(\tr(\Vch[\rho]) \leqslant 1\), and completely positive if the dual map
is completely positive.\footnote{The dual map of \(\Vch\) is the map \(\Vch^* :
  \mathfrak{B}(\Hcal) \to \mathfrak{B}(\Hcal)\) given by \(\tr(\Ao
  \Vch[\rho]) = \tr(\Vch^*[\Ao] \rho)\) for any \(\rho \in D(\Hcal)\).
  It is completely positive if for all \(n \in \NN\), \(\Vch^* \otimes id_n :
  \mathfrak{B}(\Hcal) \otimes \CC^n \to \mathfrak{B}(\Hcal) \otimes
  \CC^n\) is positive.}
Then, a completely positive, trace non-increasing linear map
\(\mathfrak{T}(\Hcal) \to \mathfrak{T}(\Hcal)\) implements a
physical transformation on the set of states \(D(\Hcal)\), called a
\textbf{quantum operation}. For example, for any unitary \(\Uo\) acting on
\(\mathscr{H}\), there is a quantum operations \(\Uch_\Uo : \mathfrak{T}(\Hcal)
\to \mathfrak{T}(\Hcal)\) given by conjugation by \(\Uo\), i.e. \(\Uch_\Uo[\rho]
= \Uo \rho \Uo^*\).

The set of all quantum operations \(\mathfrak{T}(\Hcal) \to
\mathfrak{T}(\Hcal)\) can be given several different topologies.
The simplest is the \textbf{uniform topology}, given by the norm
\begin{equation}
  \norm{\Vch} = \sup_{\rho \in D(\Hcal)} \tr(\Vch[\rho]).
\end{equation}
However, as discussed in \cite{shirokov_approximation_2008, wilde_strong_2018,
  pirandola_teleportation_2018}, the uniform topology is inappropriate for
considering the approximation of arbitrary quantum operations in
infinite-dimensional Hilbert spaces.
Instead, we use a coarser topology, the \textbf{strong topology}, which is
generated by the family of semi-norms:
\begin{equation}
  \qq{for each \(\rho \in D(\Hcal)\),} \Vch \longmapsto \tr(\Vch[\rho]),
\end{equation}
and a sequence of quantum operations \((\Vch_k)_{k \in \NN}\) converges to
\(\Vch\) in the strong topology if and only if for every \(\rho \in
D(\Hcal)\),
\begin{equation}
  \lim_{k \to \infty} \Vch_k[\rho] = \Vch[\rho].
\end{equation}

Thus, the sequence \((\Vch_k)_{k \in \NN}\) can be viewed as a pointwise
approximation to \(\Vch\), and this is the perspective we take in this paper: we
construct an MBQC procedure, associated to a flow condition, which converges
strongly to unitary quantum circuits in the ideal limit of the approximation.

\subsection{MBQC}
\label{ssec:cv-mbqc}

The workhorse of MBQC (in DV and CV) is gate teleportation, which makes it
possible to apply a unitary operation from a specific set on a qumode by
entangling it with another qumode and measuring.
Informally, for an input state \(\phi \in \Hcal\) the idealised quantum circuit
for gate teleportation in CV (assuming infinite squeezing, see below) is:
\begin{center}
  \leavevmode
  \centering
  \Qcircuit @C=1.2em @R=0.4em {
    \lstick{\phi} & \ctrl{3} & \gate{\mathsf{U}}
    & \measureD{\Po} & \rstick{m} \cw \\
    & & \hspace{-1.8cm}w \\
    \\
    \lstick{\delta_\Po} & \control \qw & \qw
    & \qw & \rstick{\Xo(wm) \So(w) \Fo \Uo \phi} \qw \\ 
  }
  \vspace{3mm}
\end{center}
The auxiliary input \(\delta_\Po\) is a momentum eigenstate with eigenvalue
\(0\), or Dirac delta distribution centered at \(x = 0\).
The two qumode interaction is \(\Co\Zo_{12}(w)\) (equation \eqref{eq:CZ}),
\(\mathsf{U}\) is any unitary gate that commutes with \(\Co\Zo_{12}(w)\) (such
as the unitary \(\Uo(\a,\b,\g)\) from equation \eqref{eq:unitary}), and we
measure the first qumode in the \(\Po\) basis.
If we view \(\mathsf{U}\) as a change of basis for the measurement, this
``gadget'' allows us to perform universal computation using only entanglement
and measurements, in the sense of Lloyd and Braunstein
\cite{menicucci_universal_2006, kalajdzievski_exact_2021}.
However, there is an extra gate \(\Xo(w \cdot m)\) on the output of the
computation \emph{which depends on the result of the measurement \(m\)}.
We call this the \textbf{measurement error}. In the course of a computation, it is the role of flow to describe how to correct for these measurement errors.

The representation of gate teleportation above is an idealisation, and
necessarily only approximates the limit of physically achievable processes.
For our needs, we develop this in detail now.
Formally, Dirac deltas are tempered distributions in the Schwartz sense but
cannot be interpreted as input states (even in principle) since the space of
distributions \(\Sch^*\) is much larger than the state space \(\Hcal\).
It is necessary to use an approximation, given by the following parametrised
Gaussian states, called \textbf{squeezed states}. Let \(g_1\) be an
\(\mathscr{L}^2\)-normalised Gaussian distribution on \(\RR\), explicitly given
by
\begin{equation}
  g_1(x) \coloneqq \frac{1}{\sqrt[4]{2\pi}} e^{-\frac{x^2}{2}},
\end{equation}
and put \(g_\eta \coloneqq S(\eta)g_1\).\footnote{From a physical perspective,
  \(g_1\) is the vacuum state uniquely defined by \((\Qo + i\Po) g_0 = 0\).} In
other words,
\begin{equation}
  g_\eta(x) = \frac{1}{\sqrt[4]{2\pi\eta^2}} e^{-\frac{x^2}{2\eta^2}}.
\end{equation}
In the limit \(\eta \to +\infty\), this state will play the role of the
auxiliary state for the teleportation, but as per the previous discussion,
\(\lim_{\eta \to +\infty} g_\eta \notin \Hcal\),
i.e. the limit is divergent in state space. One might think of this
infinitely-squeezed limit as a momentum eigenstate, but it is not formally a
member of the Hilbert space and should not be treated as such. In particular, it
does not behave well under measurement since this would correspond to taking the
inner product of two distributions, which is not well-defined in the Schwartz
theory. However, as we shall see, the teleportation map \emph{does} converge to
a unitary acting on \(\Hcal\).

The circuit for the gate teleportation procedure with input \(\rho \in
D(\Hcal)\) and denoting \(\s_\eta\) the density operator of a squeezed state
\(g_n\) (as discussed in section \ref{ssec:cv-comp}) is
\begin{equation}
  \label{eq:teleportation_circuit}
  \Qcircuit @C=1.2em @R=0.4em {
    \lstick{\rho} & \ctrl{2} & \gate{\Uo(\a,\b,\g)}
    & \measureD{\Po} & \controlo{2} \cw & \\
    & \hspace{-4mm}w & & & \cwx \\
    \lstick{\s_\eta} & \control \qw & \qw & \qw & \gate{\Xo(-wm)} \cwx &
    \rstick{\Tch_\eta(\a,\b,\g)[\rho]} \qw \\ 
  }
\end{equation}
where the output is given by a quantum channel \(\Tch_\eta(\a,\b,\g)\) which
is not unitary in general.

In order to give an explicit form for \(\Tch_\eta(\a,\b,\g)\), we resort to the
following ``trick'': we identify the circuits
\begin{equation}
  \label{eq:semantics_trick}
  \adjincludegraphics[height=1.8cm, valign=c]{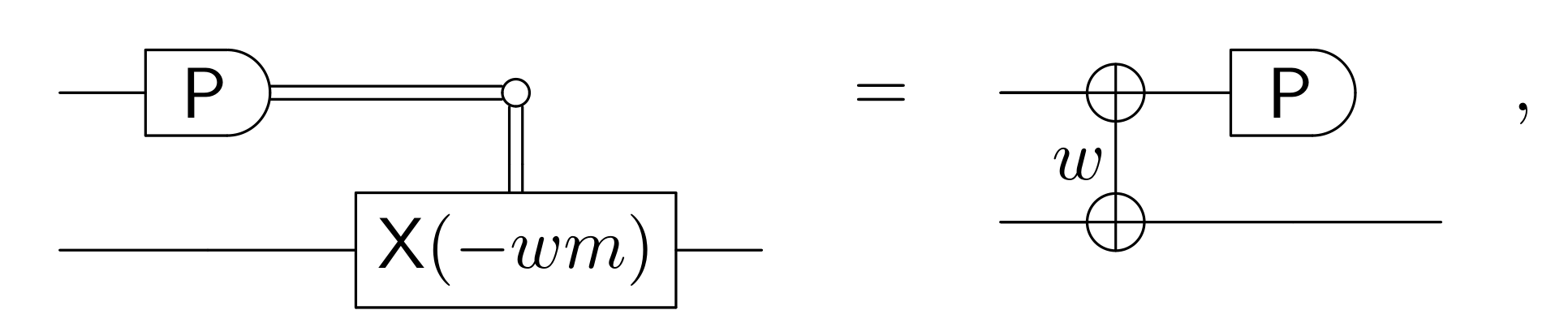}
\end{equation}
where in both circuits the outcome of the measurement is discarded. The
two-sided control gate in the RHS represents the unitary gate
\(\exp(iw\Po_1\Po_2)\). That these two circuits are equal is intuitively
understood from the finite-dimensional case, and can be verified by explicit
calculation, for example on a basis of \(\mathcal{H}\). Then, the output state
of the RHS of this equation can be expressed using the partial trace:
\begin{equation}
  \rho_{\mathrm{out}}
  = \tr_1(\exp(iw\Po_1\Po_2) \rho_{\mathrm{in}} \exp(-iw\Po_1\Po_2))
  = \tr_1(\mathscr{U}_{\exp(iw\Po_1\Po_2)}[\rho_{\mathrm{in}}]),
\end{equation}
where \(\mathscr{U}_\Uo\) is the quantum channel corresponding to conjugation by the
unitary \(\Uo\).

It follows that the quantum channel implemented by the quantum gate
teleportation protocol (equation~\eqref{eq:teleportation_circuit}) is, for any
input state \(\rho \in D(\mathcal{H})\),
\begin{equation}
  \Tch_\eta(\a,\b,\g,w)[\rho]
  = \tr_1 \circ\, \mathscr{U}_{\exp(iw\Po_1\Po_2)} \circ \mathscr{U}_{\Uo_1(\a,\b,\g)} \circ \mathscr{U}_{\CZo_{1,2}(w)} [\rho \otimes \sigma_\eta].
\end{equation}

\begin{restatable}[Teleportation convergence]{Prop}{GateTele}
  \label{prop:tele-circuit}
  For any \(\a, \b, \g, w \in \RR\) and any \(\rho \in D(\Hcal)\),
  \begin{equation}
    \lim_{\eta \to \infty}
    \Tch_\eta(\a,\b,\g,w)[\rho]
    = \mathscr{U}_{\So(w) \Fo \Uo(\a,\b,\g)}[\rho],
  \end{equation}
  where \(\mathscr{U}_\Uo\) is the quantum channel corresponding to conjugation by the
  unitary \(\Uo\).
\end{restatable}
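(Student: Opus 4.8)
The plan is to work directly from the explicit channel form for \(\Tch_\eta(\a,\b,\g,w)\) established just above the statement, and to reduce the asserted strong convergence to a single scalar fidelity limit that can be controlled by an elementary Gaussian estimate. First I would reduce to pure inputs. Both \(\Tch_\eta(\a,\b,\g,w)\) and the target \(\mathscr{U}_{\So(w)\Fo\Uo(\a,\b,\g)}\) are completely positive and trace non-increasing on the Banach space \(\mathfrak{T}(\Hcal)\); indeed \(\Tch_\eta\) is trace-preserving, being the partial trace of a unitary conjugation of \(\rho\otimes\sigma_\eta\). Hence the family \(\{\Tch_\eta\}_\eta\) and the target are uniformly bounded for the relevant operator norm, and since the finite-rank operators, spanned by pure-state projectors, are trace-norm dense in \(\mathfrak{T}(\Hcal)\), an \(\e/3\) argument shows it suffices to treat \(\rho=\rho_\phi\) with \(\norm{\phi}=1\). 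Next I would remove the local unitary: because \(\Uo_1(\a,\b,\g)\) is a function of \(\Qo_1\) alone it commutes with \(\CZo_{1,2}(w)=\exp(iw\Qo_1\Qo_2)\), so \(\mathscr{U}_{\Uo_1}\) can be pushed through and absorbed into the input, giving \(\Tch_\eta(\a,\b,\g,w)[\rho_\phi]=\Tch_\eta(0,0,0,w)[\rho_\psi]\) with \(\psi\coloneqq\Uo(\a,\b,\g)\phi\in\Hcal\). The claim reduces to \(\Tch_\eta(0,0,0,w)[\rho_\psi]\to\rho_\chi\) in trace norm, where \(\chi\coloneqq\So(w)\Fo\psi\).

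The key simplification is to convert this trace-norm convergence into a scalar limit. Since \(\Tch_\eta\) is trace-preserving and the limit \(\rho_\chi\) is pure, the Fuchs--van de Graaf bound \(\tfrac12\norm{\tau-\rho_\chi}_1\leqslant\sqrt{1-\langle\chi,\tau\chi\rangle}\) shows it is enough to prove that the fidelity \(\langle\chi,\Tch_\eta(0,0,0,w)[\rho_\psi]\,\chi\rangle\to1\). Writing \(\Psi_\eta\coloneqq\exp(iw\Po_1\Po_2)\,\CZo_{1,2}(w)(\psi\otimes g_\eta)\), the partial-trace structure yields \(\langle\chi,\Tch_\eta[\rho_\psi]\chi\rangle=\norm{(\Io\otimes\bra{\chi})\Psi_\eta}^2\), the squared norm of the partial inner product of \(\Psi_\eta\) against \(\chi\) on the output mode.

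I would then compute this explicitly: acting with \(\CZo_{1,2}(w)\) multiplies the joint wavefunction by \(e^{iwx_1x_2}\), while \(\exp(iw\Po_1\Po_2)\), read in the momentum representation of the measured mode, translates the output mode by an amount proportional to the measured momentum; pairing against \(\chi\) then collapses everything to a single Gaussian integral in which the only \(\eta\)-dependence sits in \(g_\eta=\So(\eta)g_1\). The limit \(\eta\to\infty\) is where the genuine difficulty lies. The ancilla has no limit in \(\Hcal\): \(g_\eta\) converges only to the zero-momentum ``eigenstate'', a tempered distribution in \(\Sch^*\setminus\Hcal\), so one cannot substitute the limiting ancilla into the circuit. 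Instead I would evaluate the Gaussian integral at finite \(\eta\) and pass to the limit afterwards, where the Gaussian weight coming from the ancilla tends pointwise to a constant on the effective support, the integral converging to \(\norm{\psi}^2=1\). I expect this dominated-convergence step, together with the careful bookkeeping of the Fourier-transform and squeezing conventions that fix the precise powers of \(w\), to be the main technical obstacle, and the point at which restricting to physical (Schwartz) inputs is convenient for dominating the integrand.

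Once the fidelity limit is in hand, the Fuchs--van de Graaf bound upgrades it to trace-norm convergence on pure states, and the uniform-boundedness and density argument of the first paragraph propagates this to all \(\rho\in D(\Hcal)\), which is precisely strong convergence. As an independent cross-check one could instead track characteristic functions, on which the Gaussian ancilla \(\sigma_\eta\) and the coupling \(\CZo_{1,2}(w)\) act transparently and where the infinite-squeezing limit becomes a pointwise limit of Gaussians; the non-Gaussianity of \(\Uo(\a,\b,\g)\) causes no trouble here since it has already been absorbed into \(\psi\).
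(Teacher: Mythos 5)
Your proposal is correct, and its skeleton matches the paper's proof: both absorb \(\Uo(\a,\b,\g)\) into the input using \([\Uo_1(\a,\b,\g),\CZo_{1,2}(w)]=0\), both reduce to pure inputs by trace-norm density plus uniform boundedness of the channels (the paper does this last, you do it first), and both convert the trace-norm statement into a scalar fidelity limit via a bound of the form \(\norm{\rho-\sigma}\leqslant C\sqrt{1-\mathrm{F}(\rho,\sigma)}\) (the paper cites Alberti and Hou where you invoke Fuchs--van de Graaf). Where you genuinely diverge is the engine that drives the fidelity to \(1\). The paper first normalises the edge weight by the conjugation \(\CZo(w)=\So_2(w)^{*}\CZo(1)\So_2(w)\), then passes to phase space: citing Gu et al.\ for the fact that the teleportation channel acts on Wigner functions as convolution with a narrowing Gaussian, \(W\mapsto W*_1 g_{1/\eta}\), it bounds the infidelity by an \(\mathrm{L}^1\) distance using \(\abs{W_\rho}\leqslant 1/\pi\) for pure states, and concludes from the approximate-identity property of Gaussian kernels; this costs an intermediate restriction to inputs in \(\mathrm{L}^1\cap\mathrm{L}^2\) and a second density step. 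You instead stay in the wavefunction picture, keep \(w\) throughout, and compute the two-mode state directly; your identity \(\langle\chi,\Tch_\eta[\rho_\psi]\chi\rangle=\norm{(\Io\otimes\bra{\chi})\Psi_\eta}^2\) is correct, and the computation you leave as the ``main technical obstacle'' does close. For instance, at \(w=1\) and with the \(\mathrm{L}^2\)-normalised Gaussian ancilla, the reduced output state has integral kernel \(\hat\psi(x)\,\overline{\hat\psi(x')}\,e^{-(x-x')^2/4\eta^2}\), so the fidelity is
\begin{equation}
  \iint \abs{\hat\psi(x)}^2\abs{\hat\psi(x')}^2 e^{-(x-x')^2/4\eta^2}\dd{x}\dd{x'}
  \;\longrightarrow\; 1,
\end{equation}
by dominated (indeed monotone) convergence with the integrable dominating function \(\abs{\hat\psi(x)}^2\abs{\hat\psi(x')}^2\); general \(w\) only rescales arguments. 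Note this also shows your caution about Schwartz (``physical'') inputs is unnecessary at this step---no regularity beyond \(\psi\in\Hcal\) is needed---so your route is, if anything, more elementary and self-contained than the paper's: it avoids the phase-space formalism and the external citation for the channel's convolution form, at the price of doing the Gaussian integral and the \(w\)-conventions (which the paper sidesteps via its squeezing conjugation, and where its own signs and powers of \(w\) are not entirely consistent) by hand.
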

\begin{proof}
  The proof is left to appendix \ref{sapp:tele-circuit}.
\end{proof}

As discussed previously, this is equivalent to convergence in the strong
topology on the set of quantum channels, which is a weaker notion than uniform
convergence \cite{shirokov_approximation_2008}.
This is the best we can expect for quantum teleportation procedures in CV in
general (see \cite{wilde_strong_2018, pirandola_teleportation_2018} for a
discussion).
However, if we restrict the set of states to a compact subset, such as when the
total energy is bounded, this result can be strengthened to uniform convergence
\cite{sharma_characterizing_2020}.

The question of convergence in the infinite squeezing limit has
been considered for some slightly different but related protocols, such as the
convergence of the quantum state teleportation protocol of
\cite{wilde_strong_2018, sharma_characterizing_2020}.\footnote{The quantum state
  teleportation protocol should not be confused with the gate teleportation we use in
  this article. The state teleportation involves three qumodes: an initial qumode is
  teleported using an auxiliary two qumode state and entangling two-qumode
  measurement. In contrast, the gate teleportation only involves two qumodes.}
Their result does not immediately apply in our case, but we use some of their
ideas as well as some standard results of functional analysis in formulating our
proof.

\subsection{Graph states}

\begin{figure}
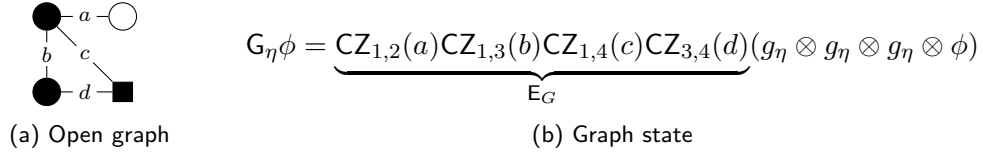

  \centering
  \begin{subfigure}[b]{0.25\textwidth}
    \centering
    \tikzfig{figures/open_graph_example}
    \caption{Open graph}
  \end{subfigure}
  \begin{subfigure}[b]{0.65\textwidth}
    \centering
    \(\Go_\eta \phi = \underbrace{\Co\Zo_{1,2}(a) \Co\Zo_{1,3}(b)
      \Co\Zo_{1,4}(c) \Co\Zo_{3,4}(d)}_{\Eo_G} (g_\eta \otimes g_\eta \otimes
    g_\eta \otimes \phi)\)
    \caption{Graph state}
  \end{subfigure}
  \caption{Example of an open graph (a) and the associated graph state (b). Black vertices
    are to be measured, white vertices are outputs, and the square vertex represents
    an arbitrary input \(\phi \in \Hcal\).}
  \label{fig:graph_state}
\end{figure}

An \(\RR\)-edge-weighted graph \(G\) is a pair \((V, A)\) consisting of
a set \(V\) of vertices and a symmetric matrix \(A \in \RR^{\abs{V} \times
  \abs{V}}\), the \textbf{adjacency matrix} of \(G\), which identifies the
weight of each edge.
Furthermore: if \(A_{j, k} = 0\) then there is no edge between \(j\) and \(k\);
and for any \(j \in G\), \(A_{j,j} = 0.\) If \(j \in V\) we write \(N(j)
\coloneqq \{k \in V \mid A_{j,k} \neq 0\}\) the set of neighbours of \(j\) in
\(G\), \emph{excluding \(j\) itself}.

A (CV) \textbf{open graph} \((G,I,O)\) is an undirected \(\RR\)-edge-weighted
graph \(G=(V,A)\), along with two subsets \(I\) and \(O\) of \(V\), which
correspond to the inputs and outputs of a computation.
To this abstract graph, we associate a physical resource state, the \textbf{graph
  state}, to be used in a computation: each vertex \(j\) of the graph corresponds
to a single qumode and thus to a single pair \(\{\Qo_j, \Po_j\}\)
\cite{zhang_continuous-variable_2006,menicucci_universal_2006}.
This graphical notation is also somewhat similar to later works
for Gaussian states \cite{zhang_graphical_2008, zhang_graphical_2010, 
menicucci_graphical_2011}, although with a different focus: our notation only 
represents a small subset of the full set of multimode Gaussian states, but is 
used to reason about non-Gaussian operations on that state.

For a given input state \(\psi\) on \(\abs{I}\) modes, the graph state can
be constructed as follows:
\begin{enumerate}
\item Initialise each non-input qumode, \(j \in I^{\mathsf{c}}\), in the
  squeezed state \(g_\eta\), resulting in a separable state of the form
  \(g_\eta^{\otimes \abs{I^{\mathsf{c}}}} \otimes \psi\).
\item For each edge in the graph between vertices \(j\) and \(k\) with
  weight \(A_{j, k} \in \RR\), apply the entangling operation
  \(\Co\Zo_{j,k}\big(A_{j, k}\big)\) between the corresponding qumodes.
\end{enumerate}
We denote \(\Eo_G\) the product of the entanglement operators used to construct
the graph state (since these all commute no caution is needed with the order of
the product).

For a given open graph \((G,I,O)\), and input state \(\rho \in D(\Hcal^{\otimes
  \abs{I}})\), we write the corresponding ``graph'' state:
\begin{equation}
  \label{eq:g_eta_def}
  \Gch_\eta[\rho] = \Uch_{\Eo_G} (\rho \bigotimes_{j \in I^\mathsf{c}} \sigma_\eta).
\end{equation}
\(\Gch_\eta\) is a quantum channel as the composition of an isometry and a
unitary. For a pure input state \(\phi \in \Hcal^{\otimes \abs{I}}\), we of
course have the corresponding isometry \(\Go_\eta \phi = \Eo_G (\phi
\bigotimes_{j \in I^\mathsf{c}} g_\eta)\).
An example of such a graph state is represented figure~\ref{fig:graph_state}.
We shall use the structure of the open graph to study computations using the
graph state. These graph states admit approximate stabilisers, which can be
understood as stabilisers in the infinite-squeezing limit:
\begin{Lem}[Approximate stabilisers]
  \label{lem:approx-stab}
  Let \((G,I,O)\) be an open graph, then, for any \(k \in I^\mathsf{c}\) and \(s
  \in \RR\),
  \begin{equation}
    \lim_{\eta \to \infty} \norm{ \Uch_{\Xo_j(s) \prod_{k \in N(k)} \Zo_k(A_{j,k} s)} \circ \Gch_\eta[\rho] - \Gch_\eta[\rho]} = 0.
  \end{equation}
\end{Lem}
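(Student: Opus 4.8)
The plan is to show that, after conjugating the candidate stabiliser back through the entangling operator $\Eo_G$, it collapses to the single translation $\Xo_j(s)$ acting on the mode $j \in I^{\mathsf{c}}$, which is initialised in the squeezed state $\sigma_\eta$; the statement then reduces to the elementary fact that a fixed translation perturbs an increasingly broad squeezed state less and less. Writing $\Wo \coloneqq \Xo_j(s)\prod_{k\in N(j)}\Zo_k(A_{j,k}s)$ for the stabiliser operator (fixed vertex $j \in I^{\mathsf{c}}$, product over its neighbourhood $N(j)$), I would first compute $\Eo_G^*\,\Wo\,\Eo_G$ entirely at the level of the bounded Weyl unitaries, so as to avoid manipulating the unbounded generators directly. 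Each $\Zo_k(t) = \exp(it\Qo_k)$ commutes with every $\CZo$ factor of $\Eo_G$ (these are built from the mutually commuting position operators), so the $\Zo$-part passes through $\Eo_G$ untouched. For $\Xo_j(s) = \exp(-is\Po_j)$ only the factors $\CZo_{j,k}(A_{j,k})$ with $k\in N(j)$ act nontrivially; using the Heisenberg-picture identity $\CZo_{j,k}(w)^*\Po_j\CZo_{j,k}(w) = \Po_j + w\Qo_k$ (a direct consequence of the commutation relations recorded in the preliminaries) together with the commutation of $\Po_j$ with the $\Qo_k$, one gets $\Eo_G^*\Xo_j(s)\Eo_G = \Xo_j(s)\prod_{k\in N(j)}\Zo_k(-A_{j,k}s)$. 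Multiplying by the unchanged factor $\prod_{k\in N(j)}\Zo_k(A_{j,k}s)$ cancels these $\Zo$-terms exactly, yielding the key identity $\Eo_G^*\Wo\Eo_G = \Xo_j(s)$.

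Next I would exploit this identity together with the unitary invariance of the trace norm. Writing $\Gch_\eta[\rho] = \Uch_{\Eo_G}\big(\rho\otimes\bigotimes_{l\in I^{\mathsf{c}}}\sigma_\eta\big)$ and pulling $\Uch_{\Eo_G}$ out of the difference, the quantity to bound becomes $\norm{\Uch_{\Xo_j(s)}\big(\rho\otimes\bigotimes_{l}\sigma_\eta\big) - \rho\otimes\bigotimes_{l}\sigma_\eta}$, since $\Uch_{\Eo_G}$ preserves the trace norm. Because $j\in I^{\mathsf{c}}$, the operator $\Xo_j(s)$ touches only the factor $\sigma_\eta$ on mode $j$; multiplicativity of the trace norm over tensor products then collapses this to the single-mode quantity $\norm{\Xo(s)\sigma_\eta\Xo(s)^* - \sigma_\eta}$, all the remaining factors being density operators of unit trace norm. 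The hypothesis $j \in I^{\mathsf{c}}$ is essential here: were $j$ an input, the corresponding factor would be the arbitrary $\rho$ and the difference need not vanish.

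Finally I would estimate this single-mode trace distance. Both $\sigma_\eta$ and $\Xo(s)\sigma_\eta\Xo(s)^*$ are pure, so the trace distance equals $2\sqrt{1 - |\langle \hat{g}_\eta, \Xo(s)\hat{g}_\eta\rangle|^2}$, where $\hat{g}_\eta$ is the normalised squeezed state and $\Xo(s)$ acts as translation by $s$ in position. A direct Gaussian integral gives the normalised overlap $\langle \hat{g}_\eta, \Xo(s)\hat{g}_\eta\rangle = e^{-s^2/(4\eta^2)}$, which tends to $1$ as $\eta\to\infty$, so the trace distance tends to $0$ and the lemma follows. The only real subtlety is in the first step: the manipulations must be carried out on the exponentiated Weyl operators rather than their unbounded generators, and one must check that $\Xo_j(s)$ indeed commutes with every edge factor of $\Eo_G$ not incident to $j$. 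Everything after the reduction to a single mode is an elementary computation.
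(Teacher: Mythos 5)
Your proof is correct and follows essentially the route the paper intends: the paper omits the proof of this lemma, noting it is a simpler version of the proof of lemma~\ref{lem:cont-stab}, whose written proof uses exactly your key step of commuting the candidate stabiliser through \(\Eo_G\) (there, \(\CXo_{j,k}(s)\CZo_{j,N(k)}(s)\Eo_G = \Eo_G \CXo_{j,k}(s)\); here, \(\Wo\Eo_G = \Eo_G\Xo_j(s)\)) and then bounding the perturbation of the auxiliary squeezed state. Your only departure is in the final elementary step: where the paper's template argument proceeds by a direct \(\mathrm{L}^2\) integral estimate (which in the controlled case forces the Schwartz hypothesis on inputs), you reduce by unitary invariance and tensor multiplicativity of the trace norm to a single auxiliary mode and compute the overlap \(\langle \hat{g}_\eta, \Xo(s)\hat{g}_\eta\rangle = e^{-s^2/4\eta^2}\) exactly, which is cleaner and makes explicit that no regularity assumption on \(\rho\) is needed for this uncontrolled version.
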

\begin{proof}
  Since we do not actually need this lemma to prove our main results, we omit
  its proof, which is in any case a simpler version of the proof of
  lemma~\ref{lem:cont-stab}.
\end{proof}

The main results of this article correspond to direct generalisations of
proposition \ref{prop:tele-circuit} to measurement procedures over arbitrary
graph states.
The question is: given a graph state, is there an order to measure the vertices
(of the graph state) in such that we can always correct for the resulting
measurement errors?
In such a scheme, the measurement error spreads over several edges to more than
one adjacent vertex and we need a more subtle correction strategy, culminating in
our definition of CV-flow and a corresponding correction protocol.
In section \ref{sec:flow}, we exhibit our CV-flow condition and a corresponding 
MBQC procedure, in theorem \ref{thm:CV-flow}.
Then, in section \ref{sec:circuit} of the paper, we extract the unitary 
implemented by this MBQC protocol, proving a direct equivalent to proposition
\ref{prop:tele-circuit}: when \(\abs{I} = \abs{O}\), the protocol converges
strongly to the extracted unitary which acts on the input state.
This is the content of theorem \ref{thm:CV-circuit}.

\section{Correction procedures}
\label{sec:flow}
We are now ready to begin our study of CV graph states for CV-MBQC.
We first show that the original flow condition, causal flow
\cite{danos_determinism_2006}, also holds in continuous variables and results in
a nearly identical MBQC protocol.
We then state a generalised condition which we call CV-flow, inspired by g-flow
but valid for continuous variables and different to g-flow.
These conditions are associated with corresponding CV-MBQC correction protocols.
While the CV-flow protocol subsumes the original flow protocol as far as CV-MBQC
is concerned, causal flow is still worth understanding on its own in this
context, not least because the proof of theorem \ref{thm:CV-circuit} reduces the
CV-flow case to the causal flow case.

\paragraph{\sffamily Appendices.}
  In addition to the content of this section, appendix \ref{app:algorithm}
  describes a polynomial-time algorithm for determining a CV-flow for an open
  graph, whenever it has one, following almost exactly the qubit case by Mhalla
  and Perdrix \cite{mhalla_finding_2008}.
  Appendix \ref{app:comparison} contains a comparison between CV-flow and the
  original g-flow condition for qubits, when this makes sense.
  Appendix \ref{app:depth} contains an example of depth-complexity advantage
  using CV-MBQC compared to a circuit acting on the same number of inputs.

\subsection{Causal flow in continuous variables}
\label{ssec:simple}

In this section, we see how the causal flow condition extends to continuous
variables
We additionally allow for weighted graphs, but this does not change the
definition.

\begin{Def}[\cite{danos_determinism_2006}]
  An open graph \((G,I,O)\) has \textbf{causal flow} if there exists a map
  \(f:O^{\mathsf{c}} \to I^{\mathsf{c}}\) and a partial order \(\prec\) over
  \(G\) such that for all \(i \in O^{\mathsf{c}}\):
  \begin{itemize}
  \item \(A_{i,f(i)} \neq 0\);
  \item \(i \prec f(i)\);
  \item for every \(k \in N\big(f(i)\big) \setminus \{i\},\) we have \(i \prec
    k\).\qedhere
  \end{itemize}
  \vspace{-3mm}
  \label{def:flow}
\end{Def}
The order \(\prec\) is interpreted as a measurement order for the MBQC.
When an open graph has causal flow \((f, \prec)\), the function \(f\) identifies for each
measurement of a vertex \(j\) a single, \emph{as-of-yet unmeasured} neighbouring
vertex \(f(j)\) on which it is possible to correct for the measurement error.
This renders the corresponding CV-MBQC protocol: after constructing the
graph state,
\begin{enumerate}
\item measure the non-output vertices in the graph, in any order which is a linear
  extension of \(\prec\), in the basis corresponding to \(\Uo(\a, \b,
  \g)\)---for example, by applying \(\Uo(\a, \b, \g)\) and measuring \(\Po\);
  and,
\item immediately after each measurement (say of vertex \(j\)), and before any
  other measurement is performed, correct for the measurement error \(m_j\) onto
  \(f(j)\), by applying
  \begin{equation} \label{eqn: correction flow}
    \Co_j(m_j) \coloneqq \Xo_{f(j)}(- A_{j,f(j)}^{-1} m_j) \; \smashoperator{ \prod_{k \in N(f(j)) \setminus \{j\}}} \;
    \Zo_k(- A_{j,f(j)}^{-1} A_{f(j),k} m_j).
  \end{equation}
\end{enumerate}

We will now see how (\ref{eqn: correction flow}) can be viewed as an a-causal correction, through the completion of a stabiliser (as in the discrete case \cite{danos_determinism_2006,browne_generalized_2007,markham_entanglement_2014}). We first note that the state after measuring vertex $j$ and getting result $m_j$ is
equivalent to the state if one had first applied $\Zo_j(-m_j)$ and then measured, getting
result $0$. Since the result $0$ corresponds to the ideal computation branch,
the correction ideally corresponds to undoing this $\Zo_j(-m_j)$. However, this
cannot be done on vertex $j$ because we would have to do it before the
measurement result were known.
From this point of view the above correction acts to a-causally
implement $\Zo_j(-m_j)$ through the stabiliser relation.

To see this we note that the unitary
\begin{equation}
  \label{eq:causal_flow_corrections}
  \Xo_{f(j)}(- A_{j,f(j)}^{-1} m_j) \; \smashoperator{ \prod_{k \in N(f(j))}} \;
    \Zo_k(- A_{j,f(j)}^{-1} A_{f(j),k} m_j)= \Co_j(m_j) \Zo_{f(j)}(m_j)
\end{equation}
is a stabiliser for the graph state in the infinite squeezing limit---that is,
its action leaves the state unchanged, as described in lemma
\ref{lem:approx-stab}. Thus, application of $\Co_j(m_j)$ is, in the limit,
equivalent to applying $\Zo_{f(j)}(m_j)$.

In this way, we can see the third condition in the definition as merely imposing that all the vertices acted upon by \(\Co_j\) have not been measured when we try to complete this
stabiliser.

To see that the correction (\ref{eqn: correction flow}) affects the a-causal correction of $\Zo_j(-m_j)$ another way, we can commute the correction through the graph
operations:
\begin{align}
  \Co_j(m_j) \smashoperator{ \prod_{k \in N(f(j))} } \Co\Zo_{j,k}(A_{j,k})
  \;
  &= \Xo_{f(j)}(- A_{j,f(j)}^{-1} m_j) \; \smashoperator{ \prod_{k \in N(f(j)) \setminus \{j\}}} \;
    \Zo_k(- A_{j,f(j)}^{-1} A_{f(j),k} m_j) \cdot \smashoperator{ \prod_{k \in N(f(j))} } \Co\Zo_{j,k}(A_{j,k}) 
  \; \\
  &= \; \smashoperator{\prod_{k \in N(f(j)) \setminus \{j\}}} \Co\Zo_{f(j),k}(A_{j,k}) \;
  \Zo_j(m_j) \; \Co\Zo_{f(j),j}(A_{j,f(j)}).\label{eq:causal_flow_corrections_commutation}
\end{align}
In this picture, we see that the correction \(\Co_j(m_j)\) has a
\emph{back-action} \(\Zo_j(m_j)\) on the vertex \(j\) even though it has already
been measured.
This back-action appears before the measurement, even though the correction is
applied after the measurement.

Let \(\vec{\a},\vec{\b},\vec{\g} \in \RR^{\abs{O^\mathsf{c}}}\) identify
measurement angles for each non-output mode as in step 1, then, for a given open
graph with causal flow, we denote \(\Fch_\eta(\vec{\a},\vec{\b},\vec{\g})\)
the quantum map corresponding to this MBQC procedure starting with the
corresponding graph state with local squeezing factor \(\eta\). Using the same
trick as for the gate teleportation, for any input state \(\rho \in
D(\mathcal{H}^{\otimes \abs{I}})\) we can write this quantum channel as:
\begin{equation}
  \label{eq:causal_flow_channel}
  \begin{aligned}
    &\Fch_\eta(\vec{\a},\vec{\b},\vec{\g})[\rho]\\
    &= \left( \prod_{j \in
        O^\mathsf{c}}^{\prec} \tr_j \circ \Uch_{\exp(-i\Po_j \cdot \Po_{f(j)})} \circ
      \Uch_{\exp(- i\Po_v \sum_{k \in N(f(j)) \setminus \{j\}} A_{j,f(j)}^{-1} A_{f(j),k} \Qo_k)} \circ \Uch_{\Uo_v(\a_v,\b_v,\g_v)} \right) \\
    & \quad \quad \quad \quad \quad \circ \Gch_\eta[\rho],
  \end{aligned}
\end{equation}
where the product is to be interpreted as sequential composition of channels,
and is ordered by the measurement order \(\prec\). Here once again, the trace
over the subspace of qumode \(j\) corresponds to the measurement of that qumode,
and the 2-qumode unitaries simulate the classically-controlled corrections in
the protocol and described by equation~\eqref{eq:causal_flow_corrections}.
Namely, \(\exp(-i\Po_j \cdot \Po_{f(j)})\) simulates the \(\Xo\) part of the
correction, and 
\begin{equation}
  \exp(- i\Po_v \sum_{k \in N(f(j)) \setminus \{j\}} A_{j,f(j)}^{-1} A_{f(j),k} \Qo_k)
\end{equation}
simulates all of the \(\Zo\) parts on neighbours, which commute thus can be
written as a single exponential.

Then, we have the following:
\begin{Prop}[Causal flow protocol]
  Suppose the open graph \((G,I,O)\) has causal flow, then for any \(\vec{\a},
  \vec{\b}, \vec{\g} \in \RR^{\abs{O^\mathsf{c}}}\) and any input state, the
  corresponding MBQC procedure \(\Fch_\eta(\vec{\a},\vec{\b},\vec{\g})\) is
  runnable: no corrections depend on the outcome of measurements before they are
  made, and no corrections are made on vertices after they are measured.
  \label{prop:flow}
\end{Prop}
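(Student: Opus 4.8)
The plan is to verify the two causality requirements in the statement directly from the protocol of steps 1--2 and the three axioms of causal flow. The key object is the support of each correction: reading off equation~\eqref{eqn: correction flow}, the unitary \(\Co_j(m_j)\) acts nontrivially only on the vertex \(f(j)\) (through \(\Xo_{f(j)}\)) and on the vertices \(k \in N(f(j)) \setminus \{j\}\) (through \(\Zo_k\)). Thus the set of vertices touched by \(\Co_j(m_j)\) is exactly \(\{f(j)\} \cup \big(N(f(j)) \setminus \{j\}\big)\), and its parameters depend only on the single outcome \(m_j\).

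For the first requirement---that no correction depends on an outcome not yet obtained---I would observe that this is immediate from the protocol structure: \(\Co_j(m_j)\) is a function of \(m_j\) alone and, by step~2, is applied immediately after the measurement of \(j\) and before any later measurement, so every result on which a correction depends has already been recorded when the correction is performed. For the second requirement---that no correction acts on an already-measured vertex---I would apply the causal flow axioms to each element of the support identified above. Fix \(j \in O^\mathsf{c}\). The target \(f(j)\) satisfies \(j \prec f(j)\) by the second axiom, and every \(k \in N(f(j)) \setminus \{j\}\) satisfies \(j \prec k\) by the third axiom. Since the measurement order is a linear extension of \(\prec\), every such non-output vertex is measured strictly after \(j\), while any output vertex is never measured; in either case each vertex in the support of \(\Co_j(m_j)\) is unmeasured at the moment \(\Co_j(m_j)\) is applied.

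The subtlety to address---and the only place where care is genuinely needed---is the apparent back-action \(\Zo_j(m_j)\) on the already-measured vertex \(j\) exhibited in equation~\eqref{eq:causal_flow_corrections_commutation}. This is not a physical operation applied to \(j\): the support of \(\Co_j(m_j)\) computed above explicitly excludes \(j\) (the \(\Zo\) product omits \(j\), and \(f(j) \neq j\) since \(j \prec f(j)\)), so no correction in the protocol acts on \(j\) after it has been measured. Equivalently, viewed from a fixed vertex \(v\), any correction \(\Co_{j'}\) whose support contains \(v\) must have \(j' \prec v\) (taking \(i = j'\) and either \(v = f(j')\) or \(v = k\) in the axioms), so every such correction is applied before \(v\) is measured. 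Collecting the two verified requirements over all \(j \in O^\mathsf{c}\) shows that \(\Fch_\eta(\vec{\a},\vec{\b},\vec{\g})\) is runnable.
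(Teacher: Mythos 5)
Your proof is correct and takes essentially the same approach as the paper's: a direct verification that the causal flow axioms force every vertex in the support of \(\Co_j(m_j)\) to be unmeasured at the moment the correction is applied. You are in fact more thorough than the paper, whose one-sentence proof cites only the condition \(i \prec f(i)\) and leaves implicit the use of the third axiom for the \(\Zo\)-corrections on \(N(f(j)) \setminus \{j\}\), which you spell out explicitly (along with the clarification that the back-action \(\Zo_j(m_j)\) of equation~\eqref{eq:causal_flow_corrections_commutation} is not a physical operation on the measured vertex).
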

\begin{proof}
  This is clear by the condition that \(i \prec f(i)\): we always measure node
  \(i\) before the vertex \(f(i)\) onto which we perform the corresponding
  correction.
\end{proof}

\subsection{CV-flow}
\label{ssec:cv-flow}

\begin{figure}
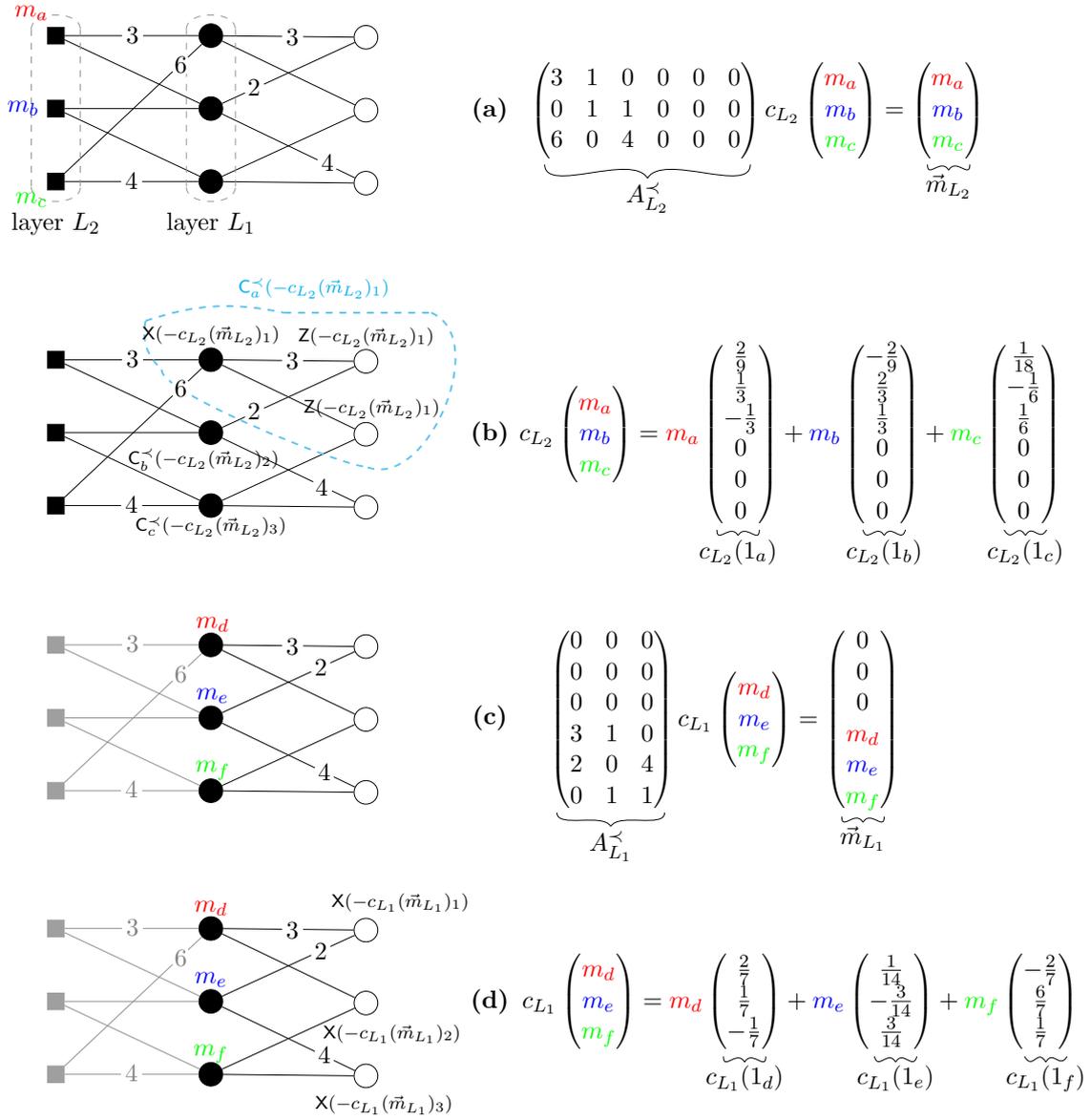

  \centering
  \begin{center}
    \scalebox{.90}{\tikzfig{figures/cv-flow_example}}
  \end{center}
  \caption{Example of the MBQC procedure based on a CV-flow.
    \textbf{(a)} We start with a graph state with a candidate measurement order: the
    vertices in layer 2 must be measured before layer 1.
    As explained in section \ref{sec:circuit}, all CV-flows can be broken up
    into such a sequence of layers such that all the vertices within a layer can be
    measured simultaneously.
    We perform measurements on the vertices in layer 2 (in a given basis for the
    unitary \eqref{eq:unitary}),
    obtaining measurement outcomes \(\vec{m}_{L_2} = (m_a,m_b,m_c)^T\) and
    leading to the correction equation on the right of (a), which is a direction
    application of equation \eqref{eq:layer-correction}.
    \textbf{(b)} This linear equation has a solution for any measurement
    outcomes giving a CV-flow for layer 2 as per definition \ref{def:CV-flow},
    which we decompose into the different contributions from the measurement of
    each vertex as per lemma \ref{lem:cv-flow-multilinear} (right).
    This leads to a correction procedure for the measurements in the MBQC
    procedure (left).
    The corrections \(\Co\) take the form of partial stabilisers as described in
    equations \eqref{eq:correction} and \eqref{eq:layer_correction}.
    \textbf{(c)} We then measure the vertices in the layer 1, with measurement outcomes
    \(\vec{m}_{L_1} = (m_d,m_e,m_f)^T\).
    The previously measured vertices, which are no longer accessible for corrections,
    have been grayed out.
    This second set of measurements has its own correction equation.
    \textbf{(d)} The solution to the linear equation (see \eqref{eq: layer decomp correction} and \eqref{eq:layer-correction}), and corresponding correction
    procedure for layer 1.
    Since at this point, we have measured all the vertices in the graph, the MBQC
    procedure is complete.
    Furthermore, as we have seen, for this measurement order it is possible to
    correct for any measurement error at any of the measurements and the
    open graph has a CV-flow.}
  \label{fig:CV_flow}
\end{figure}

In \textcite{browne_generalized_2007} a more general notion of flow was presented for qubit MBQC, by loosening the conditions of causal flow.
In particular, this is done by allowing corrections to be applied on more than one neighbour
and loosening the condition on neighbours of the correction vertex - they no longer need to all be unmeasured.
This is possible by selecting vertices on which to correct in such a way that their
different contributions add up to correct the measurement error.
As well as allowing for more general correction strategies, which can, for example decrease the depth of a computation (number of measurement rounds), it also allows for measurements on different Pauli planes (the X-Y, X-Z, Y-Z planes).

This leads to the definition of \textit{g-flow}, which is similar to Definition \ref{def:flow} above for causal flow, but where the single vertex $f(i)$ is replaced by a set of vertices, and the conditions on this set are a bit more involved, and depend on the choice of basis measurement (see \cite{browne_generalized_2007} and Definition \ref{def:gflow} in section \ref{qudit_flow:flow:gflow}).
It can simply be understood from the stabiliser perspective, that this set of vertices correspond to the set of stabilisers that should be applied to correct the measurement error. Then the conditions on their choice are such that i) they can do this correction (i.e. depending on the plane, the corrections are equivalent to applying a Z, Y, or X Pauli on the measured qubit, via the stabiler relation), and that ii) they should not interfere with previous corrections (i.e. the stabiliser has no part on measured qubits).

In the extension to continuous variables, we instead phrase our conditions in terms of matrices. 
This form can be used for qubits also, but in the case of CV (and qudits \cite{booth_outcome_2021}) it is much simpler to understand than writing a form similar to \ref{def:flow}. For this reason, we now develop the matrix formulation, rather than present the original form of \cite{browne_generalized_2007}. In section \ref{qudit_flow:flow:gflow} the form of \cite{browne_generalized_2007} can be found, along with the reduction of the matrix form to the original form for qubits.

To begin, given a candidate measurement order, we consider, at each step of that order, a partition, or cut, of the graph into 2 subsets: vertices that have yet to be measured (that can be used for corrections),
and vertices that have already been measured (for which we need to be careful
about unwanted back-actions).

If \((X,Y)\) is a pair of subsets of \(V\), we define \(A[X,Y]\) as the submatrix
obtained from the adjacency matrix of \(G\) by keeping only the rows 
corresponding to vertices in \(X\) and the columns corresponding to vertices in 
\(Y\).

\begin{Def}
  Let \((G,I,O)\) be an open graph, \(<\) a total order over
  the vertices \(V\) of \(G\) and \(A^<\) be the adjacency matrix of \(G\) such that its
  columns and rows are ordered by \(<\).
  Further, define \(P(j)\) (the ``past'' of node \(j\)) as the subset of \(V\)
  such that \(k \in P(j)\) implies \(k \leqslant j\).
  Then the \textbf{correction matrix} \(A_j^<\) of a vertex \(j \in V\) is the 
  matrix \(A_j^< \coloneqq A[P(j), (P(j) \cup I)^\mathsf{c}]\).
  \label{def:matrix}
\end{Def}

The correction matrix tells us how \(\Xo\) and \(\Zo\) operations on unmeasured
nodes are going to affect the previously measured nodes, thus it allows us to
determine how to apply a correction on a specific vertex by controlling this
back-action.

\begin{Def}
  An open graph \((G,I,O)\) has \textbf{CV-flow} if there exists a
  partial order \(\prec\) on \(O^{\mathsf{c}}\) such that for any total order
  \(<\) that is a linear extension of \(\prec\) and every \(j
  \in O^\mathsf{c}\), there is a function \(c_j : \RR \to  \RR^{\abs{V} -
    \abs{P(j) \cup I}}\) such that for all \(m \in \RR\) the linear equation
  \begin{equation}
    A_j^< c_j(m) =
    \begin{pmatrix}
      0 \\ \vdots \\ 0 \\ m
    \end{pmatrix}
    \qq{holds,}\
    \label{eq:correction_eq}
  \end{equation}
  where \(A_j^<\) is the correction matrix of vertex \(j\).
  Letting \((c_j)_{j \in O^\mathsf{c}}\) be such a set of functions, we
  call the pair \((\prec, (c_j))\) a CV-flow for \((G,I,O)\).
  \label{def:CV-flow}
\end{Def}

This definition is somewhat convoluted in order to allow the formalism to
describe partial orders of measurement rather than only total orders (which
would give a simpler definition). This is so that we can take into account
situations where several qumodes can be measured simultaneously without
contradictions (see lemma~\ref{lem:cv-flow-multilinear} and
appendix~\ref{app:depth}).

\begin{Lem}
  If \((\prec, (c_j)_{j \in O^\mathsf{c}})\) is a CV-flow for an open graph
  \((G,I,O)\), then the functions \(c_j\) can be chosen as \(\RR\)-linear, that
  is, for each  \(i \in  O^\mathsf{c}\) and any \(m \in \RR\), \(c_j(m) = m
  \cdot c_j(1)\).
  \label{lem:cv-flow-linear}
\end{Lem}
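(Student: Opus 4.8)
The plan is to exploit the fact that \eqref{eq:correction_eq} is a linear system whose right-hand side scales linearly with the parameter $m$. Writing $u_j \coloneqq (0,\dots,0,1)^{\mathsf{T}}$ for the standard basis vector indexing the row of $A_j^<$ associated to $j$ (which is the last row under $<$, since $j$ is the maximum of $P(j)$), the condition \eqref{eq:correction_eq} reads $A_j^< c_j(m) = m\, u_j$.

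First I would fix the partial order $\prec$, a linear extension $<$, and a vertex $j \in O^{\mathsf{c}}$ supplied by the hypothesised CV-flow, so that a function $c_j$ with $A_j^< c_j(m) = m\, u_j$ for all $m \in \RR$ is available. Evaluating this at $m = 1$ produces a single vector $v_j \coloneqq c_j(1) \in \RR^{\abs{V} - \abs{P(j) \cup I}}$ satisfying $A_j^< v_j = u_j$; in other words, $u_j$ lies in the column space of $A_j^<$, which is the entire content of the CV-flow hypothesis at this vertex.

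Next I would define the replacement $\tilde c_j(m) \coloneqq m\, v_j$ and check that it still solves the correction equation: by linearity of matrix–vector multiplication,
\[
  A_j^< \tilde c_j(m) = m\, A_j^< v_j = m\, u_j,
\]
which is exactly the right-hand side of \eqref{eq:correction_eq}. Thus $\tilde c_j$ satisfies the defining condition for every $m \in \RR$, is manifestly $\RR$-linear with $\tilde c_j(m) = m \cdot \tilde c_j(1)$, and leaves the partial order $\prec$ and the correction matrices $A_j^<$ untouched. Carrying out this replacement for every total order $<$ extending $\prec$ and every $j \in O^{\mathsf{c}}$ yields a CV-flow $(\prec, (\tilde c_j))$ with $\RR$-linear correction functions, as required.

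I do not expect a genuine obstacle here: the only mathematical content is that the solution set of a linear equation with right-hand side $m\, u_j$ is closed under the scaling $m \mapsto m\, v_j$ once a single solution $v_j$ at $m = 1$ is known. The one thing to keep straight is the bookkeeping — that the linearisation is performed uniformly over all total orders compatible with $\prec$ and does not alter the order data of the flow — but this is immediate, since $v_j$ depends only on the fixed matrix $A_j^<$.
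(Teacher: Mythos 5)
Your proposal is correct and matches the paper's own proof essentially verbatim: both take the given solution at \(m=1\) and rescale it, using linearity of matrix--vector multiplication to verify \(A_j^< (m \cdot c_j(1)) = m \cdot A_j^< c_j(1)\) equals the required right-hand side. The extra bookkeeping you mention (uniformity over linear extensions and vertices) is implicit in the paper's argument and raises no new issues.
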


\begin{proof}
  Let \(c_j(1)\) be such that equation \eqref{eq:correction_eq} holds.
  Then,
  \begin{equation}
    A_j^< (m \cdot c_j(1)) = m \cdot A_j^< c_j(1) =
    m \cdot \begin{pmatrix}
      0 \\ \vdots \\ 0 \\ 1
    \end{pmatrix} = 
    \begin{pmatrix}
      0 \\ \vdots \\ 0 \\ m
    \end{pmatrix}\qedhere
  \end{equation}
\end{proof}

Now, there is a correction procedure on an open graph \((G,I,O)\) if there is
a measurement order such that each measurement error can be corrected for
without any backaction on previously measured vertices and such an order is
guaranteed by the existence of a CV-flow.
For any open graph with CV-flow, the MBQC protocol follows from generalising the
causal flow protocol:
\begin{enumerate}
\item measure the non-output vertices in the graph in any order which is a linear
  extension of \(\prec\), in the basis corresponding to \(\Uo(\a, \b, \g)\);
  and,
\item immediately after each measurement (say of vertex \(j\)), and before any
  other measurement is performed, correct for the measurement error \(m_j\) by
  applying
  \begin{equation}
    \Co_j^\prec(m_j) \coloneqq \prod_{k \in V \setminus P(j) \cup I}
    \left( \Xo_k(-c_j(m_j)_k)
      \quad\smashoperator{\prod_{\ell \in N(k) \setminus P(j)}}\quad \Zo_\ell(-A_{k,\ell} 
      \cdot c_j(m_j)_k) \right),
    \label{eq:correction}
  \end{equation}
  where \(c_j(m_j)_k\) is the \(k\)-th element of the vector \(c_j(m_j) \in 
  \RR^{\abs{V} - \abs{P(j) \cup I}}\).
\end{enumerate}

Keen readers will note that performing the corrections following
equation~\eqref{eq:correction} might require performing very large number of
Pauli operations (performing all corrections in the MBQC would be approximately
\(O(\abs{V}^3)\) Pauli operations). This form gives a better intuition for what
is going in terms of partial graph stabilisers, but it can be simplified as
follows. The idea is that partial stabilisers for the same measurement error
commute: pulling the partial stabilisers through the graph operations to meet the
measurement error (as in
equation~\eqref{eq:causal_flow_corrections_commutation}), one recovers full
graph stabilisers, which are known to commute. Then, we can sum all of the
\(\Zo\) corrections that act on the same vertex, resulting in the reduced form
for equation~\eqref{eq:correction}:
\begin{equation}
  \label{eq:correction_simplified}
  \Co_j^\prec(m_j)
  \coloneqq
  \left( \prod_{k \in V \setminus P(j) \cup I} \Xo_k(-c_j(m_j)_k) \right)
  \left( \prod_{k \in V \setminus P(j) \cup I} \Zo_\ell(-\sum_{\ell \in V} A_{k,\ell} \cdot c_j(m_j)_k) \right),
\end{equation}
which requires only \(O(\abs{V}^2)\) corrections for the MBQC. Typically
however, the number of corrections will be much smaller than this since we only
ever act on the neighbours of neighbours of the measured vertex at each step,
and graphs which are too connected are unlikely to have CV-flow since then the
equations~\eqref{eq:correction_eq} become inconsistent.

In fact, under this second description of corrections, one can show that the
number of corrections is upper-bounded by \(\abs{V}(\abs{V}-1)\). To prove this
would involve introducing more technicalities which are beyond the scope of this
article. Instead, we refer to \cite{booth_outcome_2021} which treats the
question of qudit MBQC but whose arguments lift immediately to CV if one uses
the same analogy as we do in section~\ref{sec:qudit} (but in the opposite
direction).

\begin{Rem}
  It is straightforward how to extend these definitions to simultaneously
  correct for a subset \(L \subseteq O^\mathsf{c}\) of vertices \emph{unrelated by
    \(\prec\)}: for any total linear extension \(<\) of \(\prec\) let
  \(P(L)\) be the subset of \(V\) such that whenever \(k \in P(L)\), for some
  \(j \in L\) we have \(k \leqslant j\).
  We can then define the correction matrix \(A_L^< = A^<[P(L),V \setminus
  (P(L) \cup I)]\) of \(L\) identically to definition \ref{def:matrix}.
  There is a simultaneous correction procedure for \(L\) if  there is a function
  \(c_L : \RR^\abs{L} \to \RR^{\abs{V} - \abs{P(L)\cup I}}\) such that
  \begin{equation}
    A_L^< c_L(\vec{m}) =
    \begin{pmatrix}
      0 \\ \vdots \\ 0 \\ \vec{m}
    \end{pmatrix}
    \qq{for all potential measurement outcomes} \vec{m} \in \RR^\abs{L}.
    \label{eq:layer-correction}
  \end{equation}

  From the proof of lemma \ref{lem:cv-flow-linear}, we can also see that
  \begin{Lem}[CV-flow linearity]
    The function \(c_L\) from equation \eqref{eq:layer-correction} can be chosen as 
    \(\RR\)-multilinear, in the following sense: for any \(\vec{m} \in 
    \RR^\abs{L}\),
    \begin{equation} \label{eq: layer decomp correction}
      c_L(\vec{m}) = \sum_{k \in L} m_k \cdot c_L(1_k),
    \end{equation}
    where \(1_k\) is the column vector with a single \(1\) in its  \(k\)-th row
    and \(0\) elsewhere.
    \label{lem:cv-flow-multilinear}
  \end{Lem}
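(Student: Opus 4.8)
The plan is to mimic the proof of Lemma~\ref{lem:cv-flow-linear} almost verbatim, the only new ingredient being that the right-hand side of~\eqref{eq:layer-correction} now depends linearly on the vector-valued parameter \(\vec{m}\) rather than on a scalar. The hypothesis that a simultaneous correction procedure for \(L\) exists furnishes some (a priori non-linear) function \(\tilde{c}_L\) satisfying~\eqref{eq:layer-correction} for every \(\vec{m} \in \RR^{\abs{L}}\); I will discard all but its values at the standard basis vectors and use those to manufacture a genuinely linear solution, which is legitimate because the definition only asks for the \emph{existence} of a correction function, leaving us free to pick any one.

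Concretely, first note that the padding map \(\iota : \RR^{\abs{L}} \to \RR^{\abs{P(L)}}\) sending \(\vec{m}\) to the column \((0,\dots,0,\vec{m})^T\) appearing on the right of~\eqref{eq:layer-correction} is itself \(\RR\)-linear. For each \(k \in L\) set \(c_L(1_k) \coloneqq \tilde{c}_L(1_k)\), which by hypothesis satisfies \(A_L^< c_L(1_k) = \iota(1_k)\), and then \emph{define} \(c_L\) on all of \(\RR^{\abs{L}}\) by the prescribed formula \(c_L(\vec{m}) \coloneqq \sum_{k \in L} m_k\, c_L(1_k)\); this is linear by construction and is exactly the decomposition~\eqref{eq: layer decomp correction}. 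It then remains to verify that this rebuilt \(c_L\) still solves~\eqref{eq:layer-correction}, which follows from linearity of the matrix–vector product and of \(\iota\):
\begin{align*}
  A_L^< c_L(\vec{m})
  = \sum_{k \in L} m_k\, A_L^< c_L(1_k)
  = \sum_{k \in L} m_k\, \iota(1_k)
  = \iota\!\left( \sum_{k \in L} m_k 1_k \right)
  = \iota(\vec{m}),
\end{align*}
which is precisely the required padded vector, so the linear \(c_L\) is a valid simultaneous correction function.

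I do not anticipate any genuine obstacle: the statement is a direct consequence of the linearity of \(A_L^<\) together with that of the padding map \(\iota\), exactly as in the scalar case of Lemma~\ref{lem:cv-flow-linear}. The only point deserving a word of care is that the supplied \(\tilde{c}_L\) need not itself be linear, so one must explicitly reconstruct \(c_L\) from its basis values rather than simply assert linearity of \(\tilde{c}_L\); the substitution is harmless precisely because any solution of~\eqref{eq:layer-correction} is admissible.
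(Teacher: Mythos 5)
Your proof is correct and follows essentially the same route as the paper: the paper derives this lemma directly from the proof of lemma~\ref{lem:cv-flow-linear}, i.e.\ by evaluating a known solution at the basis vectors, defining \(c_L\) linearly from those values, and verifying the correction equation via linearity of the matrix--vector product. Your explicit treatment of the padding map \(\iota\) and the remark that the originally supplied solution need not itself be linear are careful touches, but they do not change the substance of the argument.
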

  
  It is therefore possible to measure several vertices at once in step 1 of the 
  procedure, so long as one never measures two vertices comparable by \(\prec\) in 
  a single measurement step.
  The correction then takes the form of a product of corrections of the form of
  equation \eqref{eq:correction}:
  \begin{equation}
    \Co_L^\prec(\vec{m}) \coloneqq \prod_{j \in L} \Co_j^\prec(m_j),
    \label{eq:layer_correction}
  \end{equation}
  and where \(C_j^\prec(m_j)\) is calculated using \(c_j(m_j) = m_j \cdot
  c_L(1_j)\) by lemma \ref{lem:cv-flow-multilinear}.
  We will use this form in section \ref{sec:circuit}.
  A simple example is worked out in figure \ref{fig:CV_flow}.
\end{Rem}

As for causal flow, let \(\vec{\a},\vec{\b},\vec{\g} \in
\RR^{\abs{O^\mathsf{c}}}\) identify measurement angles for each non-output mode
as in step 1, then, for a given open graph with causal flow, 
\(\Cch_\eta^\d(\vec{\a},\vec{\b},\vec{\g})\) denotes the quantum map
corresponding to this MBQC procedure starting with the corresponding graph state
with local squeezing \(\eta\). Once again we can write this channel
explicitly by replacing the corrections with multi-qumode unitary gates
simulating classical control:
\begin{equation}
  \begin{aligned}
    &\Cch_\eta(\vec{\a},\vec{\b},\vec{\g})[\rho]\\
    &= \left( \prod_{j \in
        O^\mathsf{c}}^{\prec} \tr_j \circ \Uch_{\exp(-i\Po_j c_j(1) \cdot \vec{\Po})} \circ
      \Uch_{\exp(- i\Po_v \sum_{k \in N(f(j)) \setminus \{j\}} A_{j,f(j)}^{-1} A_{f(j),k} \Qo_k)} \circ \Uch_{\Uo_v(\a_v,\b_v,\g_v)} \right) \\
    & \quad \quad \quad \quad \quad \circ \Gch_\eta[\rho],
  \end{aligned}
\end{equation}
where
\begin{equation}
  c_j(1) \cdot \vec{\Po} = \sum_{k \in V \setminus (P(j) \cup I)} c_j(1)_k \Po_k.
\end{equation}

\begin{Thm}[CV-flow protocol]
  \label{thm:CV-flow}
  Suppose the open graph \((G,I,O)\) has CV-flow, then for any \(\vec{\a},
  \vec{\b}, \vec{\g} \in \RR^{\abs{O^\mathsf{c}}}\) and any input state, the
  corresponding MBQC procedure \(\Cch_\eta(\vec{\a},\vec{\b},\vec{\g})\) is
  runnable.
\end{Thm}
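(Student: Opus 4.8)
The plan is to verify the two conditions that constitute runnability in Proposition~\ref{prop:flow}: that no correction depends on a measurement outcome which has not yet been obtained, and that no correction acts on a vertex which has already been measured. First I would fix an arbitrary total order \(<\) that is a linear extension of the partial order \(\prec\) supplied by the CV-flow, and perform the measurements of the vertices in \(O^\mathsf{c}\) in this order. By Definition~\ref{def:CV-flow}, for every \(j \in O^\mathsf{c}\) the linear system~\eqref{eq:correction_eq} associated with the correction matrix \(A_j^<\) admits a solution, so the correction function \(c_j\) is well defined; by Lemma~\ref{lem:cv-flow-linear} it may be taken \(\RR\)-linear, so that \(\Co_j^\prec(m_j)\) is a genuine single-valued function of the scalar outcome \(m_j\). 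This is the only place where the CV-flow hypothesis is essentially used.

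For the causality of the classical control, I would note that \(\Co_j^\prec(m_j)\) depends on the measurement record only through \(m_j\), and that it is applied immediately after vertex \(j\) is measured and before any later vertex is treated. Hence every correction is a function of an outcome that has already been recorded, which is the first condition.

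For the second condition I would read off the support of the correction from its definition~\eqref{eq:correction}. The \(\Xo\) factors act on vertices \(k \in V \setminus (P(j) \cup I)\), and the accompanying \(\Zo\) factors act on vertices \(\ell \in N(k) \setminus P(j)\); both index sets are contained in \(V \setminus P(j)\). Since \(P(j)\) is the set of vertices \(\leqslant j\) in \(<\), the vertices already measured at the moment \(j\) is treated are exactly those of \(P(j) \cap O^\mathsf{c}\), which is disjoint from \(V \setminus P(j)\). Thus \(\Co_j^\prec(m_j)\) touches only vertices that are strictly later in the order (or never measured, such as outputs), establishing the second condition and hence runnability.

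I do not expect a genuine obstacle here, since the statement unwinds to the solvability of~\eqref{eq:correction_eq}, which is precisely what CV-flow asserts. The one point requiring a little care is that the argument must survive the freedom in the choice of linear extension \(<\): a priori the corrections applied for earlier vertices could perturb the correction matrices of later ones. This is handled by Definition~\ref{def:CV-flow} demanding solvability for \emph{every} linear extension of \(\prec\), together with the observation above that each \(\Co_j^\prec\) acts entirely within \(V \setminus P(j)\) and therefore never disturbs the rows of a later correction matrix indexed by already-measured vertices. The complementary fact that these corrections actually cancel the measurement error without residual back-action on the past is encoded in the zero pattern on the right-hand side of~\eqref{eq:correction_eq}, but this belongs to the correctness statement proved via circuit extraction in Theorem~\ref{thm:CV-circuit} rather than to runnability.
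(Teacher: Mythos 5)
Your proposal is correct and takes essentially the same approach as the paper, whose proof is the one-line observation that the protocol is runnable ``by construction'' since corrections only ever act on unmeasured vertices and depend only on already-recorded outcomes. Your write-up is simply a careful unwinding of that observation---checking the two runnability conditions against the support of \(\Co_j^\prec(m_j)\) in equation~\eqref{eq:correction} and the solvability guaranteed by Definition~\ref{def:CV-flow}---and correctly defers the cancellation of the measurement error to Theorem~\ref{thm:CV-circuit}.
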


\begin{proof}
  This is true by construction, since we only ever consider corrections for a
  given measurement error that act on unmeasured nodes at that step in the MBQC
  procedure.
\end{proof}

\subsection{Recovering g-flow}
\label{qudit_flow:flow:gflow}

Generalised flow, or \emph{g-flow}, is the flow condition which inspired
CV-flow. It was originally formulated in terms of parity conditions on the
connectivity of the open graph for the case of MBQC with qubits
\cite{browne_generalized_2007}. The Hilbert space of the qubit can be viewed as
a space of functions \(\ZZ_2 \to \CC\), and thus the open graphs in question are
\(\ZZ_2\)-edge-weighted. However, note that an open \(\ZZ_2\)-graph  can
equivalently be viewed as an unweighted open graph. If \(A \subseteq V\), we
write \(\operatorname{Odd}(A)\) the subset of \(\bigcup_{a \in A} N(a)\) of
vertices that are neighbours of an odd number of elements of \(A\), then:
\begin{Def}[\cite{browne_generalized_2007}]
  An open \(\ZZ_2\)-graph \((G,I,O)\) has \emph{g-flow} if there exists a map
  \(g : O^\mathsf{c} \to 2^{I^\mathsf{c}}\) and a partial order \(\prec\) on
  \(V\) such that for all \(i \in O^\mathsf{c}\),
  \begin{itemize}
  \item if \(j \in g(i)\) and \(i \neq j\) then \(i \prec j\);
  \item if \(j \preceq i\) and \(i \neq j\) then \(j \notin
    \operatorname{Odd}(g(i))\);
  \item \(i \notin g(i)\) and \(i \in \operatorname{Odd}(g(i))\). \qedhere
  \end{itemize}
\end{Def}
The idea is that, in the case of qubits, the Pauli operations used for
correction are all self-inverse, \(\Xo^2 = \Io\) and \(\Zo^2 = \Io\), and the
usual graph stabilisers are used for corrections: \(\Ko_j = \Xo_j \prod_{k \in
  N(j)} \Zo_k\) for any \(j \in I^\mathsf{c}\). As a result, \(\Ko_j^2 = \Io\).
g-flow has the same interpretation as CV-flow: it controls the
back-action of partial stabilisers acting on unmeasured vertices at each step of
the measurement procedure.

Making this interpretation explicit, these parity conditions can
straightforwardly be reinterpreted as linear equations over \(\ZZ_2\), involving
(submatrices of) the adjacency matrix of the graph. Using the same notation as
for CV-flow:
\begin{Prop}
  An open graph \((G,I,O)\) has g-flow \((g,\prec)\) if and only if for any
  total order \(<\) that is a linear extension of \(\prec\) and every \(j
  \in O^\mathsf{c}\), there is a function \(c_j : \ZZ_2 \to  \ZZ_2^{\abs{V} -
    \abs{P(j) \cup I}}\) such that for all \(m \in \ZZ_2\) the linear equation
  \begin{equation}
    A_j^< c_j(m) =
    \begin{pmatrix}
      0 \\ \vdots \\ 0 \\ m
    \end{pmatrix}
    \qq{holds,}\
    \label{eq:correction_eq}
  \end{equation}
  where \(A_j^<\) is the correction matrix of vertex \(j\).
\end{Prop}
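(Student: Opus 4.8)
The plan is to translate the set-theoretic parity conditions of g-flow into the coordinate-wise content of the linear system \eqref{eq:correction_eq} via a single observation: if $c$ is a $\ZZ_2$-vector supported on the columns $(P(j)\cup I)^\mathsf{c}$ and $K = \operatorname{supp}(c)$ is the corresponding set of vertices, then for each row $k \in P(j)$,
\begin{equation}
  (A_j^< c)_k = \sum_{\ell} A_{k,\ell}\, c_\ell \equiv \abs{N(k)\cap K} \pmod 2 = \begin{cases} 1 & k \in \operatorname{Odd}(K),\\ 0 & k \notin \operatorname{Odd}(K).\end{cases}
\end{equation}
Here I use that over $\ZZ_2$ the adjacency entries $A_{k,\ell}$ are indicators of edges, so the matrix--vector product counts neighbours of $k$ inside $K$ modulo $2$, which is exactly membership in $\operatorname{Odd}(K)$. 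Thus the linear equation is nothing but a vector encoding of the predicate ``$k \in \operatorname{Odd}(K)$'' as $k$ ranges over $P(j)$, the target vector asserting that $j$ (the $<$-maximal, hence last, element of $P(j)$) lies in $\operatorname{Odd}(K)$ while no strictly earlier vertex of $P(j)$ does. Both directions then amount to passing between a correction set $K$ and the indicator vector $c_j(1) = \chi_K$.

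For the forward direction I would assume $(g,\prec)$ is a g-flow, fix a linear extension $<$, and set $c_j(1) \coloneqq \chi_{g(j)}$. First, the support lands in the column index set: $g(j)\subseteq I^\mathsf{c}$ by the codomain of $g$, $j\notin g(j)$ by the third condition, and for $\ell\in g(j)$ with $\ell\neq j$ the first condition gives $j\prec\ell$, hence $\ell\notin P(j)$; together these place $g(j)\subseteq (P(j)\cup I)^\mathsf{c}$. Evaluating the product via the identity above, the last coordinate (row $j$) equals $1$ because $j\in\operatorname{Odd}(g(j))$ by the third condition, and every coordinate coming from a vertex $k\prec j$ vanishes by the second condition; linearity in $m$ (Lemma~\ref{lem:cv-flow-linear}, whose proof is purely formal over $\ZZ_2$) extends this from $m=1$ to all $m\in\ZZ_2$. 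Conversely, given solutions $c_j$ for every linear extension, I would read off $g(j)\coloneqq\operatorname{supp}(c_j(1))$ and reverse each step: the column-support constraint forces $g(j)\subseteq I^\mathsf{c}$, $j\notin g(j)$ and $j\prec\ell$ for $\ell\in g(j)$, while the target vector forces $j\in\operatorname{Odd}(g(j))$ and $k\notin\operatorname{Odd}(g(j))$ for $k\prec j$, recovering the three g-flow conditions.

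The step I expect to be the main obstacle is reconciling the partial order $\prec$ with the universal quantifier ``for any linear extension $<$''. In the forward direction the second g-flow condition only controls $\operatorname{Odd}(g(j))$ on the strict $\prec$-past $\{k : k\prec j\}$, whereas a particular extension may place a vertex incomparable to $j$ into $P(j)$, producing a row below $j$ that the fixed set $g(j)$ need not annihilate; in the backward direction $\operatorname{supp}(c_j(1))$ may a priori depend on the chosen extension, so care is needed to produce a single $g$ valid for $\prec$. The clean way to dispatch this is to note that the matrix condition is demanded for \emph{all} extensions of $\prec$, and that varying the extension realises precisely the splits of the $\prec$-incomparable vertices into past and future; requiring solvability for every such split is what upgrades the total-order statement to the genuine partial-order conditions and renders the choice of $g$ extension-independent. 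Carrying out this order-bookkeeping carefully --- rather than the parity-to-linear-algebra dictionary, which is immediate --- is where the real work lies.
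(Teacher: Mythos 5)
Your dictionary between correction sets and indicator vectors is precisely the paper's own proof: the paper likewise identifies \(g(j)\) with the indicator vector \(v^j\), sets \(c_j(m) = m \cdot v^j\), and omits the coordinate-by-coordinate verification that you write out. Your instinct about where the remaining content lies is also correct, and your \emph{backward} direction can be completed exactly along the lines you sketch: for each \(j\), apply the hypothesis to the particular linear extension that places every vertex \emph{not} strictly \(\prec\)-above \(j\) into \(P(j)\) (such an extension exists because that set of vertices is downward closed); the support of the resulting solution is then forced into the strict \(\prec\)-future of \(j\), so reading off \(g(j) \coloneqq \operatorname{supp}(c_j(1))\) vertex by vertex yields the g-flow conditions for the given \(\prec\).

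The genuine gap is in the \emph{forward} direction, and it cannot be closed by the dispatch you propose. Quantifying over all linear extensions only makes the matrix condition stronger; it cannot compensate for the fact that the paper's second g-flow condition constrains \(\operatorname{Odd}(g(i))\) only on the \(\prec\)-past of \(i\) and says nothing about vertices incomparable to \(i\). With the definition exactly as written, the forward implication is in fact false. Take \(V = \{u,v,o\}\) with edges \(uo\) and \(vo\), \(I = \varnothing\), \(O = \{o\}\); set \(g(u) = g(v) = \{o\}\) and let \(\prec\) be generated by \(u \prec o\) and \(v \prec o\), with \(u,v\) incomparable. All three written conditions hold (\(\operatorname{Odd}(\{o\}) = \{u,v\}\), and the \(\prec\)-past of each of \(u\), \(v\) is empty, so the second condition is vacuous), yet for the linear extension \(u < v < o\) the correction matrix of \(v\) has rows \(\{u,v\}\) and the single column \(o\), so the correction equation reads
\begin{equation*}
  \begin{pmatrix} 1 \\ 1 \end{pmatrix} c \;=\; \begin{pmatrix} 0 \\ 1 \end{pmatrix},
\end{equation*}
which has no solution \(c \in \ZZ_2\). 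Since \emph{no} correction set exists for this extension, no modification of your candidate \(\chi_{g(v)}\) --- symmetric differences of \(g\)-sets, focusing, or any other order bookkeeping --- can repair the argument. The equivalence (and your three-line forward direction) is restored only if one uses the standard form of the g-flow definition from the literature, in which the second condition reads: if \(k \in \operatorname{Odd}(g(i))\) and \(k \neq i\) then \(i \prec k\), i.e.\ the odd neighbourhood, like \(g(i)\) itself, must lie strictly in the \(\prec\)-future. Under that reading the obstacle you flag evaporates --- any \(k \in P(j)\setminus\{j\}\) has \(k < j\), hence not \(j \prec k\), hence \(k \notin \operatorname{Odd}(g(j))\) --- and \(c_j(m) = m\,\chi_{g(j)}\) works with no residual effort. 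So the real work is not the extension bookkeeping you defer; it is noticing that the definition as transcribed in the paper (and with it the paper's own two-line proof sketch) is too weak for the stated equivalence and must be replaced by the strict-future form.
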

Hopefully, it should be clear that CV-flow, definition~\ref{def:CV-flow},
corresponds exactly to this formulation of g-flow, where one replaces the
finite field \(\ZZ_2\) with \(\RR\).  

The proof of this statement follows from identifying each set \(g(j)\) with a
column vector \(v^j \in \ZZ_2^{\abs{V} - \abs{P(j) \cup I}}\):
\begin{equation}
  v^j_u =
  \begin{cases}
    1 \qif u \in g(j); \\
    0 \quad \text{otherwise}.
  \end{cases}
\end{equation}
Then setting \(c_j(m) = m \cdot v^j\) gives the desired function. We omit the
formal proof as it mainly involves technical calculation of matrix elements
which are therefore not particularly informative.

\section{Circuit extraction}
\label{sec:circuit}
We have shown that a correction procedure is possible when the open graph has
CV-flow.
We now address the induced quantum map and the question of convergence of these
teleportations.
In general, for an arbitrary graph state, even one with CV-flow, the MBQC
procedure is not convergent.
It is possible for the output state to contain squeezing dependant components
which diverge in the limit.

We will see that for an open graph \((G,I,O)\) with \(\abs{I} < \abs{O}\),
an MBQC protocol is equivalent to one based on an open graph \((G, I', O)\), where
\(I \subseteq I'\) and \(\abs{I'} = \abs{O}\).
The additional states in \(I'\) are auxiliary squeezed states \(g_\eta\)
reinterpreted as inputs, and this forces divergence from the inputs to the
outputs. The simplest example is given by the single-vertex open graph
\(\tikzfig{figures/simple_divergent_graph}\) with no input and a single
output.\footnote{This open graph is formally defined on the vertex set
  \(\{\tikzfig{figures/simple_divergent_graph}\}\) as
  \((0,\varnothing,\{\tikzfig{figures/simple_divergent_graph}\})\). That is, the
  adjacency matrix of the graph is \(G=0\) (since there can be no self-edges from
  the only vertex to itself), there are no inputs \(I=\varnothing\), and the only
  vertex is an ouput, \(O=\{\tikzfig{figures/simple_divergent_graph}\}\)} One
can readily check that it trivially has CV-flow but the output is a
squeezed state. As a result, when if we try to take the infinite squeezing limit
the state diverges in \(\mathscr{H}\). 

As we shall see, the existence of a CV-flow also implies both that \(\abs{I} 
\leqslant \abs{O}\) and \(\abs{O} \neq 0\), hence we conclude that a necessary
and sufficient condition for the protocol to be convergent is that the open 
graph has as many input vertices as outputs, \(\abs{I} = \abs{O} \neq 0\). This
clearly excludes the simple counter-example presented above.

The proof will proceed by explicitely constructing the limit of the protocol,
via a circuit extraction scheme inspired by \cite{miyazaki_analysis_2015}.
While the circuits extracted by our scheme as well as the broad structure of the
proof are entirely analogous to that work, our proof method is quite different.
The original graph-theoretical arguments using local complementation in
\cite{miyazaki_analysis_2015} are delicate to adapt to CV, so we reason instead
with the adjacency matrix and correction matrices of the open graph. This method
extends the DV case, and should also apply to more general MBQC scenarios beyond
CV. In particular, it easily applies to qudit MBQC for any prime local
dimension, as we shall see in section~\ref{sec:qudit}.

\subsection{Star pattern transformation}
\label{sec:spt}

\begin{figure}
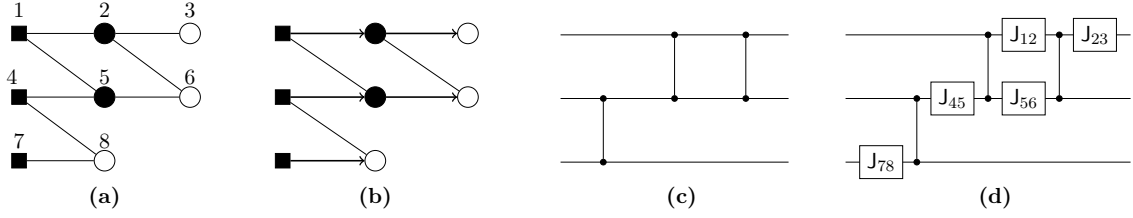

  \centering
  \begin{center}
    \scalebox{.75}{\tikzfig{figures/flow_circuit_example}}
  \end{center}
  \caption{An example of circuit extraction of an MBQC for an open graph with
    causal flow, using star pattern transformation as described in section
    \ref{sec:spt}.
    Starting from an open graph with a causal flow \textbf{(a)}, we identify a path cover
    of the graph that agrees with the causal flow (thick edges directed
    \(u \to f(u)\) for each \(u \in O^\mathsf{c}\)) \textbf{(b)}, and each path as a wire
    in a quantum circuit \textbf{(c)}.
    The remaining edges of the graph implement \(\CZo\) gates in the circuit.
    Finally, each causal flow edge implements a unitary of the form given by
    equation \eqref{eq:j-gate}, and we obtain a circuit representation of the
    unitary implemented by the MBQC \textbf{(d)}.}
  \label{fig:spt}
\end{figure}

In order to model the computation through the MBQC, the trick is to distinguish
between ``real'' qumodes that undergo a unitary transformation though the MBQC
(which act like the wires in a circuit undergoing gates), and auxiliary qumodes
that are consumed in teleportations. In the case of causal flow, things work
quite nicely as follows.

We use the following which also holds in CV (since the causal flow does not depend on edge weights,
only the correction procedure):
\begin{Def}[\cite{de_beaudrap_finding_2008}]
  A \textbf{path cover} of an open graph \((G,I,O)\) is a collection
  \(\mathcal{P}\) of directed edges (or arcs) in \(G\) such that
  \begin{itemize}
  \item each vertex in \(G\) is contained in exactly one path in
    \(\mathcal{P}\);
  \item each path in \(\mathcal{P}\) is either disjoint from \(I\) or intersects
    \(I\) only at its initial point;
  \item  each path in \(\mathcal{P}\) intersects \(O\) only at its final point.
    \qedhere
  \end{itemize}
\end{Def}

\begin{Lem}[Causal flow path cover \cite{de_beaudrap_finding_2008}]
  Let \((f, \leqslant)\) be a causal flow on an open graph \((G,I,O)\).
  Then there is a path cover \(P_f\) of \((G,I,O)\) where \(x \to y\) is an
  arc in some path of \(P_f\) if and only if \(y = f(x)\).
  \label{lem:simple-cover}
\end{Lem}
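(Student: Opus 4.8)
The plan is to take the arc set $\{x \to f(x) : x \in O^{\mathsf{c}}\}$ directly as the building blocks of $P_f$ and verify that these arcs assemble into a genuine path cover. Each such arc is a legitimate directed edge of $G$ because the first causal-flow axiom gives $A_{x,f(x)} \neq 0$, so it remains to show that following $f$ never branches, never merges, and never closes into a cycle, and then to check the three defining conditions of a path cover.

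The heart of the argument is establishing that $f$ is injective. Suppose $f(i) = f(j) = v$ with $i \neq j$; then both $i$ and $j$ lie in $N(v)$ since $A_{i,v}, A_{j,v} \neq 0$. Applying the third causal-flow condition to $i$ (with $j \in N(f(i)) \setminus \{i\}$) yields $i \prec j$, while applying it to $j$ (with $i \in N(f(j)) \setminus \{j\}$) yields $j \prec i$, which is impossible since $\prec$ is a strict partial order. Hence $f$ is injective, so each vertex has at most one incoming $f$-arc; and since $f$ is a function, each vertex has at most one outgoing $f$-arc. The arc set is therefore a disjoint union of simple directed paths and cycles.

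Next I would rule out cycles: along any directed chain of arcs $v_0 \to v_1 \to \cdots$ with $v_{k+1} = f(v_k)$, the second axiom gives $v_k \prec v_{k+1}$, so a cycle would force $v_0 \prec v_0$, again contradicting strictness of $\prec$. Consequently the arcs decompose the vertex set into vertex-disjoint directed paths, with isolated vertices counted as trivial paths; each vertex lies in exactly one of them, which is the first path-cover condition.

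Finally I would verify the remaining two conditions by tracking the domain and image of $f$. On any path $v_0 \to \cdots \to v_\ell$, every vertex $v_k$ with $k \geq 1$ equals $f(v_{k-1}) \in I^{\mathsf{c}}$, so no vertex except possibly $v_0$ lies in $I$, giving the condition on $I$. Dually, every vertex $v_k$ with $k < \ell$ has an outgoing arc and hence lies in $\operatorname{dom}(f) = O^{\mathsf{c}}$, so no vertex except possibly $v_\ell$ lies in $O$, giving the condition on $O$. The main obstacle is the injectivity step, as the subsequent verifications are direct bookkeeping on the images and domain of $f$; once injectivity and acyclicity are secured, the path-cover axioms follow immediately.
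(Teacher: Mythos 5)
Your proof is correct, but note that the paper does not actually prove this lemma: it is imported by citation from de Beaudrap's work on finding flows, so there is no in-paper argument to compare against. Your argument is the standard one from that literature, and it is complete: injectivity of \(f\) via the third flow condition applied symmetrically to two alleged preimages (each lies in the neighbourhood of the common image, forcing \(i \prec j\) and \(j \prec i\)), acyclicity via \(i \prec f(i)\) and transitivity, and then the bookkeeping that non-initial vertices lie in \(I^{\mathsf{c}}\) (they are images of \(f\)) and non-final vertices lie in \(O^{\mathsf{c}}\) (they are in the domain of \(f\)). Two small points of polish. First, the paper only says ``partial order,'' not strict partial order; if \(\prec\) is reflexive your two contradictions should instead invoke antisymmetry together with the fact that \(f(x) \neq x\) (which follows from \(A_{x,f(x)} \neq 0\) and \(A_{x,x} = 0\)), but the argument survives unchanged in substance. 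Second, for the output condition it is worth stating explicitly that the final vertex of each maximal path \emph{is} in \(O\): it has no outgoing arc, and every vertex of \(O^{\mathsf{c}}\) has one, so endpoints (including isolated vertices, which you correctly treat as trivial paths) necessarily lie in \(O\). This closes the stronger reading of ``intersects \(O\) only at its final point'' and also confirms that your trivial paths are legitimate members of the cover.
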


Lemma \ref{lem:simple-cover} allows us to interpret the causal flow MBQC
procedure as a sequence of single qumode gate teleportations, with additional
entangling operations between teleportations.
In fact, the path cover \(\mathcal{P}_f\) allows us to distinguish between two
types of edges in \(G\):
\begin{itemize}
\item edges \((j,k) \in \mathcal{P}_f\) correspond to gate teleportations where
  one end is the input and the other the output;
\item edges \((j,k) \notin \mathcal{P}_f\) correspond to
  \(\CZo\big(A_{j,k}\big)\) gates in the final circuit.
\end{itemize}
\emph{Star pattern transformation} (STP) \cite{broadbent_parallelizing_2009} is a
method based on this intuition for turning the MBQC protocol on an open graph
with causal flow into an equivalent quantum circuit.
While it was originally formulated for DV, an almost identical method functions
in CV, the only real difference being the nature of the unitary gates.
Assume \((G,I,O)\) is an open graph with causal flow \((f,\prec)\) and 
corresponding path cover \(P_f\), and let
\begin{equation}
  \Jo(w,\a,\b,\g) \coloneqq \So(w) \Fo
  \Uo(\a,\b,\g), \qq{and for any subset \(S \subseteq V\),}
  \CZo_{j, S}(s) \coloneqq \prod_{k \in S} \CZo_{j, k}(w),
  \label{eq:j-gate}
\end{equation}
To obtain a circuit for the causal flow MBQC,
\begin{enumerate}
\item Interpret each path in \(P_f\) as a wire (qumode) in a quantum circuit,
  and index the wire by the collection of vertices intersected by the path.
\item For each edge \((j,k) \notin \mathcal{P}_f\), insert a
  \(\CZo\big(A_{j,k}\big)\) gate between the edges indexed by \(j\) and \(k\).
\item For each edge \((j,k) \in \mathcal{P}_f\), insert a
  \(\Jo\big(A_{j,k},\a,\b,\g\big)\) gate after all the \(\CZo\) gates for vertices \(i
  \in \mathcal{P}_f\) such that \(i \leqslant j\) but before all such gates for
  \(k \leqslant i\).
\end{enumerate}
An example is worked out in figure \ref{fig:spt}.
In the ideal limit, the MBQC procedure converges to the CV SPT map.
We have that:
\begin{restatable}[Causal flow circuit]{Prop}{FlowCircuit}
  Suppose the open graph \((G,I,O)\) has a causal flow and \(\abs{I} =
  \abs{O}\).
  Then for any \(\vec{\a}, \vec{\b}, \vec{\g} \in \RR^{\abs{O^\mathsf{c}}}\) and
  any \(\rho \in D(\Hcal^{\otimes \abs{I}})\),
  \begin{equation}
    \lim_{\eta \to \infty}
    \Fch_\eta(\vec{\a}, \vec{\b}, \vec{\g})[\rho]
    = \Uo_{SPT}(\vec{\a}, \vec{\b}, \vec{\g}) \rho \Uo_{SPT}^*(\vec{\a}, \vec{\b}, \vec{\g}),
  \end{equation}
  where \(\Uo_{SPT}\) is the unitary corresponding to the circuit obtained by
  star pattern transformation of \((G,I,O)\).
  Furthermore, the condition \(\abs{I} = \abs{O}\) is necessary.
  \label{prop:flow_convergence}
\end{restatable}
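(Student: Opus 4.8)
The plan is to establish the convergence by reducing the MBQC procedure, teleportation by teleportation, to the single-mode gate teleportation result of Proposition~\ref{prop:tele-circuit}, using the path cover structure of Lemma~\ref{lem:simple-cover} to organise the induction. The path cover $P_f$ partitions the vertices into disjoint directed paths; each arc $x \to f(x)$ corresponds to exactly one elementary gate teleportation of the form analysed in Proposition~\ref{prop:tele-circuit}, and the remaining (non-cover) edges contribute $\CZo$ gates that commute past the teleportation gadgets in the appropriate way. The strategy is to process the teleportations in an order compatible with $\prec$, peeling off one $\tr_j \circ \Uch_{\exp(-i\Po_j \Po_{f(j)})}$ factor at a time and showing that each such factor converges strongly to the corresponding $\Jo(A_{j,f(j)}, \a_j, \b_j, \g_j)$ gate while the surrounding $\CZo$ operations are exactly the unitary channels assembled by the SPT construction.

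First I would fix a linear extension of $\prec$ and rewrite $\Fch_\eta(\vec{\a},\vec{\b},\vec{\g})$ in the explicit channel form of equation~\eqref{eq:causal_flow_channel}, then use the commutation relation of equation~\eqref{eq:causal_flow_corrections_commutation} to move each correction's back-action into position so that the composite factors out into a product of single-mode teleportation channels (one per cover arc) interleaved with the fixed $\Uch_{\CZo_{j,k}(A_{j,k})}$ channels coming from non-cover edges. The key algebraic fact is that the causal flow conditions guarantee each correction acts only on not-yet-measured vertices, so the factorisation respects the measurement order and no teleportation's input depends on an operation applied after it. Having isolated the individual gate teleportations, I would invoke Proposition~\ref{prop:tele-circuit} to obtain $\lim_{\eta\to\infty} \Tch_\eta(\a_j,\b_j,\g_j,A_{j,f(j)})[\cdot] = \Uch_{\Jo(A_{j,f(j)},\a_j,\b_j,\g_j)}[\cdot]$ in the strong topology for each arc.

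The main obstacle is passing from the convergence of each individual teleportation to convergence of their composition: strong (pointwise) convergence is not in general preserved under composition of channels, nor under tensoring with a fixed channel, without some uniformity or boundedness input. The hard part will therefore be controlling the composition, and I expect to need the condition $\abs{I} = \abs{O}$ precisely here. When $\abs{I} = \abs{O}$ the path cover has exactly $\abs{O}$ paths, each beginning at an input and each carrying a genuine input state forward, so there are no ``orphan'' squeezed auxiliary states surviving to the output that would diverge; every auxiliary mode $g_\eta$ is consumed by a trace in some teleportation. This lets me argue inductively that the intermediate states remain in $D(\Hcal^{\otimes \abs{O}})$ with no $\eta$-divergent component, so that the strong limit of the composite equals the composite of the strong limits. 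I would formalise this with a standard $\e$-argument over compositions of contractive, completely positive maps, likely deferring the measure-theoretic convergence estimates to an appendix as the paper does for Proposition~\ref{prop:tele-circuit}; the Schwartz/physical-state hypothesis on inputs is what makes these estimates available.

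Finally, for the necessity of $\abs{I} = \abs{O}$, I would argue contrapositively: the existence of a CV-flow forces $\abs{I} \leqslant \abs{O}$ (as noted in the text, since each measured vertex must be correctable onto some unmeasured neighbour), and if $\abs{I} < \abs{O}$ then the path cover contains at least one path disjoint from $I$, whose initial vertex is a squeezed auxiliary mode $g_\eta$ reinterpreted as an input that propagates to an output. The single-vertex open graph cited in the text is the minimal instance of this phenomenon: its output is precisely $g_\eta$, whose limit $\lim_{\eta\to\infty} g_\eta \notin \Hcal$ diverges in state space, so the channel cannot converge strongly to any unitary channel. Identifying this divergent component explicitly in the extracted circuit establishes that $\abs{I} = \abs{O}$ is indeed necessary, completing the proof.
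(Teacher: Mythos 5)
Your proposal is correct and follows essentially the same route as the paper's own proof: decompose the MBQC channel along the causal-flow path cover (Lemma~\ref{lem:simple-cover}) into single gate teleportations interleaved with the mediating \(\CZo\) edges, apply Proposition~\ref{prop:tele-circuit} to each teleportation, and pass the limit through the composition. In fact you are more explicit than the paper on the two points it glosses over --- the paper's appendix proof dismisses the composition-of-strong-limits issue with a one-line appeal to continuity (where, as you note, the real ingredient is trace-norm contractivity of the approximating channels), and it omits the necessity of \(\abs{I} = \abs{O}\) entirely, which you correctly settle via the orphan-path/divergent-auxiliary argument; note only that this counting argument, rather than the composition step itself, is where the hypothesis \(\abs{I} = \abs{O}\) actually enters.
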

\begin{proof}
  The proof is left to appendix \ref{sapp:flow_convergence}.
\end{proof}

\subsection{CV-flow triangularisation}
\label{sec:cv-linear}

The next challenge is to do the same as above and extract a circuit, for CV-flow. From there one can treat convergence through viewing it as a sequence of teleportations. For CV-flow however, it is not obvious how to go about it, for instance one does not directly have an obvious path cover. We follow the ideas of \cite{miyazaki_analysis_2015}, associating open graphs with CV-flow to equivalent open graphs with causal flow which allows circuit extraction, albeit using quite different proof methods.

More precisely, for the general CV-flow case, we show there are totally ordered partitions of
\(O^\mathsf{c}\) called \emph{layer decompositions} such that all the vertices in
each layer can be measured simultaneously and corrected for in
one step as in lemma \ref{lem:cv-flow-multilinear}---that is, there is a CV-flow
from the layer into the remaining unmeasured vertices.
We then show that this ``one-step'' CV-flow can be reduced to a causal flow for
the same layer, at the cost of some additional gates acting on the unmeasured
vertices.
Finally, by repeating this procedure for each layer, we extract a circuit for
the total MBQC procedure, as a sequence of star pattern transformation circuits
and intermediate gates.

\subsubsection*{From CV-flow to causal flow}



\begin{figure}
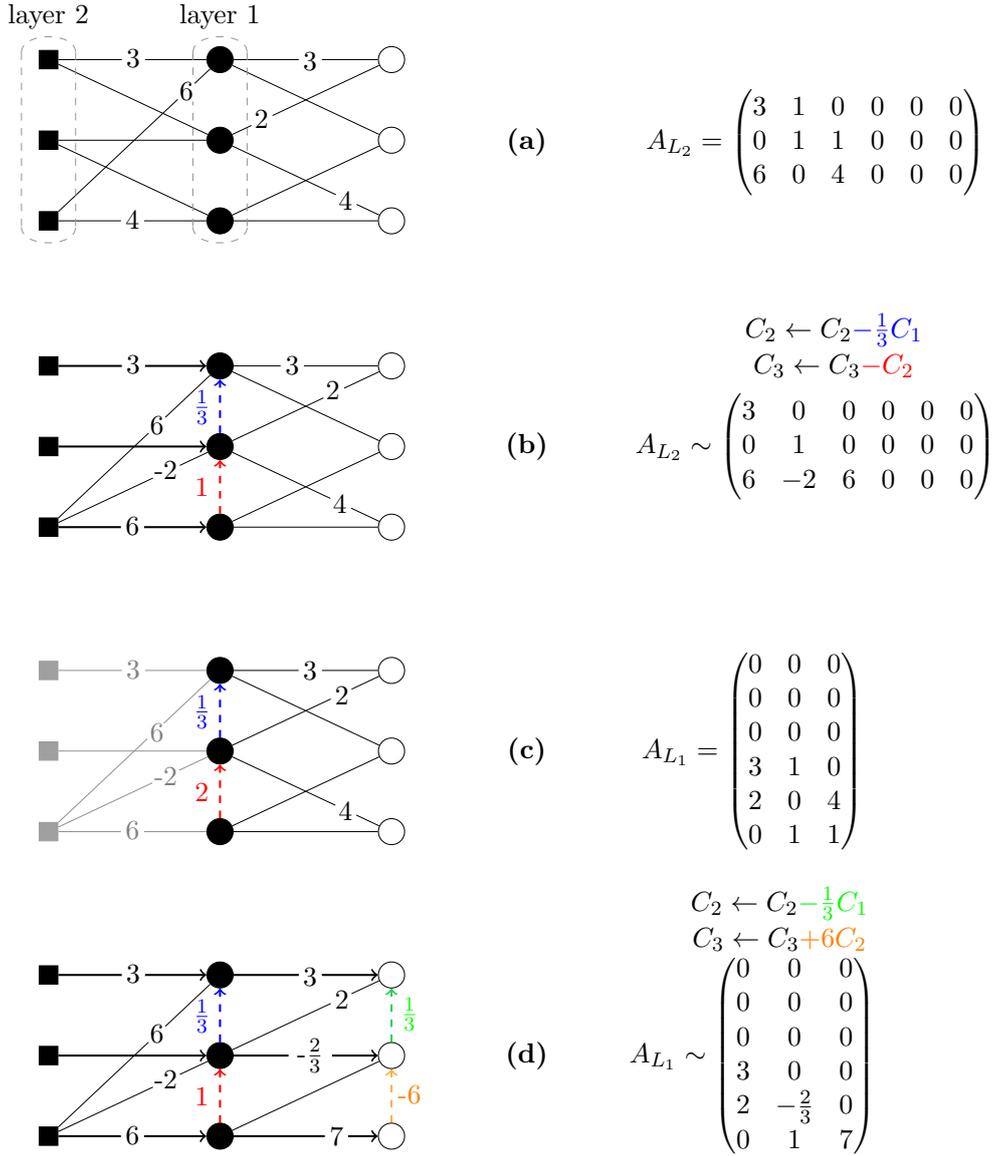

  \centering
  \begin{center}
    \scalebox{.95}{\tikzfig{figures/cv-flow_triang_example}}
  \end{center}
  \caption{An example of the reduction of CV-flow to causal flow by the
    triangularisation procedure as described in section \ref{sec:cv-linear},
    using the graph with CV-flow from figure \ref{fig:CV_flow}.
    We represent the open graphs in the sequential steps of the procedure
    (left) alongside the corresponding cut matrix for the layer in consideration
    at that step (right, layer 2 for \textbf{(a)},\textbf{(b)} and layer 1 for
    \textbf{(c)},\textbf{(d)}) and the column-space operations which are made on
    the cut matrices throughout.
    Starting from an open graph with CV-flow in two layers \textbf{(a)}, an upper
    triangularisation of the cut matrix for the first layer gives a new open
    graph where there is now a causal flow from the vertices in layer 2 into the
    vertices in layer 1 \textbf{(b)}
    Since permuting columns of the correction matrix is a ``free'' operation (it
    corresponds to relabelling unmeasured nodes), the matrix has the form of
    lemma \ref{lem:triangular_form}.
    This comes at the cost of additional weighted \(\CXo\) gates acting in layer
    1 (lemma \ref{lem:triangularisation}), represented as colored edges directed
    from the source of the \(\CXo\) to its target.
    These \(\CXo\) gates in the graph state are colored-matched to the term in
    the matrix triangularisation from which they come.
    We repeat this procedure for layer 1, where the vertices in layer 2 cannot be
    used for corrections since they come before layer 1 in the CV-flow order
    \textbf{(c)}, resulting in a final graph state where the causal flow
    has been indicated by the bold arrows \textbf{(d)} directed as \(u \to
    f(u)\) for each vertex \(u \in O^\mathsf{c}\).
    As per lemma \ref{lem:cv_cover} the edpoints of causal flow edges from
    different layers in the resulting open graph match up, resulting in a path
    cover.
  }
  \label{fig:triangularisation}
\end{figure}

We begin by defining the decomposition into layers as follows.

\begin{Def}
  \label{def:layer_decomposition}
  Let \((G,I,O)\) be a graph with CV-flow \((c, \prec)\).
  A corresponding \textbf{layer decomposition} of \((G,I,O)\) is a partition
  \(\{L_k\}_{k = 1}^N\) of \(O^\mathsf{c}\) such that if \(i \in L_m\), \(j \in
  L_n\) and \(i \prec j\) then \(n < m\) (as elements of \(\NN\)).
\end{Def}

It is straightforward to see that if \(\{L_k\}\) is such a decomposition, we can
measure the layers in the order given by \(L_{k+1} \prec L_k\) and  corrections
are always possible.
We chose this inverted ordering for layers (from the last layer measured to the
first) in accordance with \cite{mhalla_finding_2008}.
It is equally easy to see that in this procedure, we can measure all the vertices
in each layer before applying any correction for measurements errors in the
layer.

In order to extract causal flows from CV-flows, we need a matricial
characterisation of causal flow:
\begin{Lem}[Matrix form of causal flow]
  Let \((G,I,O)\) be an open graph with CV-flow for a total measurement
  order \(<\), and \(L \subseteq O^\mathsf{c}\) a subset of
  vertices.\footnote{Intuitively, \(L\) can be viewed as a set of unmeasured
    vertices for which we are looking for a causal flow.}
  Then there is a subset \(C \subseteq P(L)^\mathsf{c}\) with \(\abs{L} =
  \abs{C}\) and a causal flow \(L \to C\) if and only if the correction matrix
  \(A_L^<\) of \(L\) can be written as
  \begin{equation}
    \label{eq:simple-triang}
    A_L^< = M \cdot \begin{pmatrix}
      X & 0 \\
      Y & T
    \end{pmatrix} \cdot N
  \end{equation}
  where \(M\) and \(N\) are permutation matrices, \(T\) is a lower triangular
  \(\abs{V} \times \abs{C}\) matrix with non-zero diagonal and \(X, Y\) are
  arbitrary real matrices.
  In other words, we can turn \(A_L^<\) into the partial triangular form of
  equation \eqref{eq:simple-triang} only by reordering rows and columns, which
  in turn corresponds to relabelling the vertices of the graph \(G\).
  \label{lem:triangular_form}
\end{Lem}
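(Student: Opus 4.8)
The plan is to prove both implications by unfolding Definition~\ref{def:flow} and reading it off the block structure of the correction matrix \(A_L^<\) (Definition~\ref{def:matrix}), whose rows are indexed by \(P(L)\) and whose columns are indexed by the correction candidates \(V \setminus (P(L) \cup I)\). Throughout, ``causal flow \(L \to C\)'' is read relative to the ambient measurement schedule: its order must place \(P(L) \setminus L\) (already measured) before \(L\), and \(L\) before \(C\) and the remaining unmeasured vertices. The permutation matrices \(M, N\) are then pure bookkeeping---\(M\) moves the rows of \(L\) into the bottom block (leaving \(P(L) \setminus L\) on top) and \(N\) moves the columns of \(C\) to the right, so that \(T\) is the block indexed by the rows of \(L\) and the columns of \(C\). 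Under this identification, the first condition of the definition (\(A_{i,f(i)} \neq 0\)) becomes the non-vanishing of the diagonal of \(T\); the second (\(i \prec f(i)\)) becomes the statement that \(C\) is a set of genuine correction columns, i.e. \(C \subseteq P(L)^{\mathsf{c}} \cap I^{\mathsf{c}}\) with \(\abs{C} = \abs{L}\); and the third (the neighbour condition) splits into the lower-triangularity of \(T\) (neighbours inside the layer) and the vanishing of the top-right \(0\) block (neighbours in the strict past).

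For the forward implication, I would take a causal flow \((f, \prec)\) with \(f : L \to C\) a bijection, list \(L = \{l_1, \dots, l_n\}\) along a linear extension of \(\prec\) restricted to \(L\), and set \(c_a \coloneqq f(l_a)\). Choosing \(M, N\) to realise these orderings, the bottom-right block is \(T_{a,b} = A_{l_a, c_b}\). The edge condition gives \(T_{a,a} = A_{l_a, f(l_a)} \neq 0\). For \(a < b\), if \(A_{l_a, c_b} \neq 0\) then \(l_a \in N(f(l_b)) \setminus \{l_b\}\), so the third condition forces \(l_b \prec l_a\), contradicting the chosen linear extension; hence \(T\) is lower triangular. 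The delicate point is the vanishing of the top-right block: for \(p \in P(L) \setminus L\) one has \(A_{p, c_b} = 0\), since \(p \in N(f(l_b)) \setminus \{l_b\}\) would force \(l_b \prec p\), contradicting that \(p\) is measured strictly before the layer. The blocks \(X, Y\) (the remaining columns) are left unconstrained, exactly as in \eqref{eq:simple-triang}.

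For the converse, assuming \(A_L^<\) has the form \eqref{eq:simple-triang}, I would read off \(M, N\), take \(C\) to be the right-hand column block and \(L\) the bottom row block (so that, being columns of \(A_L^<\), automatically \(C \subseteq P(L)^{\mathsf{c}} \cap I^{\mathsf{c}}\) and \(\abs{C} = \abs{L}\)). The non-zero diagonal of \(T\) lets me define the bijection \(f(l_a) \coloneqq c_a\) with \(A_{l_a, f(l_a)} \neq 0\). I then construct a witnessing partial order: place every vertex of \(P(L) \setminus L\) below \(L\), order \(L\) internally as \(l_1 \prec \dots \prec l_n\), set each \(c_a\) above \(l_a\), and send all remaining unmeasured vertices to the top. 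Verifying the three conditions of Definition~\ref{def:flow}: the edge condition and \(l_a \prec f(l_a)\) hold by construction, and for \(k \in N(f(l_a)) \setminus \{l_a\}\) one checks that the case \(k \in P(L) \setminus L\) cannot occur (the top-right block vanishes), the case \(k = l_b \in L\) forces \(b > a\) by lower-triangularity so that \(l_a \prec l_b\), and any unmeasured \(k\) sits above \(L\) by construction; acyclicity of the resulting relation is immediate from the strict layering \(P(L)\setminus L \prec L \prec (\text{rest})\) together with the internal chain on \(L\).

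The principal obstacle, and the only place where real care is needed, is reconciling the causal-flow order \(\prec\) with the ambient total order \(<\) that defines \(P(L)\) and hence \(A_L^<\): both directions hinge on the observation that \(P(L) \setminus L\) is precisely the set of vertices measured strictly before the layer, so that any back-action of a correction onto such a vertex would violate the neighbour condition---this is what pins down the top-right \(0\) block. A secondary point to keep explicit is that one restricts to decompositions in which the bottom row block is exactly \(L\), which is the natural choice since \(L\) is the distinguished set for which corrections are sought; once this past/future dictionary and the \(L\)-identification are fixed, everything else is the routine matrix-entry verification sketched above, and the permutation matrices contribute nothing beyond the relabelling already implicit in ordering \(L\) and \(C\).
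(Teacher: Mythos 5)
Your proposal is correct and follows essentially the same route as the paper's own proof: both directions rest on the same dictionary between the three causal-flow conditions and the block structure (non-zero diagonal of \(T\), lower-triangularity within \(L\), and the vanishing top-right block encoding no back-action on \(P(L)\setminus L\)), with the permutation matrices realising the chosen orderings of \(L\) and \(C\). Your write-up is simply a more explicit version of the paper's terser argument, including the same implicit convention that the bottom row block is identified with \(L\).
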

\begin{proof}
  \((\impliedby)\) If \(A_L\) takes the form described, then the diagonal elements
  of \(T\) determine a single correction vertex in \(C\) for each vertex in \(L\),
  as well as a measurement order such that there is no back-action: the order of
  the columns in \(T\) (since all elements above the diagonal are now 0).
  Thus, there is a causal flow \(L \to C\).

  \((\implies)\) If there is a causal flow \(L \to C\), then there is a
  measurement order \(<\) on \(L\) such that when measuring vertex \(i \in L\),
  there is a single unmeasured vertex \(j \in I^\mathsf{c}\) to correct onto, and
  this correction has no back-action on previously measured vertices.
  But this implies that if we reorder the columns of \(j\) according to \(<\),
  column \(i\) has only zeros above row \(j\) (otherwise there is a
  back-action), and a non-zero entry in row \(j\) (otherwise it is not possible
  to correct onto \(j\)).
  Repeating this process for each vertex in \(L\) gives \(\abs{L}\) such columns,
  let \(C\) be the corresponding correction vertices.

  Now, \(<\) induces an order \(<_C\) on \(C\) by the causal flow matching.
  Extend \(<\) by letting all previously measured vertices in \(P(L)\) be less than
  \(L\), and \(<_C\) by letting all unmeasured vertices in \(P(L)^\mathsf{c}\) be less
  than \(C\).
  Then, ordering the columns and rows of \(A_L\) according to \(<\) and \(<_C\),
  respectively, results in a matrix of the form described.
\end{proof}


This characterisation of causal flow is the key difference between our proof
method and that of \textcite{miyazaki_analysis_2015}---where they use
arguments based on local complementation to find a causal from a g-flow, we
solve the comparatively easier problem of proving it is always possible to map
an open state with CV-flow to one where the correction matrix takes this form.

The approach now is, having broken the measurement pattern down into layers, we show that the graph over each pair of layers can be seen as having flow, by transforming the correction matrix such that it takes the above triangular form. 
Reordering rows and columns of the correction matrix simply corresponds to relabelling of the vertices, however, we will also require linear addition of columns.
This matrix or graphical operation, it turns out, is physically equivalent to applying \(\CXo\) gates, which are exactly the additional operations in the equivalence we mentioned.

This emerges from the following stabiliser condition for controlled operators,
which is approximate for any finite squeezing and, as for the standard
stabiliser conditions, only holds perfectly in the infinite squeezing limit.
We need to extend this result beyond just graph states in order to include
the states which occur in the MBQC, i.e.\, those of the form
\(\mathscr{G}_\eta[\rho]\) (equation~\eqref{eq:g_eta_def}). These are
essentially graph states but in which some of the auxiliary states \(g_\eta\)
are replaced by inputs. 

Then, the afore-mentioned result on stabilisers takes the form:
\begin{restatable}[Approximate controlled stabilizers]{Lem}{ApproxStab}
  \label{lem:cont-stab}
  Let \((G,I,O)\) be an open graph, \(j \in G\) and \(k \in I^\mathsf{c}\).
  Then, for any physical input state \(\rho \in D(\Hcal^{\otimes \abs{I}})\) and
  \(s \in \RR\),
  \begin{equation}
    \lim_{\eta \to \infty} \norm{ \Uch_{\CXo_{j,k}(s) \CZo_{j,N(k)}(s)} \circ \Gch_\eta[\rho] - \Gch_\eta[\rho]} = 0.
  \end{equation}
\end{restatable}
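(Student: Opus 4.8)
The plan is to reduce the claim to the approximate invariance of the \emph{pre-entangled} product state under a single controlled displacement, and then to settle the limit by an explicit Gaussian overlap computation. First I would use the definition \(\Gch_\eta[\rho] = \Uch_{\Eo_G}[\rho \otimes \bigotimes_{i \in I^\mathsf{c}} \sigma_\eta]\) (equation~\eqref{eq:g_eta_def}) together with the unitary invariance of the trace norm to rewrite
\begin{equation}
  \norm{ \Uch_{\CXo_{j,k}(s)\CZo_{j,N(k)}(s)} \circ \Gch_\eta[\rho] - \Gch_\eta[\rho] }
  = \norm{ \Uch_{\Eo_G^{*}\, \CXo_{j,k}(s)\CZo_{j,N(k)}(s)\, \Eo_G}[\tau] - \tau },
\end{equation}
where \(\tau \coloneqq \rho \otimes \bigotimes_{i \in I^\mathsf{c}} \sigma_\eta\) is the unentangled initial state.

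The key algebraic step is that conjugation by \(\Eo_G\) disentangles the operator. Since every \(\Qo\) commutes with \(\Eo_G\) whereas \(\Eo_G^{*}\Po_k\Eo_G = \Po_k + \sum_{\ell \in N(k)} A_{k,\ell}\Qo_\ell\), conjugating \(\CXo_{j,k}(s) = \exp(is\Qo_j\Po_k)\) produces a by-product of controlled-\(\Zo\) gates on the neighbours of \(k\), which is exactly cancelled (up to sign conventions) by the accompanying factor \(\CZo_{j,N(k)}(s)\). This is the continuous-variable analogue of completing a graph stabiliser, as in equation~\eqref{eq:causal_flow_corrections}. Hence the right-hand side collapses to \(\norm{ \Uch_{\CXo_{j,k}(s)}[\tau] - \tau }\), a single controlled displacement acting on a product state in which mode \(k\) is the squeezed state \(\sigma_\eta\).

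Next I would invoke the standard bound \(\norm{\Uo\tau\Uo^{*} - \tau} \leqslant 2\norm{(\Uo - \Io)\sqrt{\tau}}_2\) (with \(\norm{\cdot}_2\) the Hilbert--Schmidt norm), which incidentally dispenses with any separate purification argument for mixed \(\rho\), and the identity \(\norm{(\CXo_{j,k}(s) - \Io)\sqrt{\tau}}_2^{2} = 2\big(1 - \Re \tr(\CXo_{j,k}(s)\tau)\big)\). As \(\tau\) is a product state and \(\CXo_{j,k}(s)\) touches only modes \(j,k\), this trace reduces to the two-mode marginal; writing it in the position basis of the control mode \(j\) yields
\begin{equation}
  \tr(\CXo_{j,k}(s)\tau) = \int_{\RR} p_j(x)\, \tr\!\big( \exp(isx\Po_k)\,\sigma_\eta \big) \dd{x},
\end{equation}
where \(p_j\) is the position density of the marginal of \(\tau\) on mode \(j\), and the inner factor is the characteristic function of the squeezed state, a Gaussian of the form \(e^{-c\,s^{2}x^{2}/\eta^{2}}\) with \(c > 0\).

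The crux, and the step I expect to be the main obstacle, is the passage to the limit. Pointwise in \(x\) the Gaussian factor tends to \(1\) as \(\eta \to \infty\), so the natural move is dominated convergence, giving \(\tr(\CXo_{j,k}(s)\tau) \to \int p_j = 1\) and hence the result. This is exactly where the physical-state hypothesis is indispensable: it guarantees that the control mode carries a state with a genuine, rapidly decaying (Schwartz) position density \(p_j\), furnishing the \(\eta\)-independent dominating function that the convergence theorem requires. The delicate point is precisely that the controlled shift \(sx\) must remain negligible against the growing width of \(\sigma_\eta\); were the control mode's own spread to grow with \(\eta\), the displacement would not wash out and the limit would fail. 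Thus it is the regularity of the input, rather than the algebra of the first two steps, that carries the real weight of the proof.
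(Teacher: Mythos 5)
Your proposal follows the same two-step strategy as the paper's own proof: the exact algebraic reduction via the identity \(\CXo_{j,k}(s)\,\CZo_{j,N(k)}(s)\,\Eo_G = \Eo_G\,\CXo_{j,k}(s)\) plus unitary invariance of the norm, which strips the problem down to the approximate invariance of the unentangled state \(\tau = \rho \otimes \bigotimes_{i \in I^{\mathsf{c}}} \sigma_\eta\) under a single \(\CXo_{j,k}(s)\). Where you genuinely differ is the analytic finish. The paper works in the Hilbert-space picture with a \emph{fixed} normalised Schwartz vector \(\phi\) on the control mode, writes \(\CXo_{1,2}(s)[\phi \otimes g_\eta](x,y) = \phi(x)\,g_\eta(y+sx)\), and runs an explicit \(\mathrm{L}^2\) estimate (splitting \(x\) into a bounded set carrying most of \(\norm{\phi}^2\) plus a tail, and using a Lipschitz bound on the Gaussian), leaving the extension to mixed, input-entangled \(\rho\) implicit. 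You instead use \(\norm{\Uo\tau\Uo^* - \tau}_1 \leqslant 2\norm{(\Uo - \Io)\sqrt{\tau}}_2\), the identity \(\norm{(\Uo-\Io)\sqrt{\tau}}_2^2 = 2\big(1 - \Re\tr(\Uo\tau)\big)\), and the characteristic function \(e^{-s^2x^2/4\eta^2}\) of \(\sigma_\eta\), concluding by dominated convergence; this handles general mixed \(\rho\) in one stroke, which is a genuine improvement in cleanliness over the paper's pure-state computation.

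There is, however, a gap --- one your proof shares with the paper's, and which your final paragraph almost names. After the reduction, the control mode \(j\) carries a marginal of \(\rho\) only if \(j \in I\). The lemma allows any \(j \in G\); when \(j \in I^{\mathsf{c}} \setminus \{k\}\), the control mode of \(\tau\) is itself \(\sigma_\eta\), its position density is \(p_j^{(\eta)}(x) = (\pi\eta^2)^{-1/2}e^{-x^2/\eta^2}\), and no \(\eta\)-independent dominating function exists. Dominated convergence is not merely inapplicable there --- the conclusion fails, since
\begin{equation}
  \int_{\RR} p_j^{(\eta)}(x)\, e^{-s^2x^2/4\eta^2} \dd{x} = \left(1+\tfrac{s^2}{4}\right)^{-1/2}
\end{equation}
for every \(\eta\), so (taking, e.g., \(G\) a path on three non-input vertices with \(j\) and \(k\) its endpoints, so the graph state is pure) the overlap between the two states in the lemma is \((1+s^2/4)^{-1/2}\) independently of \(\eta\), and the trace distance stays bounded away from zero. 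This is precisely the mechanism you flag (``were the control mode's own spread to grow with \(\eta\), the displacement would not wash out''), but the physical-state hypothesis does not exclude it: it constrains only \(\rho\) and says nothing about the non-input modes. So your argument --- exactly like the paper's, whose closing reduction silently assumes the control mode is in a fixed Schwartz state --- proves the lemma only for \(j \in I\), or more generally for control modes whose reduced state does not vary with \(\eta\); the remaining case is not a missing epsilon but a false instance of the statement as written. You should state that restriction explicitly rather than attribute the general case to the regularity of the input.
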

\begin{proof}
  The proof is left to appendix \ref{sapp:approx_stab}.
\end{proof}

In this way, the action of specific \(\CZo\) operations---which are what are
used to create or remove edges in the graph---are equivalent (in the infinite
squeezing limit) to the application of a \(\CXo\) operation, since we have
\begin{equation}
  \Uch_{\CXo_{j,k}(-s)} \circ \Gch_\eta[\rho] \approx \Uch_{\CZo_{j,N(k)}(s)} \circ \Gch_\eta[\rho].
\end{equation}
This allows us to
achieve the our goal:

\begin{Prop}[Triangularisation]
  If \((G,I,O)\) is an open graph with CV-flow, and \(L_1\) is the last layer in
  a corresponding layer decomposition \(\{L_k\}_{k = 1}^N\)
  (definition~\ref{def:layer_decomposition}), then \((G,I,O)\) is approximately
  equivalent to an open graph with a causal flow \(L_1 \to O\), up to weighted
  \(\CXo\) gates acting in \(O\) and reordering the vertices in \(L_1\).
  \label{lem:triangularisation}
\end{Prop}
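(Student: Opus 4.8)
The plan is to reduce the statement to the matrix characterisation of causal flow in lemma~\ref{lem:triangular_form}, by exhibiting a sequence of column operations on the correction matrix of $L_1$ that brings it into the partial triangular form~\eqref{eq:simple-triang}, and then to recognise those operations as exactly the weighted $\CXo$ gates furnished by lemma~\ref{lem:cont-stab}.

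First I would fix the indices. Since $L_1$ is the last layer measured, under any total extension $<$ of $\prec$ every non-output vertex lies in $P(L_1)$, so $P(L_1) = O^{\mathsf{c}}$ and the correction matrix $A_{L_1}^<$ of definition~\ref{def:matrix} has its rows indexed by $O^{\mathsf{c}}$ and its columns indexed by the non-input output vertices. Because $(G,I,O)$ has CV-flow, the simultaneous correction equation~\eqref{eq:layer-correction} for $L_1$ is solvable for all outcomes, and by lemma~\ref{lem:cv-flow-multilinear} the solution is multilinear: the vectors $c_{L_1}(1_k)$, $k \in L_1$, satisfy that $A_{L_1}^< c_{L_1}(1_k)$ has a single nonzero entry in the row of vertex $k \in L_1$ and vanishes on all other rows, in particular on the rows $O^{\mathsf{c}} \setminus L_1$.

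Next I would perform the triangularisation. Assembling the $c_{L_1}(1_k)$ as the columns of a full-rank matrix shows that $\abs{L_1}$ suitable linear combinations of the columns of $A_{L_1}^<$ are supported only on the $L_1$-rows and form an invertible block. Concretely, one extends this family to an invertible matrix and decomposes it into column permutations and weighted column additions; no column rescalings are needed, since the diagonal of the resulting block only has to be nonzero, and any rescaling can be absorbed there. Right-multiplying $A_{L_1}^<$ by this product produces $\abs{L_1}$ columns that are zero on the rows $O^{\mathsf{c}} \setminus L_1$ and, after reordering the rows indexed by $L_1$, form a lower-triangular block with nonzero diagonal. This is precisely the form~\eqref{eq:simple-triang}, so lemma~\ref{lem:triangular_form} yields a causal flow $L_1 \to C$ for some $C \subseteq O$ with $\abs{C} = \abs{L_1}$.

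Finally I would interpret the operations physically. A column permutation is a relabelling of output vertices, and the row reordering is the promised reordering of the vertices of $L_1$; both are free. A weighted addition of one output column to another modifies the edges of the graph, i.e.\ acts by a product of $\CZo$ gates, and by lemma~\ref{lem:cont-stab} such an edge modification is, in the infinite-squeezing limit, equivalent to applying a weighted $\CXo$ gate supported on the output vertices. Since the outputs are never measured, these $\CXo$ gates commute past the measurement round and survive as post-processing unitaries on $O$, which is exactly the sense of ``approximately equivalent, up to weighted $\CXo$ gates acting in $O$'' in the statement. The main obstacle is this last step: one must verify that the column operations realising the triangularisation can always be taken between output columns only, so that the $\CXo$ gates genuinely act in $O$, and that the identification of weighted column additions with $\CXo$ gates is faithful at the level of the edge weights --- which is precisely where the limit in lemma~\ref{lem:cont-stab}, rather than an exact finite-$\eta$ identity, is unavoidable.
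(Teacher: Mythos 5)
Your proposal is correct and follows essentially the same route as the paper's proof: use the CV-flow solution vectors \(c_{L_1}(1_k)\) to produce column combinations of the correction matrix supported only on the \(L_1\)-rows, realise the weighted column additions physically as \(\CXo\) gates on outputs via lemma~\ref{lem:cont-stab}, and conclude with the matrix characterisation of causal flow in lemma~\ref{lem:triangular_form}. The ``main obstacle'' you flag at the end is in fact already resolved by your own setup: since \(L_1\) is the last layer, \(P(L_1) = O^{\mathsf{c}}\), so every column of \(A_{L_1}^<\) is indexed by a non-input output vertex and all column operations are automatically between output columns.
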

\begin{proof}
  Let \(A_{L_1}\) be the correction matrix of \(L_1\) for a given CV-flow order.
  Then, we can reorder the columns of \(A_{L_1}\) by relabeling the unmeasured
  vertices, and we can reorder the rows of \(A_{L_1}\) by choosing a different
  measurement order for vertices in \(L_1\).

  Further let \(j, k \in O\), then by lemma \ref{lem:cont-stab} applying the gate
  \(\CXo_{j,k}(-s)\) on the graph state induces new edges in the graph state 
  \emph{in the infinite squeezing limit}.
  The result on the correction matrix is the transformation
  \begin{equation}
    C_j \longmapsto C_j + s C_k,
  \end{equation}
  where \(C_j\) is the \(j\)-th column of \(A_{L_1}\).

  By the definition of CV-flow, for each \(v \in L\) we have that
  \begin{equation}
    A_{L_1}^< c_v(1_v) = 1_v,
  \end{equation}
  so that \(c_v(1_v)\) gives a sum of columns \(A\) which contains a single \(1\)
  in the row corresponding to \(v\).
  Repeating this for each \(v \in L_1\), we obtain \(\abs{L_1}\) such columns, each
  with the \(1\) on a different row, so that by reordering rows and columns we
  can write \(A_{L_1}\) as
  \begin{equation}
    A_{L_1} \sim \begin{pmatrix}
      X & 0 \\
      Y & I_{\abs{L_1}}
    \end{pmatrix}
  \end{equation}
  where \(I_{\abs{L_1}}\) is the \(\abs{L_1} \times \abs{L_1}\) identity matrix.
  Then, \(A_{L_1}\) takes the form described in lemma \ref{lem:triangular_form}, and
  this partial triangularisation procedure corresponds to extracting additional
  \(\CXo\) gates from the graph as described above.
  Then, the open graph \((G,I,O)\) is approximately equivalent to a graph with
  causal flow \(L_1 \to O\), up to \(\CXo\) gates acting in \(O\).
\end{proof}

The fact that the additional controlled gates act only on the outputs is
crucial: it will allow us reduce the total physical map to a sequence of single-gate
teleportation operations.
Since the \(\CXo\) gates never appear in between a measurement and the
corresponding \(\CZo\) gate for the teleportation, nor do they act on the
auxiliary squeezed states before they are consumed in the teleportation, the
projective measurements can be brought forward and the squeezed inputs delayed
to obtain a single gate teleportation circuit within the larger circuit
representing the total physical map of the computation.

\subsubsection*{Path cover of CV-flow}

Now, using these two lemmas, we obtain a causal flow from the last layer \(L_1\) of a
decomposition into a subset of the outputs by adding \(\CXo\) gates.
Most importantly, this subset is then only connected to \(L_1\) so it can be
removed from the open graph as far as determining flows on the remainder is
concerned.
As a result, we can reduce a graph to a sequence of causal flows by peeling off
each layer one-by-one.
\begin{Lem}[Graph reduction]
  If \((G,I,O)\) is an open graph with CV-flow \((\prec,
  (c_j)_{j=1}^N)\), and corresponding layer decomposition
  \(\{L_k\}_{k=1}^N\) then there is \(C_1 \subseteq O\) such that there is a
  causal flow \(L_1 \to C_1\) with \(\abs{L_1} = \abs{C_1}\), up to a product
  \(\To\) of weighted \(\CXo\) gates acting in \(O\).
  Furthermore, let \(G'\) be the graph state obtained from the triangularisation
  procedure for layer \(L_1\), then \(\left( G' \setminus C_1, I \setminus C_1,
  L_1 \cup (O \setminus C_1) \right)\) has CV-flow \((\prec, (c_j)_{j=2}^N)\) and
  layer decomposition \(\{L_k\}_{k=2}^N\).
  \label{lem:CV->simple}
\end{Lem}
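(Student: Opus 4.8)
The plan is to read the causal flow off the triangular form of the last layer's correction matrix, and then to argue that deleting the resulting correction vertices \(C_1\) leaves the correction data of the earlier layers intact. First I would apply Proposition \ref{lem:triangularisation} to the last layer \(L_1\). This produces a product \(\To\) of weighted \(\CXo\) gates acting in \(O\)---each realising, via Lemma \ref{lem:cont-stab}, an elementary column addition \(C_j \mapsto C_j + s C_k\) with \(j,k \in O\) on the adjacency matrix---after which \(A_{L_1}^<\) takes the partial triangular form of Lemma \ref{lem:triangular_form}. The nonzero-diagonal block then singles out a subset \(C_1 \subseteq O\) with \(\abs{C_1} = \abs{L_1}\), and the \((\impliedby)\) direction of Lemma \ref{lem:triangular_form} exhibits a causal flow \(L_1 \to C_1\) in the triangularised graph \(G'\). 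This is the first assertion; note in passing that \(C_1 \subseteq (P(L_1)\cup I)^\mathsf{c}\) is disjoint from \(I\), so \(I \setminus C_1 = I\) and the input set of the reduced graph is well defined.

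The crucial structural point for the second assertion is the top-right zero block of the triangular form: in \(G'\) the columns indexed by \(C_1\) vanish on every row of \(P(L_1) \setminus L_1\). Since \(L_1\) is measured last, \(P(L_1)\setminus L_1 = L_2 \cup \dots \cup L_N\) contains the past \(P(L_k)\) of every earlier layer, so in \(G'\) the vertices \(C_1\) are joined to the measured part of the graph only through \(L_1\). Consequently the \(C_1\)-columns of each correction matrix \(A_{L_k}^<\) with \(k \geqslant 2\) are identically zero over their rows. I would then verify the combinatorial data of the reduced open graph: its non-output set is \((V \setminus C_1)\setminus\big(L_1 \cup (O\setminus C_1)\big) = O^\mathsf{c}\setminus L_1 = L_2 \cup \dots \cup L_N\), partitioned by \(\{L_k\}_{k=2}^N\), and the restriction of \(\prec\) still satisfies Definition \ref{def:layer_decomposition}.

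For the flow itself, I would observe that \(\To\) mixes only the output columns---invertibly, being a product of elementary additions within \(O\)---and fixes the \(L_1\)-columns, so it preserves the column space of every \(A_{L_k}^<\) with \(k \geqslant 2\). Combined with the vanishing of the \(C_1\)-columns just noted, the right-hand side \((0,\dots,0,\vec m)^T\) of \eqref{eq:layer-correction} therefore remains in the column space of the reduced correction matrix for each \(k \geqslant 2\). Hence each earlier layer is still correctable and \(\big(G'\setminus C_1, I\setminus C_1, L_1 \cup (O\setminus C_1)\big)\) has a CV-flow with layer decomposition \(\{L_k\}_{k=2}^N\); transporting the original \(c_j\) through the column mixing and discarding the now-vanishing \(C_1\)-entries recovers the functions \((c_j)_{j=2}^N\).

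The main obstacle is precisely this last point: making rigorous that triangularising \(L_1\) does not spoil the correction equations of the earlier layers, and in particular that one may keep \emph{literally} the same functions \((c_j)_{j=2}^N\). The cleanest route is to arrange the \(\CXo\) operations so that they clear each \(C_1\)-column into the pattern \(1_v\) using the remaining columns only---possible because \(1_v\) already lies in the column space of \(A_{L_1}^<\) by the CV-flow hypothesis---thereby leaving every non-\(C_1\) column, and hence every later correction matrix restricted to its own columns, unchanged; since the \(C_1\)-contribution to each earlier layer then multiplies a zero column in \(G'\), the same \(c_{L_k}\) continue to solve \eqref{eq:layer-correction} after restriction. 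The delicate bookkeeping is ensuring that all these operations remain within \(O\), so that Proposition \ref{lem:triangularisation} applies, while simultaneously achieving the prescribed triangular pattern; the safer fallback, if literal preservation proves awkward, is to rely only on the column-space argument above, which already yields \emph{some} CV-flow with the layers \(\{L_k\}_{k=2}^N\) and suffices for the inductive peeling that follows.
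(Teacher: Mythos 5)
Your proposal is correct and follows essentially the same route as the paper's proof: triangularise the last layer via Proposition \ref{lem:triangularisation} to get \(C_1\) and a causal flow \(L_1 \to C_1\), observe that the zero block of the triangular form (equivalently, the third causal-flow condition) forces \(C_1\) to be connected only to \(L_1\) and \(O\) in \(G'\), and peel \(C_1\) off. In fact you are more careful than the paper on the one delicate point: the paper's proof never addresses the fact that the \(\CXo\) gates in \(\To\) also alter the \emph{output} columns of the earlier layers' correction matrices \(A_{L_k}^<\) (\(k \geqslant 2\)); your observation that \(\To\) acts as an invertible column operation entirely inside \(O\), hence preserves each such column space and therefore the solvability of \eqref{eq:layer-correction}, is exactly what is needed to back up the paper's bare assertion that ``the truncated CV-flow remains valid.'' One caveat: your ``cleanest route'' to literal preservation of \((c_j)_{j=2}^N\) does not work as stated---even if the non-\(C_1\) columns are untouched, an original solution \(c_{L_k}\) may carry nonzero coefficients on the pivot columns, and the old (generally nonzero) contribution of those columns on the rows \(P(L_k)\) is lost once they are turned into multiples of \(1_v\), so the unmodified \(c_{L_k}\) fails the equation by exactly that amount. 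Your fallback---transporting the solutions through the inverse column operation, which yields \emph{some} CV-flow with layer decomposition \(\{L_k\}_{k=2}^N\)---is the right fix and is all the subsequent inductive peeling uses; note that the lemma's literal claim that the \emph{same} functions \((c_j)_{j=2}^N\) survive shares this imprecision, as does the paper's own proof.
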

\vspace{-5mm}
\begin{proof}
  The first part follows straightforwardly from lemmas \ref{lem:triangular_form}
  and \ref{lem:triangularisation}.
  The second is immediate once one realises the following: by the third
  condition in the definition of causal flow, if there is a causal flow \(C_1
  \to L_1\), \(C_1\) cannot be connected to any vertex in a layer \(k > 1\).
  Since \(L_1\) is measured last, so \(C_1\) must be connected \emph{only} to
  \(L_1\) (and possibly \(O\)).
  As a result, we can remove \(C_1\) from the graph for subsequent layers: since
  it is not connected to any previous layer \(k > 1\), it never appears in any
  subsequent correction subgraphs.
  As a result, the truncated CV-flow and layer decomposition remain valid for
  the reduced graph.
\end{proof}

This ``peeling'' procedure also allows us to determine a path cover of \((G, I,
O)\), by noting that each layer causal flow has a path cover, and the endpoints
of each of these covers meet up.
So, by a successive applications of this lemma, we obtain the final ingredient
to our proof, a CV-flow analogue of lemma \ref{lem:simple-cover}:
\begin{Lem}[CV-flow path cover]
  Let \((G,I,O)\) be an open graph with CV-flow, then there is a path cover of
  \((G,I,O)\) whose edges are causal flow edges of the triangularised graph
  (\ref{lem:triangularisation}).
  If \(\abs{I} = \abs{O}\), every path is indexed by an input.
  \label{lem:cv_cover}
\end{Lem}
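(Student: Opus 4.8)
The plan is to build the path cover of $(G,I,O)$ inductively by peeling off layers one at a time, using lemma~\ref{lem:CV->simple} (Graph reduction) as the induction step and lemma~\ref{lem:simple-cover} (Causal flow path cover) to supply the covering arcs for each individual layer. The key structural fact that makes this work is that the triangularisation of lemma~\ref{lem:triangularisation} produces, for the last layer $L_1$, a genuine causal flow $L_1 \to C_1$ into a subset $C_1 \subseteq O$ with $\abs{L_1} = \abs{C_1}$, and that $C_1$ is thereafter connected only to $L_1$ and $O$, so it can be excised without disturbing the flow structure of the remaining layers.

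\emph{Proof sketch.}
First I would apply lemma~\ref{lem:CV->simple} to the last layer $L_1$, obtaining a causal flow $L_1 \to C_1$ (up to the weighted $\CXo$ gates $\To$ acting in $O$) and the reduced open graph
\begin{equation}
  \label{eq:cv_cover_reduced}
  (G' \setminus C_1,\ I \setminus C_1,\ L_1 \cup (O \setminus C_1)),
\end{equation}
which by that same lemma still carries a CV-flow $(\prec, (c_j)_{j=2}^N)$ with layer decomposition $\{L_k\}_{k=2}^N$. By lemma~\ref{lem:simple-cover}, the causal flow $L_1 \to C_1$ yields a path cover of the layer whose arcs are exactly the flow edges $u \to f(u)$ for $u \in L_1$; since $\abs{L_1} = \abs{C_1}$, these arcs form $\abs{L_1}$ disjoint directed edges, each initial point in $L_1$ and each terminal point in $C_1 \subseteq O$. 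Proceeding by induction on the number of layers $N$, the reduced open graph \eqref{eq:cv_cover_reduced} has $N-1$ layers, so by the induction hypothesis it admits a path cover built from causal flow edges of its triangularisation. I would then splice the two covers together: an input vertex of the reduced graph lying in $L_1$ is precisely an endpoint of one of the new layer-$L_1$ arcs (its $C_1$-terminus), so the paths of the reduced cover that begin in $L_1$ extend the arcs $u \to f(u)$, while the terminal vertices in $C_1 \subseteq O$ remain outputs of the full graph. Concatenating along these shared endpoints yields a collection of directed paths that covers every vertex exactly once, meets $I$ only at initial points and $O$ only at terminal points, hence a path cover of $(G,I,O)$ all of whose edges are causal flow edges of the triangularised graph.

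For the final clause, suppose $\abs{I} = \abs{O}$. Since each layer contributes $\abs{L_k} = \abs{C_k}$ flow arcs and every $C_k \subseteq O$, a counting argument (using that CV-flow forces $\abs{I} \leqslant \abs{O}$, as noted in the text) shows that every path must terminate at a distinct output and, running the concatenation backwards, must originate at an input: no path can start at a non-input vertex, for such a vertex would be an un-covered flow target of some earlier layer, contradicting disjointness together with the equality $\abs{I} = \abs{O}$. Hence each of the $\abs{O}$ paths is indexed by exactly one input.

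The main obstacle I anticipate is the bookkeeping of the endpoint matching across layers---verifying that the terminal vertices $C_1$ produced by triangularising $L_1$ are exactly the vertices through which the paths of the reduced cover re-enter, so that concatenation genuinely produces disjoint paths rather than branching trees or cycles. This relies essentially on the observation in lemma~\ref{lem:CV->simple} that $C_1$ is connected only to $L_1$ and $O$ in the triangularised graph, which guarantees that each $C_1$-vertex is the unique flow image of its $L_1$-partner and simultaneously the initial point of at most one path in the reduced cover; making this ``meeting up'' precise, rather than merely intuitive, is where the care is needed.
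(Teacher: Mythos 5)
Your proposal follows essentially the same route as the paper's proof: it too peels off layers one at a time via lemma~\ref{lem:CV->simple}, takes the causal-flow arcs \(u \to f(u)\) of each layer as the path edges, matches endpoints across layers (disjointness coming from the injectivity of each layer's causal-flow matching \(L_k \to C_k\)), and settles the \(\abs{I} = \abs{O}\) clause by the same counting argument (inputs are never correction targets, so each input begins a distinct path, and there are exactly \(\abs{O}\) paths in total). One slip in your wording: in the reduced open graph of lemma~\ref{lem:CV->simple} the vertices of \(L_1\) are \emph{outputs}, not inputs, so the paths of the reduced cover \emph{end} (rather than begin) at \(L_1\) and are then extended forward by the arcs \(u \to f(u)\) into \(C_1 \subseteq O\) --- exactly the splicing you intend, just with the direction stated correctly.
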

\begin{proof}
  Let \(\{L_k\}\) be a layer decomposition of \((G,I,O)\), and consider each
  vertex \(j \in O\) the endpoint of a path.
  Then, by lemma \ref{lem:CV->simple} there is \(C_1 \subseteq O\) such that
  there is a causal flow and a bijection \(L_1 \to C_1\); label each vertex in
  \(L_1\) by its image under the causal flow matching.
  Then, remove \(C_1\) from the graph as in lemma \ref{lem:CV->simple}, and
  repeat the process.
  Since \(\bigcup_{k=1}^N L_k \cup O = G\), we eventually label the whole graph.
  Furthermore, the resulting paths never cross: if they did, there would be two
  vertices in the same layer corrected onto the same node--but this is impossible,
  by the definition of causal flow.
  Thus the resulting set of paths is a path cover for \((G,I,O)\).

  Finally, if \(\abs{I} = \abs{O}\), every input is the beginning of some path,
  since we measure all \(j \in I\) but can never correct onto \(I\).
  Since there are exactly \(\abs{O} = \abs{I}\) paths, every path must begin in
  \(I\) and end in \(O\), and every path is indexed by an input.
\end{proof}

As a corollary, we obtain bounds on the number of inputs and outputs of an open
graph if it has a CV-flow:
\begin{Cor}
  Let \((G,I,O)\) be an open graph with CV-flow, then \(\abs{I} \leqslant
  \abs{O}\).
\end{Cor}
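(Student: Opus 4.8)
The plan is to deduce the inequality from the path cover guaranteed by Lemma~\ref{lem:cv_cover} by a straightforward counting argument on the endpoints of the paths. First I would invoke Lemma~\ref{lem:cv_cover} to obtain a path cover \(\mathcal{P}\) of \((G,I,O)\) whose arcs are the causal flow edges of the triangularised graph. Writing \(P \coloneqq \abs{\mathcal{P}}\) for the number of paths, the whole argument reduces to establishing the two bounds \(\abs{I} \leqslant P \leqslant \abs{O}\).

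For the lower bound I would use the second clause of the path cover definition: each path meets \(I\) only at its initial point. Since every vertex lies in exactly one path, each input \(i \in I\) is the initial point of a uniquely determined path, and distinct inputs determine distinct paths; hence at least \(\abs{I}\) of the paths begin at an input, giving \(\abs{I} \leqslant P\).

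For the upper bound, the key structural observation is that in this particular cover every non-output vertex is the source of an outgoing arc. Indeed, the arcs are the causal flow edges \(u \to f(u)\) produced layerwise by the triangularisation, and since the layers \(\{L_k\}\) partition \(O^\mathsf{c}\), every \(u \in O^\mathsf{c}\) acquires such a successor. In a path cover each vertex has at most one outgoing arc, so the final point of any path, having no outgoing arc, cannot lie in \(O^\mathsf{c}\); every path therefore terminates in \(O\). By the third clause of the definition each output lies in exactly one path and only at its final point, so distinct paths have distinct final points in \(O\), yielding \(P \leqslant \abs{O}\). Chaining the two bounds gives \(\abs{I} \leqslant \abs{O}\).

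I do not expect a genuine obstacle here: the content of the corollary is carried entirely by Lemma~\ref{lem:cv_cover}, and the only point requiring care is the structural claim that the final points of the paths are exactly the outputs, which follows from the fact that the triangularisation equips every measured (non-output) vertex with a causal flow successor. The remainder is bookkeeping about initial and final points in a path cover.
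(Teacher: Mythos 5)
Your proof is correct and follows essentially the same route as the paper's: both deduce the inequality from the path cover of Lemma~\ref{lem:cv_cover}, using that each input starts its own path, each path terminates at a distinct output, and hence the cover induces an injection \(I \to O\). Your splitting into \(\abs{I} \leqslant \abs{\mathcal{P}} \leqslant \abs{O}\), and your explicit justification that paths must end in \(O\) (every vertex of \(O^{\mathsf{c}}\) has an outgoing causal flow arc), merely makes precise what the paper cites from the proof of the lemma.
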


\begin{proof}
  By the proof to the lemma, every input is the beginning of a path that ends in
  \(O\), and these paths never cross, such that even their endpoints in \(O\)
  cannot coincide.
  Then, the collection of paths describes an injection \(I \to O\), since each
  path uniquely associates an endpoint in \(O\) to each input.
\end{proof}

\begin{Cor}
  Let \((G,I,O)\) be an non-empty open graph with CV-flow, then \(\abs{O} 
  \neq 0\).
\end{Cor}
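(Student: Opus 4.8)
The plan is to read the conclusion directly off the path-cover structure of Lemma~\ref{lem:cv_cover}, mirroring the proof of the preceding corollary. First I would invoke that lemma to obtain a path cover \(\mathcal{P}\) of \((G,I,O)\) whose arcs are the causal-flow arcs \(u \to f(u)\) of the triangularised graph, with exactly one such arc leaving each \(u \in O^\mathsf{c}\). Since the open graph is assumed non-empty, \(V \neq \varnothing\), so \(\mathcal{P}\) is a non-empty collection of paths; pick any path in it and let \(t\) be its terminal vertex, i.e.\ the unique vertex of that path that is not the source of any arc of \(\mathcal{P}\).

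The only real content is then to show \(t \in O\). For this I would use that after triangularisation every non-output vertex is the source of a causal-flow arc: each \(u \in O^\mathsf{c}\) lies in some layer \(L_k\) of the decomposition and is matched by the corresponding layer causal flow to a target \(f(u)\), so \(u\) has an outgoing arc in \(\mathcal{P}\). Consequently a vertex with no outgoing arc cannot lie in \(O^\mathsf{c}\), which forces \(t \in O\) and hence \(O \neq \varnothing\). The main (and essentially only) obstacle is a definitional subtlety rather than a mathematical one: the path-cover axiom merely requires each path to \emph{intersect \(O\) only at its final point}, which is vacuously true of a path avoiding \(O\) entirely; so the argument must genuinely rule out a path whose terminal vertex sits in \(O^\mathsf{c}\), and the flow-source observation above is exactly what does this.

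As a self-contained alternative that avoids the triangularisation machinery, I would argue by contradiction straight from Definition~\ref{def:CV-flow}. Suppose \(O = \varnothing\), so that \(O^\mathsf{c} = V\) and \(I \subseteq V\). Fix a total order \(<\) extending \(\prec\) and let \(j\) be its maximal element; then \(P(j) = V\), so \((P(j) \cup I)^\mathsf{c} = \varnothing\) and the correction matrix \(A_j^<\) has \(\abs{V}\) rows but no columns. The solution vector \(c_j(m)\) then lives in \(\RR^{0}\), whence \(A_j^< c_j(m)\) is the zero vector for every \(m\), contradicting the requirement of equation~\eqref{eq:correction_eq} that its last entry equal \(m\) once we take \(m = 1\). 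This forces \(O \neq \varnothing\), and it has the advantage of depending only on the flow definition and the non-emptiness of \(V\).
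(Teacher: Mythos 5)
Your first argument is essentially the paper's own route: invoke Lemma~\ref{lem:cv_cover}, take any path in the resulting cover (non-empty since \(V \neq \varnothing\)), and conclude its terminal vertex lies in \(O\). The paper's proof states this last step in one line (``this path must end at an output vertex''), and you correctly identify that it does not follow from the path-cover axioms alone---a path disjoint from \(O\) satisfies ``intersects \(O\) only at its final point'' vacuously---but rather from the structure of the cover: every \(u \in O^\mathsf{c}\) lies in some layer and is therefore the source of a causal-flow arc \(u \to f(u)\), so a vertex with no outgoing arc must be an output. That observation is a genuine tightening of the paper's wording (it is implicit in the proof of Lemma~\ref{lem:cv_cover}, where each path is grown backwards from an output). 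Your second argument is a genuinely different and more elementary proof: if \(O = \varnothing\), the maximal vertex \(j\) of any total order extending \(\prec\) has \(P(j) = V\), so its correction matrix \(A_j^<\) has no columns, \(c_j(m)\) lives in \(\RR^0\), and \(A_j^< c_j(m)\) is the zero vector, which cannot have last entry \(m = 1\); this contradicts equation~\eqref{eq:correction_eq}. This version depends only on Definition~\ref{def:CV-flow} and the finiteness of \(V\), bypassing the triangularisation and path-cover machinery entirely, so it would stand even before Section~\ref{sec:circuit}; what the paper's route buys is uniformity, since the same path-cover lemma simultaneously yields the companion corollary \(\abs{I} \leqslant \abs{O}\).
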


\begin{proof}
  If \(G \neq \varnothing\) and the open graph has CV-flow, then there is a
  path cover of the open graph which contains at least one path.
  This path must end at an output vertex, thus \(\abs{O} \neq 0\).
\end{proof}

\subsubsection*{Putting it all together}

\begin{figure}
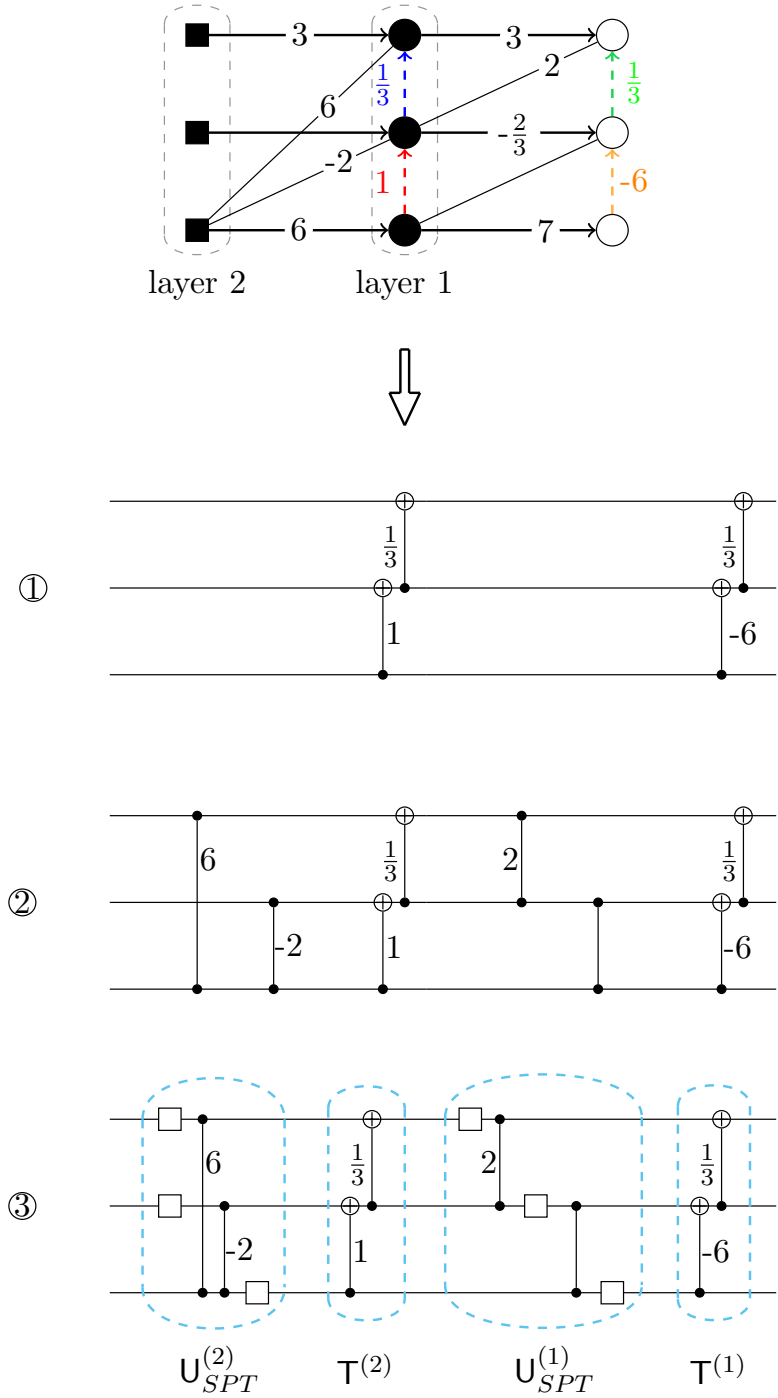

  \centering
  \begin{center}
    \scalebox{1.15}{\tikzfig{figures/cv-flow_circuit_example}}
  \end{center}
  \caption{Example of the final circuit extraction procedure for an open graph
    with CV-flow.
    We use the open graph state resulting from the triangularisation (figure
    \ref{fig:triangularisation}) of the graph state with CV-flow from figure
    \ref{fig:CV_flow}.
    (1) The causal flow edges (bold directed edges) of the open graph are
    interpreted as edges in a circuit, and the \(\CXo\) gates (colored, dashed,
    directed edges) from the triangularisations of each layer are added.
    These are ordered by layer, and within a layer by (the inverse of) the order of column space
    operations in the triangularisation of the cut matrix for that layer.
    They separate the final circuit into individual sections, with each section
    corresponding to a single layer of the open graph.
    (2) The auxiliary (non-causal flow) edges for each section are added to the
    circuit as \(\CZo\) gates.
    (3) Each of the causal flow edges corresponds to a \(\Jo\) gate
    (equation \eqref{eq:j-gate}) in the final circuit.
    Adding these gates completes the SPT for each layer, and we identify the
    different sections corresponding to the decomposition of theorem
    \ref{thm:CV-circuit} in the final circuit.
  }
  \label{fig:CV-circuit}
\end{figure}

We are finally ready to state our main convergence result (the proof is in
appendix \ref{app:convergence}):
\begin{restatable}[CV-flow circuit]{Thm}{CVconvergence}
  \label{thm:CV-circuit}
  If \((G,I,O)\) is an open graph with CV-flow and \(\abs{I} = \abs{O}\)
  then the CV-flow correction protocol converges to a unitary acting on the
  input state.
  If \(\{L_k\}_{k=1}^n\) is a corresponding layer decomposition, for any
  \(\vec{\a}, \vec{\b}, \vec{\g} \in \RR^{\abs{O^\mathsf{c}}}\) let
  \begin{equation}
    \Wo(\vec{\a},\vec{\b},\vec{\g})
    \coloneqq \prod_{k=1}^{n} \To^{(k)} \Uo_{SPT}^{(k)}(\vec{\a},\vec{\b},\vec{\g}),
  \end{equation}
  where \(\Uo_{SPT}^{(k)}\) is the circuit extracted for the \(k\)-th layer
  using the causal flow from lemma \ref{lem:CV->simple}, and \(\To^{(k)}\)
  contains the \(\CXo\) gates obtained from the triangularisation of the
  CV-flow (lemma \ref{lem:triangularisation}).

  Then, for any \(\vec{\a}, \vec{\b}, \vec{\g} \in \RR^{\abs{O^\mathsf{c}}}\)
  and any physical input state \(\rho \in D(\Hcal^{\otimes \abs{I}})\),
  \begin{equation}
    \lim_{\eta \to \infty}
    \Cch_\eta(\vec{\a}, \vec{\b}, \vec{\g})[\rho]
    = \Wo(\vec{\a},\vec{\b},\vec{\g}) \rho \Wo^*(\vec{\a},\vec{\b},\vec{\g}).
  \end{equation}
  \(\To^{(k)} \Uo_{SPT}^{(k)}(\vec{\a}, \vec{\b}, \vec{\g})\) acts on the
  qumodes represented by wires indexed by \(L_k\), and the total product \(\Wo\)
  acts on the qumodes represented by wires indexed by \(I\).
\end{restatable}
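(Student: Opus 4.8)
The plan is to prove the convergence by induction on the number of layers $n$ in the decomposition $\{L_k\}_{k=1}^n$, peeling off one layer at a time from the output side so as to reduce the CV-flow case to the causal-flow case already settled in Proposition~\ref{prop:flow_convergence}. Writing the total physical map as the graph-state preparation $\Gch_\eta$ followed by the ordered composition of the per-layer measurement-and-correction maps, I would isolate the last-measured layer $L_1$ (the one closest to the outputs). By the graph reduction lemma~\ref{lem:CV->simple} there is a subset $C_1\subseteq O$ with $\abs{C_1}=\abs{L_1}$ and a causal flow $L_1\to C_1$ obtained after triangularisation, while the reduced open graph $(G'\setminus C_1,\,I\setminus C_1,\,L_1\cup(O\setminus C_1))$ again has CV-flow with layer decomposition $\{L_k\}_{k=2}^n$. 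Since the triangularisation localises all of $C_1$'s edges onto $L_1$, the teleportations onto $C_1$ separate out and one obtains an exact factorisation $\Cch_\eta=\Cch_{L_1,\eta}\circ\Rch_\eta$, where $\Rch_\eta$ is the physical map of the reduced open graph and $\Cch_{L_1,\eta}$ implements the final layer.

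Next I would establish the single-layer statement. By the triangularisation Proposition~\ref{lem:triangularisation}, the layer-$L_1$ map is approximately equal---in the infinite-squeezing limit and through the approximate controlled stabiliser Lemma~\ref{lem:cont-stab}---to the causal-flow MBQC for $L_1\to C_1$ dressed by the $\CXo$ product $\To^{(1)}$, which acts only in $O$. Because these $\CXo$ gates never fall between a measurement and its teleporting $\CZo$, nor touch an auxiliary $g_\eta$ before it is consumed, the dressed map still factors as a genuine single-gate-teleportation (star-pattern) circuit, so Proposition~\ref{prop:flow_convergence} yields $\Cch_{L_1,\eta}\to\mathscr{U}_{\To^{(1)}\Uo_{SPT}^{(1)}}$ strongly. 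The inductive hypothesis gives $\Rch_\eta\to\Rch_\infty\coloneqq\mathscr{U}_{\prod_{k=2}^n\To^{(k)}\Uo_{SPT}^{(k)}}$ strongly.

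The two limits must then be combined in the strong topology. Since every $\Cch_{L_1,\eta}$ is a trace-non-increasing completely positive map, hence a trace-norm contraction, the splitting
\begin{equation*}
\begin{aligned}
  &\norm{\Cch_{L_1,\eta}\Rch_\eta[\rho]-\mathscr{U}_{\To^{(1)}\Uo_{SPT}^{(1)}}\Rch_\infty[\rho]}\\
  &\qquad\leqslant\norm{\Rch_\eta[\rho]-\Rch_\infty[\rho]}+\norm{\Cch_{L_1,\eta}\Rch_\infty[\rho]-\mathscr{U}_{\To^{(1)}\Uo_{SPT}^{(1)}}\Rch_\infty[\rho]}
\end{aligned}
\end{equation*}
controls the error: the first term vanishes by the inductive hypothesis, and the second by the strong convergence of $\Cch_{L_1,\eta}$ evaluated on the \emph{fixed} state $\Rch_\infty[\rho]$. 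For the latter evaluation to be legitimate I must check that $\Rch_\infty[\rho]$ is again a physical (Schwartz) state, which holds because it is obtained from the physical $\rho$ by conjugation with the unitaries $\Uo_{SPT}^{(k)}$ and $\To^{(k)}$---all built from $\So,\Fo,\Uo(\a,\b,\g),\CZo,\CXo$, each of which preserves the Schwartz class---and by partial traces, which do the same. Iterating collapses the composition to $\Wo=\prod_{k=1}^n\To^{(k)}\Uo_{SPT}^{(k)}$. Finally, the hypothesis $\abs{I}=\abs{O}$ enters through the path-cover Lemma~\ref{lem:cv_cover}: every path then begins at an input and ends at an output, so $\Wo$ is a unitary acting on exactly the input modes with no residual squeezed wire that would diverge, which pins down the claim that $\Wo$ acts on the qumodes indexed by $I$.

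The main obstacle I anticipate is precisely this interchange of limits under composition. In infinite dimensions strong (pointwise) convergence of channels is not automatically stable under composition, and the triangularisation identity of Lemma~\ref{lem:cont-stab} is only an \emph{approximate}, limit-valued equivalence valid on physical states. The delicate point is therefore to propagate three simultaneous approximations---the $\eta\to\infty$ limit of each layer, the approximate stabiliser rewriting, and the layer-by-layer composition---without losing control, which the contraction property of trace-non-increasing maps makes possible provided the intermediate states are kept within the Schwartz class throughout.
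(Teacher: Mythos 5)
Your proposal is correct and follows essentially the same route as the paper's proof: peel off layers via the graph-reduction and triangularisation lemmas (\ref{lem:CV->simple}, \ref{lem:triangularisation}), extract a star-pattern circuit per layer via Proposition~\ref{prop:flow_convergence}, and compose the resulting strong limits into the product \(\Wo = \prod_k \To^{(k)}\Uo_{SPT}^{(k)}\). If anything, your explicit triangle-inequality argument using the trace-norm contraction property of trace-non-increasing CP maps, together with the check that the intermediate states \(\Rch_\infty[\rho]\) remain Schwartz (needed to invoke Lemma~\ref{lem:cont-stab}), makes rigorous the interchange of limits that the paper's own proof compresses into the phrase ``by continuity.''
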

\begin{proof}
  The proof is left to appendix~\ref{app:CV_circuit}
\end{proof}

In other words, in the ideal limit, we implement the unitary
\begin{equation}
  \begin{aligned}
    \Wo(\vec{\a}, \vec{\b}, \vec{\g}) :
    \Hcal^{\otimes\abs{I}} &\longrightarrow \Hcal^{\otimes\abs{I}} \\
    f &\longmapsto \prod_{k=1}^{n} \To^{(k)}
    \Uo_{SPT}^{(k)}(\vec{\a},\vec{\b},\vec{\g}) f,
  \end{aligned}
\end{equation}
where convergence of the approximation is once again in the strong sense.

An example of the complete circuit extraction procedure for an open graph based
on a CV-flow and using the triangularised causal flow from example
\ref{fig:triangularisation} is given in figure \ref{fig:CV-circuit}.

\section{Flow for MBQC with qudit graph states}
\label{sec:qudit}
In this section, we give a brief overview of how our methods apply to the qudit case.
We leave a more in-depth discussion for future work, as some results are missing
when comparing with the qubit case \cite{browne_generalized_2007}. Some of these
questions are solved in our follow-up article \cite{booth_outcome_2021}.

\subsection{Computational model}
For qudit MBQC, the setup is very analogous to CV: letting \(d\) be an odd
prime, the state space is also a space of square integrable functions \(\Hch =
\mathrm{L}^2(\ZZ_d,\CC)\) with respect to a discrete measure, consisting of
functions \(\ZZ_d \to \CC\) with norm
\begin{equation}
  \qq{for any \(\psi \in \Hch\),} \norm{\psi} \coloneqq \sum_{n=1}^d \abs{\psi(n)}^2.
\end{equation}
Vectors in \(\Hch\) can be represented with respect to an orthonormal basis
\(\{\ket{n}\}_{n=1}^d\), and we adopt the bra-ket notation which is better
suited to this case than CV:
\begin{equation}
  \qq{for any \(\psi \in \Hch\),} \ket{\psi} \coloneqq \sum_{n = 1}^d \psi(n) \ket{n}.
\end{equation}
This space also comes equipped with two operators \(\Xo\) and \(\Zo\). Writing
\(m + n\) and \(m \cdot n\) the addition and multiplication in the finite field
\(\ZZ_d\) (that is, the usual operations modulo \(d\)), we have
\begin{align}
  \Xo \ket{n} \coloneqq \ket{n + 1}, \\
  \Zo \ket{n} \coloneqq e^{i\frac{2 \pi n}{d}} \ket{n}, \\
  \Xo^d = \Io = \Zo^d, \\
  \Zo \Xo = e^{i \frac{2 \pi}{d}} \Xo \Zo.
\end{align}
They are also linked by the Hadamard or discrete Fourier transform \(\Ho\),
\begin{equation}
  \Ho \ket{n} \coloneqq \frac{1}{\sqrt{d}} \sum_{k=1}^d e^{i\frac{2 \pi k n}{d}} \ket{k},
\end{equation}
so that
\begin{equation}
  \Ho \Zo \Ho^{-1} = \Xo \qand \Ho \Xo \Ho^{-1} = \Zo^{-1}.
\end{equation}

Finally, we define the multiplication, controlled-Z and controlled-X gates:
\begin{align}
  \Mo(w) \ket{n} &\coloneqq \ket{w \cdot n}, \\
  \CZo(w) \ket{m} \otimes \ket{n} &\coloneqq e^{i\frac{2 \pi w m n}{d}} \ket{m} \otimes \ket{n}, \\
  \CXo(w) \ket{m} \otimes \ket{n} &\coloneqq \ket{m} \otimes \ket{n + w \cdot m}.
\end{align}

For \(\Mo(w)\) to be unitary, it is necessary for \(w\) to be invertible. In
fact, as we shall see in the following section, for the techniques we developed
for MBQC in the CV case to apply, \(\ZZ_d\) must be a field and therefore \(d\)
must be prime. It is also clear that, algebraically, these operations are
analogous to those used for CV, with the multiplication operator taking the
place of squeezing.

\subsection{Measurement-based quantum computing}
\label{qudit:mbqc}

It is straightforward to see that the usual qubit gate teleportation generalises
to
\begin{center}
  \leavevmode
  \centering
  \Qcircuit @C=1.2em @R=0.4em {
    \lstick{\ket{\phi}} & \qw & \ctrl{2} & \gate{\mathsf{U}}
    & \gate{\Ho} & \meter  & \controlo{2} \cw \\
    & & & \hspace{-1.8cm}w & & & \cwx\\
    \lstick{\ket{0}} & \gate{\Ho} & \control \qw & \qw
    & \qw & \qw &\gate{\Xo^{-wm}} \cwx & \rstick{\Mo(w) \Ho \Uo \ket{\phi}} \qw \\ 
  }
  \vspace{3mm}
\end{center}
where once again \(\Uo\) is any unitary that commutes with \(\CZo(w)\), and we
perform a projective measurement whose elements are projections onto the basis
\(\{\ket{n}\}\). Therefore, as in the CV case, we can interpret the measurement
(including the change-of-basis unitary \(\Uo\)) as a measurement apparatus that
depends on the choice of a unitary \(\Uo\), with the only condition being that
\(\Uo\) must commute with \(\Zo\). It is clear that such a gate teleportation
protocol permits one to implement a large enough range of unitaries to obtain
MBQCs which are universal for qudit quantum computation
\cite{sawicki_universality_2017}.

\subsubsection*{Graph states}

A \textbf{qudit open graph} is an undirected \(\ZZ_d\)-edge-weighted
graph \(G = (V,A)\),\footnote{Once again, \(V\) is the set of vertices of \(G\)
  and \(A\) its \(V \times V\) adjacency matrix with coefficients in \(\ZZ_d\).}
along with two subsets of vertices \(I\) and \(O\), which correspond to the
inputs and outputs of a computation \cite{zhou_quantum_2003}. To this abstract
graph, we associate a physical resource state, the \textbf{graph state}, to be
used in a computation: each vertex \(j\) of the graph corresponds to a single
qudit and thus to a single pair \(\{\Xo_j, \Zo_j\}\).

For a given input state \(\ket{\psi}\) on \(\abs{I}\) qudits, the graph state can
be constructed as follows:
\begin{enumerate}
\item Initialise each non-input qudit, \(j \in I^{\mathsf{c}}\), in the
  auxiliary state \(\ket{\tilde{0}} = \Ho \ket{0}\), resulting in a separable state of the form
  \(\ket{\tilde{0}}^{\otimes \abs{I^{\mathsf{c}}}} \otimes \ket{\psi}\).
\item For each edge in the graph between vertices \(j\) and \(k\) with
  weight \(A_{j,k} \in \ZZ_d\), apply the entangling operation
  \(\Co\Zo_{j,k}\big(A_{j,k}\big)\) between the corresponding qudits.
\end{enumerate}

\subsection{Results}

Now, all of the results proven for continuous variables function analogously in
this setting.
Since the all of the arguments in CV simply extend the convergence proof of
single gate teleportation to the more general MBQC setting, and we have seen
that that proof is replaced by equality in the finite-dimensional case, we can
simply take all the results from previous sections, and drop the convergence
statements for equality.

For the sake of clarity, we repeat some definitions from section~\ref{sec:flow}:
\begin{Def}
  Let \((G,I,O)\) be a qudit open graph, \(<\) a total order over
  the vertices \(V\) of \(G\) and \(A^<\) be the adjacency matrix of \(G\)
  (which therefore has coefficients in \(\ZZ_d\)), such that its columns and
  rows are ordered by \(<\).
  Further, define \(P(j)\) (the ``past'' of vertex \(j\)) as the subset of \(V\)
  such that \(k \in P(j)\) implies \(k \leqslant j\).
  Then the \textbf{correction matrix} \(A_j^<\) of a vertex \(j \in V\) is the 
  matrix \(A_j^< \coloneqq A[P(j), (P(j) \cup I)^\mathsf{c}]\).
  \label{def:matrix}
\end{Def}

The correction matrix tells us how \(\Xo\) and \(\Zo\) operations on unmeasured
vertices are going to affect the previously measured vertices, thus it allows us to
determine how to apply a correction on a specific vertex by controlling this
back-action. The generalised flow condition is now given by:
\begin{Def}
  A qudit open graph \((G,I,O)\) has \textbf{\(\ZZ_d\)-flow} if there exists a
  partial order \(\prec\) on \(O^{\mathsf{c}}\) such that for any total order
  \(<\) that is a linear extension of \(\prec\) and every \(j
  \in O^\mathsf{c}\) there is a function \(c_j : \ZZ_d \to  \ZZ_d^{\abs{G} - j}\) 
  such that for all \(m \in \ZZ_d\) the linear equation
  \begin{equation*}
    A_j^< c_j(m) =
    \begin{pmatrix}
      0 \\ \vdots \\ 0 \\ m
    \end{pmatrix}
    \qq{holds,}
    \tag{\(\star\)}
  \end{equation*}
  where \(A_j^<\) is the correction matrix of vertex \(j\).
  Letting \((c_j)_{j \in O^\mathsf{c}}\) be a such set of functions, we call
  the pair \((\prec, (c_j))\) a \(\ZZ_d\)-flow for \((G, I, O)\).
\end{Def}

Let \((G,I,O)\) be an open \(\ZZ_d\)-graph with \(\ZZ_d\)-flow. Since the gate
teleportation allows us to implement any unitary which commutes with \(\Zo\)
(section~\ref{qudit:mbqc}), further assume a choice of unitary \(\Uo(k)\) is
made for each \(k \in O^\mathsf{c}\) such that \([\Uo(k),\Zo]=0\). Then, by
construction we know the following MBQC protocol is runnable:
\begin{enumerate}
\item measure the non-output vertices in the graph in any order which is a linear
  extension of \(\prec\), in the basis corresponding to \(\Uo\);
  and,
\item immediately after each measurement (say of vertex \(j\)), and before any
  other measurement is performed, correct for the measurement error \(m_j\) by
  applying
  \begin{equation}
    \Co_j^\prec(m_j) \coloneqq \prod_{k \in V \setminus P(j) \cup I}
    \left( \Xo_k^{-c_j(m_j)_k}
      \quad\smashoperator{\prod_{\ell \in N(k) \setminus P(j)}}\quad \Zo_\ell^{-A_{k,\ell} 
      \cdot c_j(m_j)_k} \right),
    \label{eq:correction}
  \end{equation}
  where \(c_j(m_j)_k\) is the \(k\)-th element of the vector \(c_j(m_j) \in 
  \RR^{\abs{V} - \abs{P(j) \cup I}}\).
\end{enumerate}

We write \(\Zch\) the quantum channel corresponding to this protocol. Of course,
the corrections \(\Co_j^\prec\) can be simplified and reduced, as was done in
equation~\eqref{eq:correction_simplified} for the CV case.

\subsection*{Quantum circuit extraction}

Unlike for continuous-variables, for qudits we do not need to worry about
convergence questions, since the gate teleportation protocol holds exactly.
However, it is still of interest to obtain a reversible quantum circuit which
implements the same quantum operations as a given MBQC. Under the assumption of
\(\ZZ_d\)-flow, the methods of section~\ref{sec:circuit} can easily be adapted
to this case, only where each convergence statement is replaced by an equality.

The causal flow condition (definition~\ref{def:flow}) applies to the
qudit case as well, with the same intuition as the qubit and CV cases. Denoting
\(\Fch\) the corresponding quantum channel, we have
\begin{Prop}[Causal flow circuit]
  \label{qudit:causal_flow}
  Suppose the qudit open graph \((G,I,O)\) has a causal flow, then for any
  choice of measurement bases,
  \begin{equation}
    \Fch(\vec{\a}, \vec{\b}, \vec{\g})
    = \Uch_{\Uo_{SPT}},
  \end{equation}
  where \(\Uo_{SPT}\) is the unitary corresponding to the circuit obtained by
  star pattern transformation of \((G,I,O)\).
\end{Prop}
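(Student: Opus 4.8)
The plan is to prove the causal flow circuit statement for qudits by observing that it is the finite-dimensional analogue of Proposition~\ref{prop:flow_convergence} (the CV causal flow circuit result), where the infinite-squeezing limit is replaced by an exact equality. The entire strategy is to follow the star pattern transformation construction of section~\ref{sec:spt} verbatim, checking at each step that the qudit gate teleportation holds \emph{exactly} rather than merely in a limit. First I would invoke Lemma~\ref{lem:simple-cover}, whose proof is purely graph-theoretic and independent of local dimension (it depends only on the combinatorial structure of causal flow, not on edge weights), to obtain a path cover $P_f$ of $(G,I,O)$ from the causal flow $(f,\prec)$. This lets us split the edges of $G$ into causal-flow edges, which will become single-qudit gate teleportations, and non-flow edges, which become $\CZo(A_{j,k})$ gates, exactly as in the CV case.

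Next I would reinterpret the MBQC channel $\Fch(\vec{\a},\vec{\b},\vec{\g})$ using the ``trick'' of equations~\eqref{eq:semantics_trick}: each measurement followed by its flow correction is identified with a two-qudit unitary interaction (the qudit analogue of $\exp(iw\Po_1\Po_2)$) followed by a partial trace over the measured wire. The key point, which distinguishes the qudit case from CV, is that the qudit gate teleportation circuit in section~\ref{qudit:mbqc} implements $\Mo(w)\Ho\Uo\ket{\phi}$ \emph{exactly}, since the auxiliary state $\ket{\tilde 0}=\Ho\ket 0$ is a genuine normalised vector in $\Hch$ and the projective measurement onto $\{\ket{n}\}$ is well-defined (unlike the divergent Dirac-delta / momentum-eigenstate in CV). Thus the channel factors as a composition of single-qudit teleportations along each path of $P_f$, interleaved with $\CZo$ gates for the non-flow edges, and each flow edge $(j,k)$ contributes the qudit $\Jo$-gate analogue $\Mo(A_{j,k})\Ho\,\Uo(\a_j,\b_j,\g_j)$.

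The argument then concludes by reading off that sequencing the per-path single-qudit gates together with the $\CZo$ gates, ordered according to the measurement order $\prec$ as in step~3 of the SPT recipe, reproduces precisely the circuit $\Uo_{SPT}$. Since every teleportation step is an exact equality of channels, no limit is needed and we obtain $\Fch(\vec{\a},\vec{\b},\vec{\g})=\Uch_{\Uo_{SPT}}$ directly.

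I expect the main obstacle to be verifying that the qudit correction $\Xo^{-wm}$ exactly undoes the measurement back-action via the stabiliser relation, i.e. that the qudit analogue of the commutation calculation in equation~\eqref{eq:causal_flow_corrections_commutation} goes through cleanly. This requires that the qudit graph stabilisers $\Ko_j = \Xo_j \prod_{k\in N(j)}\Zo_k^{A_{j,k}}$ genuinely fix $\Gch$ (an exact identity in finite dimension, as opposed to the approximate Lemma~\ref{lem:cont-stab} in CV) and that the Weyl commutation relation $\Zo\Xo = e^{i 2\pi/d}\Xo\Zo$ combines correctly with the edge weights so that the phases cancel. Because $\ZZ_d$ is a field for $d$ prime, the required inverse $A_{j,f(j)}^{-1}$ in the correction exists, so no genuine algebraic obstruction arises; the work is simply bookkeeping of phases, which is the direct finite-dimensional shadow of the CV computation and hence can be asserted by analogy rather than recomputed.
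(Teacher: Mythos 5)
Your proposal is correct and follows essentially the same route as the paper's (sketched) proof: decompose the MBQC along the causal-flow path cover into single-qudit gate teleportations interspersed with \(\CZo\) gates, order them as in the star pattern transformation, and observe that each qudit teleportation implements its \(\Jo\)-gate exactly (the auxiliary state \(\Ho\ket{0}\) being a genuine normalised vector), so the CV limiting argument collapses to an equality. Your inclusion of the \(\Mo(A_{j,k})\) factor in the extracted gate is in fact slightly more careful than the paper's abbreviated statement \(\Jo(i)=\Ho\Uo(i)\), and consistent with the teleportation output \(\Mo(w)\Ho\Uo\ket{\phi}\) given in the paper.
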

In the star pattern transformation, the gates are given by \(\Jo(i) = \Ho
\Uo(i)\). As before, for a unitary \(\Uo\), \(\Uch_\Uo[\rho] = \Uo \rho \Uo^*\).
\begin{proof}[Sketch of proof]
  As was explained for continuous-variables, causal flow just identifies gate
  teleportation in the corresponding MBQC, which are interspersed with
  intermediate \(\CZo\) gates. Star pattern transformation simply makes this
  intuition formal by providing a way to order the teleportations and
  intermediate \(\CZo\)s to obtain an equivalent circuit. In the CV case, this
  circuit is an approximation since we then need to consider convergence of the
  individual gate teleportations, but for qudits this is no longer necessary so
  we get a proper equality: the teleportation along the edge \(i \to f(i)\)
  exactly implements a gate \(\Jo(i)\).
\end{proof}
\begin{Prop}[\(\ZZ_d\)-flow circuit]
  If \((G, I, O)\) is a qudit open graph with \(\ZZ_d\)-flow and \(\abs{I} = \abs{O}\)
  then the \(\ZZ_d\)-flow correction protocol implements a unitary acting on the
  input state.
  If \(\{V_k\}_{k=1}^n\) is a corresponding layer decomposition, for any
  choice of measurement bases,
  \begin{equation}
    \Zch(\vec{\a}, \vec{\b}, \vec{\g})
    = \Uch_{\prod_{k=1}^{n} \To^{(k)} \Uo_{SPT}^{(k)}},
  \end{equation}
  where \(\Uo_{SPT}^{(k)}\) is the circuit extracted for the \(k\)-th layer
  using the causal flow from the equivalent to lemma \ref{lem:CV->simple}, and
  \(\To^{(k)}\) contains the \(\CXo\) gates obtained from the equivalent to the
  triangularisation of the \(\ZZ_d\)-flow (lemma \ref{lem:triangularisation}).
  \(\To^{(k)} \Uo_{SPT}^{(k)}\) acts on the qudits represented by wires indexed
  by \(V_k\), and the total product acts on the qudits represented by wires
  indexed by \(I\).
\end{Prop}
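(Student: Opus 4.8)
The plan is to mirror the proof of theorem~\ref{thm:CV-circuit} step for step, with the single but decisive simplification that every limit $\eta \to \infty$ is replaced by a genuine equality. The replacement is legitimate because the obstruction which forced the strong topology in the CV case — the auxiliary momentum eigenstate being a tempered distribution rather than a Hilbert-space vector — disappears for qudits: the auxiliary state $\ket{\tilde 0} = \Ho\ket 0$ is a bona fide normalised element of the finite-dimensional space $\Hch$, and one checks directly that $\Xo\ket{\tilde 0} = \Ho\Zo\ket 0 = \ket{\tilde 0}$, so it is an exact $\Xo$-eigenvector. Consequently the single-qudit gate teleportation drawn in section~\ref{qudit:mbqc} is an exact operator identity (the qudit replacement for proposition~\ref{prop:tele-circuit}), and the circuit-identification ``trick'' of equation~\eqref{eq:semantics_trick}, which the text already notes is a finite-dimensional fact, holds on the nose. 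This lets me write $\Zch$ exactly as a composition of conjugations by unitaries and partial traces, each trace implementing a projective measurement of a genuine orthonormal basis.

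Next I would establish the qudit analogue of lemma~\ref{lem:cont-stab} as an \emph{exact} stabiliser identity. Since $\Ko_k = \Xo_k \prod_{\ell \in N(k)} \Zo_\ell^{A_{k,\ell}}$ exactly stabilises the qudit graph state for every $k \in I^\mathsf{c}$, controlling $\Ko_k$ on a further mode $j$ shows that $\Uch_{\CXo_{j,k}(-s)} \circ \Gch[\rho] = \Uch_{\CZo_{j,N(k)}(s)} \circ \Gch[\rho]$ with no limit. This is the engine that makes a column operation $C_j \mapsto C_j + s C_k$ on the correction matrix physically implementable by a $\CXo$ gate, and it now holds for the exact states $\Gch[\rho]$ rather than only asymptotically.

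With these exact identities in hand, the remaining structure transfers verbatim, because all the combinatorial and linear-algebraic lemmas of section~\ref{sec:circuit} are statements about a matrix over a field, and $\ZZ_d$ is a field precisely because $d$ is an odd prime. Concretely I would: take a layer decomposition $\{V_k\}$ of the $\ZZ_d$-flow; triangularise the correction matrix of the last layer as in lemma~\ref{lem:triangular_form} (Gaussian elimination over $\ZZ_d$), which by the previous paragraph corresponds to inserting the exact $\CXo$ gates $\To^{(1)}$ acting only on $O$ and produces an exact causal flow $V_1 \to C_1$ (lemma~\ref{lem:triangularisation}); extract the circuit $\Uo_{SPT}^{(1)}$ for that causal flow by the exact qudit star pattern transformation of proposition~\ref{qudit:causal_flow}; peel off $C_1$ using the graph-reduction argument of lemma~\ref{lem:CV->simple}; and iterate over the remaining layers. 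The path-cover argument of lemma~\ref{lem:cv_cover}, together with $\abs{I} = \abs{O}$, then guarantees that every path runs from an input to an output, so that the composite $\prod_{k=1}^n \To^{(k)} \Uo_{SPT}^{(k)}$ is a unitary on the $\abs{I}$ wires indexed by $I$, giving $\Zch = \Uch_{\prod_k \To^{(k)} \Uo_{SPT}^{(k)}}$ exactly.

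The main obstacle is not analytic but bookkeeping: one must confirm that \emph{every} appeal to a limit in the CV proof is backed by one of the two exact finite-dimensional identities above, and in particular that the field axioms are the only property of $\RR$ used in the triangularisation and path-cover lemmas. This is exactly where primality of $d$ is indispensable — over a non-field $\ZZ_d$ the Gaussian elimination underpinning lemma~\ref{lem:triangular_form} can fail through non-invertible pivots, and the multiplication gate $\Mo(w)$ ceases to be unitary for non-invertible $w$ — so the cleanest write-up isolates these two points and then invokes the CV arguments with $\RR$ replaced by $\ZZ_d$ throughout.
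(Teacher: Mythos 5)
Your proposal is correct and follows the same route as the paper's own (sketched) proof: replay the argument of section~\ref{sec:cv-linear} with every limit promoted to an equality, reducing the general case to layer-wise causal flows that are extracted via proposition~\ref{qudit:causal_flow}. Your write-up is in fact more detailed than the paper's sketch, since you explicitly justify the equality replacement (exact $\Xo$-eigenvector auxiliary state, exact graph stabilisers and controlled-stabiliser identity) and isolate where primality of $d$ enters, but the underlying strategy is identical.
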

\begin{proof}[Sketch of proof]
  This proof follows almost entirely the same argument as presented in
  section~\ref{sec:cv-linear}, only where once again all of the limits can be
  taken to be proper equalities. This reduces the general case down to the case
  of causal flows, which can then be extracted using
  proposition~\ref{qudit:causal_flow}.
\end{proof}

To end this section, we make clear why we need to assume that the
operations form a field and thus restrict to only prime dimensions. All
single-qudit gate extractions occur via the gate teleportation protocol, and for
this it suffices for the edge weight \(w\) to have a multiplicative inverse,
i.e.\, it belongs to the group of units \(\mathbb{Z}_d^*\). Then the edge weight
is extracted as a unitary \(\mathsf{M}(w)\), as described in
section~\ref{qudit:mbqc}. So far there is no issue if we allow any dimension but
restrict to only open graphs with edge-weights in the group of units
\(\mathbb{Z}_d^*\).

The problem if we are not operating over a field arises during the extraction of
a more general MBQC, in particular at the triangularisation step. At this point,
we need to triangularise a submatrix of the adjacency matrix of the graph. Doing
so using a standard technique, say Gaussian elimination, risks us ending up with
edge weights that are no longer invertible. If such an edge belongs to the path
cover used in the extraction, i.e.\, the edge is to be extracted as a gate
teleportation, then the circuit extraction algorithm gets stuck.

\section*{Conclusion}
We have defined a notion of flow for continuous variables and proved that it can
be used to obtain a desired unitary, provided sufficient squeezing resources
are available.
We have also obtained a polynomial algorithm for finding CV-flow and  a circuit extraction scheme, which might allow further
comparison of depth and size complexities between circuit models and MBQC, as has
already been obtained in the DV case, with a preliminary result in this direction showing depth separation in Appendix \ref{app:depth}.
We have not considered the question of convergence rates in terms of the
squeezing resources available nor the precision of measurements.
These are highly dependant on the specific choice of measurements, auxiliary
teleportation states, the topology of the graph, and, most problematically, on
the input state itself.

There are further extensions possible to the flow framework.
In particular, one might consider Hilbert spaces over more general locally compact
rings or fields and these come equipped with a different unitary group of
translations.
It is unclear whether a good notion of flow is possible in these spaces in
general, but examples include the cases considered in this article, which
correspond to \(\RR\) and \(\ZZ_d\).
This is also necessary to extend the theory to qudits of arbitrary (non-prime)
local dimension.
Then, one is interested in general in the case where the edges of the graph are
weighted with elements of a ring \(R\), and the correction equations are solved
in the \(R\)-module \(R^n\).

Finally, we only considered a single choice of measurement ``plane'', where the
original gflow condition accounted for several.
It should be possible to extend our framework to account for more planes using
symplectic relations between different measurement bases.

{\raggedright\printbibliography}

\newpage
\appendix

\section{Convergence proofs}
\label{app:convergence}
We use several different, equivalent formulations of quantum theory for the
different proofs of convergence:
\begin{itemize}
\item the proof of gate teleportation, proposition \ref{prop:tele-circuit} is
  done via the Wigner-Weyl-Moyal phase space formulation of quantum theory,
  which is a standard tool in continuous variables quantum computing;
\item the proof that graph states admit approximate controlled stabilisers,
  lemma \ref{lem:cont-stab} is done in the Hilbert space formulation, since this
  is a statement about pure states;
  \item the proofs of convergences of the flow MBQC procedures in the density
    operator formulation, since these two proofs follow from the other two
    convergence results and a continuity argument for quantum channels on
    \(\mathfrak{T}(\mathcal{H})\).
\end{itemize}

\subsubsection*{The Wigner function}

We briefly review the Wigner formulation.
In fact, we do not need the full phase space picture: it is sufficient for our
purposes to understand how to represent states and measurements on those states.

In the phase space formalism, to each density operator \(\rho \in D(\Hcal)\)
such that \(\rho = \sum_j c_j \rho_{\psi_j}\), we associate a \emph{real-valued}
square-integrable function in a Hilbert space \(\mathrm{L}^2(\RR
\times \RR, \RR)\), called the Wigner function:
\begin{equation}
  W_\rho(q,p) \coloneqq \frac{1}{\pi} \sum_j c_j \int_{\RR} \psi_j^*(q+y) \psi_j(q-y) e^{2ipy} \dd{y}.
\end{equation}
and it is clear that this association is \(\RR\)-linear, \(W_{\rho + \lambda \s}
= W_\rho + \lambda W_\s\).
The norm is given by
\begin{equation}
  \norm{W} \coloneqq \sqrt{\int_{\RR} \int_{\RR} W(x,y)^2 \dd{x}\dd{y}},
\end{equation}
which agrees with the Hilbert-Schmidt norm, 
\(\norm{W_\rho}^2 = \tr(\rho^2)\).

\subsection{Proof of Proposition \ref{prop:tele-circuit}}
\label{sapp:tele-circuit}

We are interested in proving convergence of the quantum map implemented by:
\begin{center}
  \leavevmode
  \centering
  \Qcircuit @C=1.2em @R=0.4em {
    \lstick{\rho} & \ctrl{2} & \gate{\Uo(\a,\b,\g)}
    & \measureD{\Po} & \controlo{2} \cw & \\
    & \hspace{-4mm}w & & & \cwx \\
    \lstick{\s_\eta} & \control \qw & \qw & \qw & \gate{\Xo(-wm)} \cwx &
    \rstick{\Tch_\eta(\a,\b,\g)[\rho]} \qw \\ 
  }
  \vspace{3mm}
\end{center}

\GateTele*

\begin{proof}
  We first note that we can ignore the parameter \(w\) for the \(\CZo(w)\) gate:
  we know that
  \begin{align}
    \CZo(w) &= \exp(iw\Qo_1 \Qo_2) = \exp(i\Qo_1 w\Qo_2) = \exp(i\Qo_1 \So_2(w)^*\Qo_2\So_2(w)) \\
    &= \So_2(w)^*\exp(i\Qo_1 \Qo_2)\So_2(w) = \So_2(w)^*\CZo(1)\So_2(w) = \So_2(-w)\CZo\So_2(w).
  \end{align}
  Thus, the teleportation circuit is equivalent to
  \begin{center}
    \leavevmode \centering \Qcircuit @C=1.2em @R=0.4em { \lstick{\rho} & \qw &
      \ctrl{2} & \gate{\Uo(\a,\b,\g)}
      & \measureD{\Po} & \controlo{2} \cw & \\
      & & & & & \cwx \\
      \lstick{\s_\eta} & \gate{\So(-w)} & \control \qw & \gate{\So(w)} & \qw &
      \gate{\Xo(-wm)} \cwx &
      \qw \\
    } \vspace{3mm}
  \end{center}
  and, commuting the correction with the squeezing operator, to
  \begin{center}
    \leavevmode \centering \Qcircuit @C=1.2em @R=0.4em { \lstick{\rho} &
      \ctrl{2} & \gate{\Uo(\a,\b,\g)}
      & \measureD{\Po} & \controlo{2} \cw & \\
      & & & & \cwx \\
      \lstick{\s_{\eta - w}} & \control \qw & \qw & \qw & \gate{\Xo(-m)} \cwx &
      \qw & \gate{\So(w)} & \qw \\
    } \vspace{3mm}
  \end{center}
  The additional squeezing \(w\) in the auxiliary state will be absorbed into
  the limit, and the final \(\So(w)\) gate can be added at the end since it
  comes after the teleportation (it is unitary thus continuous and preserves
  limits). Since the ``change of basis'' unitary \(\Uo(\a,\b,\g)\) commutes with
  the \(\CZo\) gate, it can be absorbed into the input state which is arbitrary
  by hypothesis. We have therefore reduced the problem to proving convergence of
  the simpler circuit:
  \begin{center}
    \leavevmode \centering \Qcircuit @C=1.2em @R=0.4em { \lstick{\rho} &
      \ctrl{2}
      & \measureD{\Po} & \controlo{2} \cw & \\
      & & & \cwx \\
      \lstick{\s_\eta} & \control \qw & \qw & \gate{\Xo(-m)} \cwx &
      \rstick{\Tch_\eta[\rho]} \qw \\
    } \vspace{3mm}
  \end{center}
  for an arbitrary input \(\rho \in D(\mathcal{H})\). From \cite{gu_quantum_2009} we
  know that the output of this circuit is
  \begin{equation}
    W_{\mathcal{T}_\eta[\rho]}(q,p) = \int_{\RR} G_\eta(q-x) W_\rho(x,p) \dd{x}
    = W_{\Fo \rho \Fo^*} *_1 g_{\frac{1}{\eta}}(q,p),
  \end{equation}
  where \(*_1\) indicates convolution with respect to the first variable. We
  need to bound the trace distance \(\norm{\mathcal{T}_\eta[\rho] - \Fo \rho
    \Fo^*}\). By \cite{alberti_bures_2000,hou_fidelity_2012} we know that for
  any \(\rho,\sigma \in D(\mathcal{H})\),
  \begin{equation}
    \norm{\rho - \sigma} \leqslant \sqrt{1 - \mathrm{F}(\rho,\sigma)},
  \end{equation}
  where \(\mathrm{F}\) is the Ulhmann fidelity \cite{uhlmann_transition_1976}
  which can be calculated for pure states as
  \begin{equation}
    \mathrm{F}(\rho,\sigma) = \frac{1}{\pi} \int_{\RR}\int_{\RR} W_\rho(q,p)
    W_\sigma(q,p) \dd{q}\dd{p}.
  \end{equation}

  Assume \(\rho\) is a pure state, and furthermore that it is the density
  operator of \(\psi \in \mathrm{L}^1(\RR) \cap \LR\), i.e. the projector \(P_\psi\)
  onto the one-dimensional subspace generated by \(\psi\).
  Then we have
  \begin{align}
    1 - \mathrm{F}(\mathcal{T}_\eta[\rho],\Fo \rho \Fo^*)
    &= 1 - \frac{1}{\pi} \int_{\RR} \int_{\RR} W_{\mathcal{T}_\eta[\rho]}(q,p) W_\rho(p,-q) \dd{q}\dd{p} \\
    &= \frac{1}{\pi} \int_{\RR} \int_{\RR} \Big(W_\rho(p,-q)^2 - W_{\mathcal{T}_\eta[\rho]}(q,p) W_\rho(p,-q)\Big) \dd{q}\dd{p} \\
    &= \frac{1}{\pi} \int_{\RR} \int_{\RR} W_\rho(p,-q) \Big(W_\rho(p,-q) - W_{\mathcal{T}_\eta[\rho]}(q,p)\Big) \dd{q}\dd{p}. 
  \end{align}
  since
  \begin{equation}
    \int_{\RR} \int_{\RR} W_\rho(p,-q)^2 \dd{q}\dd{p} = \tr((\Fo \rho \Fo^*)^2) = \tr(\rho^2) = 1.
  \end{equation}
  \(\psi\) being \(\mathrm{L}^1\) implies that its Wigner transform \(W_\rho\)
  is also \(\mathrm{L}^1\) (\cite{de_gosson_symplectic_2006}, proposition 6.43),
  so
  \begin{align}
    1 - \mathrm{F}(\mathcal{T}_\eta[\rho],\Fo \rho \Fo^*)
    &= \abs{1 - \mathrm{F}(\mathcal{T}_\eta[\rho],\Fo \rho \Fo^*)} \\
    &= \abs{\frac{1}{\pi} \int_{\RR} \int_{\RR} W_\rho(p,-q) \Big(W_\rho(p,-q) - W_{\mathcal{T}_\eta[\rho]}(q,p)\Big) \dd{q}\dd{p} } \\
    &\leqslant  \frac{1}{\pi} \int_{\RR} \int_{\RR} \abs{W_\rho(p,-q) \Big(W_\rho(p,-q) - W_{\mathcal{T}_\eta[\rho]}(q,p)\Big)} \dd{q}\dd{p} \\
    &= \frac{1}{\pi} \int_{\RR} \int_{\RR} \abs{W_\rho(p,-q)} \cdot \abs{W_\rho(p,-q) - W_{\mathcal{T}_\eta[\rho]}(q,p)} \dd{q}\dd{p} \label{eq:wrho_prod}\\
    &\leqslant \frac{1}{\pi^2} \int_{\RR} \int_{\RR} \abs{W_\rho(p,-q) - W_{\mathcal{T}_\eta[\rho]}(q,p)} \dd{q}\dd{p} \label{eq:wrho_prod_removed} \\
    &= \frac{1}{\pi^2} \int_{q \in \RR} \int_{p \in \RR} \abs{W_\rho(p,-q) - W_\rho *_1 g_\frac{1}{\eta}(p,-q)} \dd{q}\dd{p},
  \end{align}
  where we have used the inequality \(\abs{W_\rho(p,-q)} \leqslant
  \tfrac{1}{\pi}\) for pure states (\cite{de_gosson_symplectic_2006}, section
  6.4.3) to go from equation \eqref{eq:wrho_prod} to
  \eqref{eq:wrho_prod_removed}. As a result,
  \begin{equation}
    \norm{\mathcal{T}_\eta[\rho] - \Fo \rho \Fo^*}
    \leqslant \frac{1}{\pi} \norm{W_{\Fo \rho \Fo^*} -
      W_{\Fo \rho \Fo^*} *_1 g_\frac{1}{\eta}}_{\mathrm{L}^1}.
  \end{equation}
  By definition, \(G_\eta(x) = \eta^{-1} G_{1}(\eta^{-1}x)\)
  and \(\int_\RR G_\eta = 1\), so that by \cite{wheeden_measure_2015},
  theorem 9.6, the net \((G_\eta)_{\eta \in \RR_+^*}\) forms an
  approximation to identity. As a result,
  \begin{equation}
    \norm{W_{\Fo \rho \Fo^*} - W_{\Fo \rho \Fo^*} *_1 g_\frac{1}{\eta}}_{\mathrm{L}^1}
    \to 0 \qas \eta \to +\infty,
  \end{equation}
  and it follows that
  \begin{equation}
    \lim_{\eta \to \infty} \mathcal{T}_\eta[\rho] = \Fo \rho \Fo^*,
    \label{eq:convergence_pure_L1}
  \end{equation}
  for any such \(\rho\).
  Since the trace-class norm agrees with the Hilbert space norm for pure states,
  and \(\mathrm{L}^1(\RR) \cap \LR\) is dense in \(\LR\), equation
  \eqref{eq:convergence_pure_L1} holds by continuity of \(\mathcal{T}_\eta\) for
  any pure \(\rho \in D(\mathcal{H})\). Finally, since the set of finite convex sums
  of pure \(\rho \in D(\mathcal{H})\) is dense in \(D(\mathcal{H})\) (which is Banach) the
  result can be extended to mixed states.

  Reintroducing the edge-weight and measurement angles, we have
  \begin{equation}
    \lim_{\eta \to \infty} \mathcal{T}_\eta(\a,\b,\g,w)[\rho]
    = \So(w) \Fo \Uo(\a,\b,\g) \rho \Uo^*(\a,\b,\g) \Fo^* \So^*(w),
  \end{equation}
  as desired.
\end{proof}

\subsection{Proof of Proposistion \ref{prop:flow_convergence}}
\label{sapp:flow_convergence}

\FlowCircuit*

\begin{proof}
  As explained in section \ref{sec:spt}, we can decompose \(\Fch_\eta\) as a
  set of parallel paths with mediating edges.
  Each of these parallel paths corresponds to a sequence of single gate
  teleportations.
  All we need to worry about is ordering the mediating edges such that they
  appear before any teleportation of a node they are connected to. 
  This is possible since such an ordering exists if and only if there is a
  causal flow \cite{miyazaki_analysis_2015}.
  Then, by proposition \ref{prop:tele-circuit} each teleportation converges, and
  and since each gate teleporation channel is continuous it preserves limits.
\end{proof}

\subsection{Proof of Lemma \ref{lem:cont-stab}}
\label{sapp:approx_stab}

\ApproxStab*

\begin{proof}
  Let \(\phi \in \Hcal\) be normalised and Schwartz, pick \(s \in \RR\), and consider
  \begin{equation}
    \CXo_{1,2}(s) [\phi \otimes g_\eta] (x,y) = \exp(is\Qo_1\Po_1) [\phi \otimes g_\eta] (x,y)
                                                    = \phi(q) g_\eta(y + sx).
  \end{equation}
  Now, since \(\phi\) is square-integrable of norm \(1\), for any \(\e > 0\)
  there is some bounded measurable subset \(E \subseteq \RR\) such that
  \begin{equation}
    \int_{x \in E^\mathsf{c}} \abs{\phi(x)}^2 < \e,
  \end{equation}
  and
  \begin{align}
    \norm{\CXo_{1,2}(s) \phi \otimes g_\eta - \phi \otimes g_\eta}^2
    &= \frac{1}{\sqrt{\pi \eta^2}} \smashoperator{ \int_{\RR} } \smashoperator{ \int_{\RR} }
      \abs{\phi(x) e^{-\frac{(y+sx)^2}{2\eta^2}} - \phi(x) e^{-\frac{y^2}{2\eta^2}}}^2 \dd{x}\dd{y} \\
    &\leqslant \frac{1}{\sqrt{\pi \eta^2}} \smashoperator{ \int_{\RR} } \smashoperator{ \int_{\RR} }
      \abs{\phi(x)}^2 \abs{ e^{-\frac{(y+sx)^2}{2\eta^2}} - e^{-\frac{y^2}{2\eta^2}}}^2 \dd{x}\dd{y}\\
    &\leqslant \frac{1}{\sqrt{\pi \eta^2}} \smashoperator{ \int_{E} } \smashoperator{ \int_{\RR} }
      \abs{\phi(x)}^2 \abs{ e^{-\frac{(y+sx)^2}{2\eta^2}} - e^{-\frac{y^2}{2\eta^2}}}^2 \dd{y}\dd{x} \\
    &\hspace{1cm}+ \frac{1}{\sqrt{\pi \eta^2}} \smashoperator{ \int_{E^\mathsf{c}} } \smashoperator{ \int_{\RR} }
      \abs{\phi(x)}^2 \abs{ e^{-\frac{(y+sx)^2}{2\eta^2}} - e^{-\frac{y^2}{2\eta^2}}}^2 \dd{y}\dd{x} \\
    &\leqslant \frac{1}{\sqrt{\pi\eta^2}} \smashoperator{ \int_{x\in E} } \abs{\phi(x)}^2  \smashoperator{ \int_{\RR} }
      \abs{ e^{-\frac{(y+sx)^2}{2\eta^2}} - e^{-\frac{y^2}{2\eta^2}}}^2 \dd{y}\dd{x}
    + 2 \smashoperator{ \int_{E^\mathsf{c}} } 
      \abs{\phi(x)}^2 \dd{x}. 
  \end{align}

  Furthermore, using  for any \(x,y \in \RR\),
  \begin{equation}
    \abs{ e^{-\frac{(y+sx)^2}{2\eta^2}} - e^{-\frac{y^2}{2\eta^2}}}
    \leqslant \abs{sx} \cdot \max_{t \in \RR} \dv{t}(e^{-\frac{t^2}{2\eta^2}})
    \leqslant \frac{A}{\eta} \abs{sx}
  \end{equation}
  where \(A = \max_{t \in \RR} \dv{t}(e^{-\frac{t^2}{2}})\).
  Then,
  \begin{equation}
    \norm{\CXo_{1,2}(s) \phi \otimes g_\eta - \phi \otimes g_\eta}^2
    \leqslant \frac{A^2 s^2}{\eta^3\sqrt{\pi}} \smashoperator{ \int_{E} } \abs{x\phi(x)}^2 \dd{x} 
      + 2 \e, 
  \end{equation}
  and since \(\phi\) is Schwartz, \(\int_{x\in E} \abs{x\phi(x)}^2\) is bounded
  by some \(B > 0\).
  Finally,
  \begin{equation}
    \norm{\CXo_{1,2}(s) \phi \otimes g_\eta - \phi \otimes g_\eta}^2
    \leqslant \frac{A^2 B s^2}{\eta^3\sqrt{\pi}} 
      + 2 \e, 
  \end{equation}
  whence picking \(\eta > \sqrt[3]{\frac{A^2 B s^2}{\varepsilon \sqrt{\pi}}}\)
  we have \(\norm{\CXo_{1,2}(s) \phi \otimes g_\eta - \phi \otimes g_\eta}^2 <
  3\e\), and because \(\e > 0\) was arbitrary,
  \begin{equation}
    \lim_{\eta \to \infty} \norm{\CXo_{1,2}(s) \phi \otimes g_\eta - \phi \otimes g_\eta}^2 = 0.
  \end{equation}

  As every controlled stabiliser can be reduced to this case by commuting though
  \(\Eo_G\):
  \begin{equation}
    \CXo_{j,k}(s) \CZo_{j,N(k)}(s) \Eo_G =  \Eo_G \CXo_{j,k}(s),
  \end{equation}
  we can reduce the lemma to just this subcase, and we are done.
\end{proof}

\subsection{Proof of Theorem \ref{thm:CV-circuit}}
\label{app:CV_circuit}

\CVconvergence*

\begin{proof}
  By lemma \ref{lem:cv_cover} we obtain a graph \(G'\) that is approximately
  equivalent to \(G\) up to \(\CXo\) gates.
  Let \(\Eo_G^{(k)}\) be the product of \(\CZo\) gates in \(G'\) from layer
  \(V_k\) into its outputs and \(\To^{(k)}\) the \(\CXo\) gates obtained from
  the corresponing triangularisation procedure.
  By lemmas \ref{lem:triangularisation} and \ref{lem:cv_cover} for any
  \(\mathtt{A} > 0\) we have, for high enough squeezing, that
  \begin{equation}
    \norm{\Eo_G g_\eta^{\otimes \abs{I^{\mathsf{c}}}} \otimes \phi
      - \prod_{k = 1}^n \Big(\To^{(k)} \Eo_G^{(k)}\Big)
       g_\eta^{\otimes \abs{I^{\mathsf{c}}}} \otimes \phi} < \mathtt{A}.
  \end{equation}

  Now, none of the edges in \(\To^{(k)}\) \(\Eo_G^{(k)}\) for \(k < n\) touch the
  nodes in \(V_{k+1}\), so that we can bring the auxiliary squeezed states
  \(\ket{\eta}_v\) for \(v \in V_k\) forward until \(\Eo_G^{(k)}\).
  Since there is a causal flow \(V_{k+1} \to V_k\),
  \begin{equation}
    \prod_{k=1}^n \Big(\To^{(k)} \Eo_G^{(k)}\Big)
    g_\eta^{\otimes \abs{I^{\mathsf{c}}}} \otimes \phi
    = \To^{(k)} \circ \Och^{(k)} \circ \cdots \circ \To^{(1)} \circ \Och^{(1)}
      \big[ \phi \big]
  \end{equation}
  where \(\Och^{(k)}\) is the channel associated to the causal flow procedure
  \(V_{k+1} \to V_k\).
  Then by proposition \ref{prop:flow_convergence}, we can perform an SPT for
  each \(\Och^{(k)}\), and
  \begin{equation}
    \lim_{\eta \to \infty}
    \Cch_\eta(\vec{\a}, \vec{\b}, \vec{\g})
    = \prod_{k=1}^{n} \To^{(k)} \; \Uo_{SPT}^{(k)}(\vec{\a},\vec{\b},\vec{\g}),
  \end{equation}
  by continuity.
\end{proof}

\section{A polynomial time algorithm for finding CV-flows}
\label{app:algorithm}
The following pseudocode algorithm finds a CV-flow for an open graph if it has
one (we interpret the return value \(\varnothing\) as the graph not having a
flow).
Our algorithm is based on \cite{mhalla_finding_2008}, which contains an almost identical algorthm for the qubit case.

\begin{algorithmic}[1]
  \State \textbf{input:} A CV open graph
  \State \textbf{output:} A CV-flow
  \Statex
  \Procedure{CV-Flow}{$G, I, O$}
    \ForAll{\(v \in O\)}
      \State \(\mathrm{layer}(v) \coloneqq 0\)
    \EndFor
    \State  \textbf{return} \Call{CV-Flow-aux}{\(G, I, O\), \textrm{layer},
      \textrm{flow}, 1}
  \EndProcedure
  \Statex
  \Procedure{CV-Flow-aux}{\(G\), In, Out, \textrm{layer}, $k$}
    \State \(O' \coloneqq O \setminus I\) \Comment{Nodes onto which we can correct}
    \State \(C \coloneqq \varnothing\) \Comment{Nodes which we are correcting in
    this layer}
    \ForAll{\(v \in G \setminus O\)}
      \State \textbf{solve} in \(\RR\): \(A_C \vec{c} = 1_{\{v\}}\) assuming \(v
      \prec o\) for all \(v \in O^\mathsf{c}, o \in O\)
      \Comment{\(A_C\) is the correction matrix of \(C\)}
      \If{there is a solution \(\vec{c}\)}
        \State \(C \coloneqq C \cup \{v\}\)
        \State \(\mathrm{layer}(v) \coloneqq k\) \Comment{Assign \(v\) to layer \(k\)}
        \State \(\mathrm{flow}(k) = \vec{c}\) \Comment{The corrections for layer
          \(k\)}
      \EndIf
    \EndFor
    \If{\(C = \varnothing\)} \Comment{If we can no longer correct for additional
    nodes, either:}
      \If{\(O = G\)} 
        \State \textbf{return} \((\mathrm{flow}, \mathrm{layer})\) \Comment{we have found a CV-flow for the whole open graph; or,}
      \Else 
        \State \textbf{return} \(\varnothing\)\Comment{there is no CV-flow.}
      \EndIf
    \Else
      \State  \textbf{return} \Call{CV-Flow-aux}{\(G, I, O \cup C\),
        \textrm{layer}, \textrm{flow}, $k+1$}
    \EndIf
  \EndProcedure
\end{algorithmic}

The return value is a pair of functions, \(\mathrm{layer} : G \to \NN\) which
assigns each node to a layer, and \(\mathrm{flow} : \NN \to \RR^N\) which
returns the correction coefficients for each layer.
The ordering for the CV-flow is implicitely given by the layers: all nodes within
layers are unordered with respecting to each other and nodes in layer \(k+1\)
are less than nodes in layer \(k\).

For a proof of the running time of this algorithm, we refer the reader to
\cite{mhalla_finding_2008}.
The only real difference between that algorithm and ours are the lines \(13\) to
\(18\): the correction equations are solved over \(\RR\) instead of \(\ZZ_2\),
and the type of the solution is subtly different.

\section{Comparing g-flow and CV-flow}
\label{app:comparison}
\begin{Def}
  An open graph \((G, I, O)\) has a \textbf{generalised flow}, or
  gflow, if there exists a map \(g: I^\mathsf{c} \to
  \mathcal{P}(O^\mathsf{c})\) and a partial order \(\prec\) over \(G\) such that
  for all \(i \in I^\mathsf{c}\):
  \begin{itemize}
  \item if \(j \in g(i)\) and \(i \neq j\) then \(i \prec j\);
  \item if \(j \prec i\) then \(j \notin \mathrm{Odd}(g(i))\).\qedhere
  \end{itemize}
  \vspace{-3mm}
  \label{def:gflow}
\end{Def}

It is not immediately clear if gflow and CV-flow are equivalent properties or
not, or even if one is strictly stronger than the other.
We construct counterexamples to either implication, showing that these are
indeed entirely independant properties.
Thus, a graph can have both (as in the case of flow), either or neither.

\begin{Prop}
  The open graph
  \begin{center}
    \tikzfig{figures/gflow_graph/gflow_graph1}
  \end{center}
  has a gflow but no CV-flow.
\end{Prop}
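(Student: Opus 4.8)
The plan is to treat the two claims—existence of a gflow and non-existence of a CV-flow—separately, exploiting the linear-algebraic characterisation developed just above. Recall that CV-flow (Definition \ref{def:CV-flow}) asks, for every non-output \(j\) and every linear extension \(<\) of some partial order \(\prec\), that the correction equation \(A_j^< c_j = (0,\dots,0,m)^T\) be solvable over \(\RR\), while gflow is the same system read over \(\ZZ_2\). By Lemma \ref{lem:cv-flow-linear} it suffices to take \(m=1\), so both parts reduce to comparing, for each correction matrix \(A_j^<\) (Definition \ref{def:matrix}), whether the indicator vector \(e_j\) of the last row \(j\) lies in the real column span or only in the \(\ZZ_2\)-column span of \(A_j^<\).

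For the gflow direction I would simply exhibit a witness. Reading the layers off the graph from the outputs backwards, I would specify the correction set \(g(i)\subseteq O^\mathsf{c}\) for each \(i\in I^\mathsf{c}\), then verify the two conditions of Definition \ref{def:gflow}: that every \(j\in g(i)\) with \(j\neq i\) satisfies \(i\prec j\), and that no \(j\prec i\) lies in \(\mathrm{Odd}(g(i))\). Equivalently, and more mechanically, I would write down the \(\ZZ_2\)-vectors \(c_i\) supported on \(g(i)\) and check directly that \(A_i^< c_i = e_i\) over \(\ZZ_2\). Once the graph is fixed this is a finite, routine verification.

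The substance is non-existence of a CV-flow, which must hold against \emph{every} candidate partial order. I would isolate a single obstructing vertex \(v\) and show that, whatever the measurement order, the equation \(A_v^< c = e_v\) has no real solution. By the Fredholm alternative this is equivalent to producing a real row vector \(w\) in the left kernel of \(A_v^<\) with \(w_v\neq 0\): such a \(w\) annihilates every admissible real correction yet detects a nonzero back-action on \(v\), certifying that \(v\)'s measurement error cannot be corrected over \(\RR\). The corresponding \(\ZZ_2\) system is solvable—that is precisely the gflow witness of the previous step—so the obstruction is genuinely arithmetic: the real-linear dependence encoded by \(w\) has no companion \(\ZZ_2\)-dependence that annihilates \(e_v\), exactly the parity cancellation available modulo \(2\) but forbidden over \(\RR\). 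To finish I would argue the certificate survives every ordering: the nonzero entries of \(A_v^<\) come from the neighbours and second-neighbours of \(v\), whose incidences are fixed by the graph regardless of which of them has already been measured, so a short case analysis on the position of \(v\) in \(<\)—pruned using the symmetry of the graph—reproduces an annihilating \(w\) in each case.

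The main obstacle is this last universal quantification over measurement orders: unlike the gflow side, where one witness suffices, ruling out CV-flow means ruling out every \(\prec\). I expect the work to concentrate in organising the case analysis so that a single structural feature—the real-linear dependence among the columns of \(A_v^<\) that is invisible modulo \(2\)—obstructs the correction of \(v\) in all admissible orderings, rather than in checking any individual correction matrix.
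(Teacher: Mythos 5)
Your gflow half is fine and is essentially what the paper does: exhibit an explicit witness (each outer vertex corrected onto an unmeasured neighbour, the central vertex corrected on all three outputs, where the back-action cancels modulo 2) and verify it, either against definition~\ref{def:gflow} or as a \(\ZZ_2\)-linear system. Nothing more is needed there.

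The CV-flow half has a genuine gap: your quantifiers are inverted, and the statement you plan to prove is false for this graph. You propose to fix a single obstructing vertex \(v\) and produce, for \emph{every} measurement order, a left-kernel certificate \(w\) of \(A_v^<\) with \(w_v \neq 0\). But no such vertex exists. If \(v\) is measured first under some order, then \(P(v) = \{v\}\) and \(A_v^<\) is a single row, namely \(v\)'s adjacencies to all unmeasured non-input vertices; since every non-output vertex of this graph is adjacent to at least one output, that one-equation system \(A_v^< c = (1)\) is solvable over \(\RR\) (a causal-flow-style correction onto a single neighbour), so no annihilating \(w\) can exist for that order. The same holds for every vertex of the graph, central or outer. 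This also undercuts your claim that the relevant incidences are ``fixed by the graph regardless of which of them has already been measured'': the row and column sets of \(A_v^<\), and hence its solvability, change drastically with \(v\)'s position in the order. The paper pivots the argument the other way around: any CV-flow extends to a total order, and a total order has a \emph{last} measured vertex, whose correction matrix is maximal (all non-outputs are rows, only the outputs are columns). It then suffices to show that \emph{no} vertex can play this last role, and by the graph's symmetry only two cases remain---central vertex last, or an outer vertex last. In both cases the augmented system row-reduces to an inconsistent system over \(\RR\) (pivot in the target column), while the central-last system \emph{is} consistent over \(\ZZ_2\)---that parity cancellation is precisely the gflow. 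Your Fredholm-alternative certificate is sound linear algebra, but it must be deployed per order at that order's last vertex (\(\forall <\, \exists v\)), not per vertex across all orders (\(\exists v\, \forall <\)).
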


\begin{proof}
  The graph has a gflow, given (for example) by the measurement order
  \begin{center}
    \tikzfig{figures/gflow_graph/gflow_graph2},
  \end{center}
  where we correct each of the first three nodes onto an unmeasured neighbour,
  and the fourth node is corrected on all three outputs.
  The modularity ensures that this last correction has no backaction on nodes
  \(1\) to \(3\).

  Now, we show that the graph does not have CV-flow, by showing that no node in
  the graph can be measured last. There are two cases: either we measure the
  central node last (case 1), or we measure one of the outer nodes last (case 2)
  (by symmetry, these are all equivalent).

  \paragraph{Case 1:} The maximal correction subgraph, where we are correcting a
  measurement on the blue node and the black nodes have already been
  measured, is given by
  \begin{center}
    \tikzfig{figures/gflow_graph/gflow_graph3},\hspace{5mm} for which the correction equation
    has superior matrix \hspace{5mm}\(\left(
      \begin{array}{@{}ccc|c@{}}
        1 & 1 & 1 & 1 \\
        1 & 1 & 0 & 0 \\
        1 & 0 & 1 & 0 \\
        0 & 1 & 1 & 0 \\
      \end{array}\right).
    \)
  \end{center}

  \paragraph{Case 2:} The maximal correction subgraph is given by
  \begin{center}
    \tikzfig{figures/gflow_graph/gflow_graph4},\hspace{5mm} for which the correction equation
    has superior matrix \hspace{5mm}\(\left(
      \begin{array}{@{}ccc|c@{}}
        1 & 1 & 1 & 0 \\
        1 & 1 & 0 & 1 \\
        1 & 0 & 1 & 0 \\
        0 & 1 & 1 & 0 \\
      \end{array}\right).
    \)
  \end{center}
  Since both of these superior matrices have reduced row echelon form
  \begin{equation*}
    \left(\begin{array}{@{}ccc|c@{}}
            1 & 0 & 0 & 0 \\
            0 & 1 & 0 & 0 \\
            0 & 0 & 1 & 0 \\
            0 & 0 & 0 & 1 \\
          \end{array}\right),
  \end{equation*}
  equation the correction equation has no solution in either case. For any CV-flow
  to exist, we must be able to measure and correct some node of the graph last
  (every partial order extends consistently to a total order), so we have
  constructed a graph that has a gflow but no CV-flow.
\end{proof}

\begin{Prop}
  The open graph
  \begin{center}
    \tikzfig{figures/cv-flow_graph/cv-flow_graph}
  \end{center}
  has a CV-flow but no gflow.
  \label{prop:CV_flow_graph}
\end{Prop}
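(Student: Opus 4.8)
The plan is to establish the two halves separately, exploiting the fact recorded in the unnamed Proposition of Section~\ref{qudit_flow:flow:gflow}: gflow is exactly the $\ZZ_2$-analogue of CV-flow, in that an open graph has gflow with order $\prec$ if and only if, for every linear extension $<$ of $\prec$ and every $j \in O^\mathsf{c}$, the correction equation \eqref{eq:correction_eq} admits a solution over $\ZZ_2$ rather than over $\RR$. Thus the proposition reduces to exhibiting one open graph (read with unit edge-weights, so that it makes sense in both settings) whose correction equations are solvable over $\RR$ but never over $\ZZ_2$, for any admissible measurement order.

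First I would exhibit the CV-flow directly. Fixing a partial order $\prec$ on $O^\mathsf{c}$ (equivalently, a layer decomposition as in Definition~\ref{def:layer_decomposition}), I would write down, for each non-output vertex $j$, its correction matrix $A_j^<$ (Definition~\ref{def:matrix}) and solve $A_j^< c_j(1) = (0,\dots,0,1)^T$ over $\RR$. Because these correction matrices have full row rank over $\RR$, each system is consistent, and by Lemma~\ref{lem:cv-flow-linear} the chosen $c_j(1)$ extend linearly to all $m \in \RR$. This furnishes exactly the pair $(\prec,(c_j))$ demanded by Definition~\ref{def:CV-flow}.

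For the absence of gflow I would argue as in the preceding proposition, but reading all arithmetic over $\ZZ_2$. Since any partial order extends to a total order, some vertex must be measured last, and it suffices to rule out every candidate for the last-measured vertex. For each candidate I would form the augmented ``superior'' matrix of its correction equation \eqref{eq:correction_eq} over $\ZZ_2$ and compute its reduced row echelon form. The key point is that, although these matrices have full rank over $\RR$ (which is precisely what produced the CV-flow above), their reduction modulo $2$ yields a singular coefficient block paired with a nonzero right-hand side, i.e.\ an explicit $0 = 1$ row in the RREF, so the $\ZZ_2$ system is inconsistent. Symmetry of the graph collapses the candidates into only a couple of genuinely distinct cases, keeping the computation short.

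The main obstacle is to arrange the edge set so that this $\RR$-versus-$\ZZ_2$ discrepancy holds \emph{uniformly}: the relevant square submatrix of the adjacency matrix must have nonzero determinant over $\RR$ yet even (hence vanishing) determinant over $\ZZ_2$, for every choice of last-measured vertex. Concretely this is the same parity phenomenon that distinguishes $\det\bigl(\begin{smallmatrix} 1 & 1 & 0 \\ 0 & 1 & 1 \\ 1 & 0 & 1 \end{smallmatrix}\bigr) = 2$ from its vanishing mod $2$. Once the graph is set up so that every last-vertex correction matrix exhibits this obstruction, consistency over $\RR$ delivers the CV-flow while inconsistency over $\ZZ_2$ forbids gflow, completing the proof.
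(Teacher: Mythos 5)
Your proposal is correct and follows essentially the same route as the paper: the paper likewise exhibits the CV-flow explicitly (correcting the first two measured vertices onto neighbouring outputs and the last one via a correction with coefficients $\pm m/2$, which exists precisely because the relevant correction matrix has determinant $2 \neq 0$ over $\RR$), and then rules out gflow by showing the last measurement admits no valid correction over $\ZZ_2$. The only cosmetic difference is that the paper phrases the $\ZZ_2$ obstruction directly in the language of Definition~\ref{def:gflow} (every subset of unmeasured vertices connecting oddly to the last measured vertex also connects oddly to a previously measured one), whereas you phrase it as inconsistency of the augmented system in reduced row echelon form; these are equivalent by the paper's own matrix characterisation of g-flow, and your determinant-parity mechanism is exactly the one at work.
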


\begin{proof}
  Any measurement order on the nodes is equivalent by symmetry, so the graph has
  a unique CV-flow (up to rotations) given by:
  \begin{center}
    \tikzfig{figures/cv-flow_graph/cv-flow_graph_cv-flow},
  \end{center}
  where we correct the first two measurement onto a neighbouring output (as per
  flow), and the last measurement outcome \(m\) as:
  \begin{center}
    \tikzfig{figures/cv-flow_graph/cv-flow_graph_correction}.
  \end{center}

  By definition~\ref{def:gflow}, this last correction is not possible in
  DV,  as every possible subset of the unmeasured nodes (in white) that connects
  oddly to the measured node (in blue) also connects oddly to one of the
  previously corrected nodes (in black).
\end{proof}

\section{Depth complexity advantage in CV-MBQC}
\label{app:depth}
One of the main reasons for interest in measurement-based quantum computation
originally came from the reduction in depth of certain quantum computations when
the corresponding quantum circuit is reformulated as an MBQC
\cite{broadbent_parallelizing_2009, miyazaki_analysis_2015}. This improvement
does not come for free: to obtain an MBQC that implements the same computation
typically requires one to increase the number of quantum systems involved in the
computation. We show that these results straightforwardly transfer to the CV
setting, by constructing a computation which takes at least logarithmic depth as
a quantum circuit, but can be implemented in constant depth as an MBQC.

This result was originally obtained in the qubit case in
\cite{broadbent_parallelizing_2009}, by considering a simple arithmetic problem.
The problem asks one to sum \(N\) binary digits, and a simple argument shows
that accessing all the binary digits requires at least a logarithmic number of
sequential binary gates. We reformulate this problem in CV. In order to avoid
the questions of precision when encoding arbitrary real numbers, we instead
consider the analogous problem for integers.

Each integer in the input (to be summed) is encoded in the position observable
of a qumode, use the following state: for any \(k \in \ZZ\), let \(f_k \in
\mathrm{L}^2(\RR)\) be given by
\begin{equation}
  f_k(x) \coloneqq
  \begin{cases}
    1 \qif x \in [k,k+1); \\
    0 \qq{otherwise.}
  \end{cases}
\end{equation}
The point is that, given two such states \(f_m\) and \(f_n\) (where \(m,n \in
\ZZ\)), we can use a \(\CZo\) gate to obtain the sum of \(m\) and \(n\), itself
encoded in a state of the form \(f_{m+n}\):
\begin{center}
  \leavevmode
  \centering
  \Qcircuit @C=1.2em @R=1.2em {
    \lstick{f_m} & \gate{\Fo} & \ctrl{1} & \qw & \qw \\
    \lstick{f_n} & \qw & \control \qw & \gate{\Fo} & \rstick{\Fo f_{m+n}} \qw \\           
  }
  \vspace{3mm}
\end{center}
We ignore the first output of the circuit.

Now, let \((k_j)_{j=1}^N \in \ZZ^N\) be a set of integers corresponding to the
input of the problem, and consider the CV-MBQC given by the open graph
\begin{center}
  \scalebox{1.05}{\tikzfig{figures/depth_comparison_graph}}
\end{center}
where the input \(I_n\) is prepared in the state \(f_{k_n}\).
This open graph has a 1-step CV-flow where all the inputs are in the layer \(I
\prec O\).
The correction equation is given by
\begin{equation}
  \begin{pmatrix}
    1 & 0 & 0 & 0 & 0 & \cdots & 0 & 0 \\
    1 & 1 & 0 & 0 & 0 & \cdots & 0 & 0 \\
    0 & 1 & 1 & 0 & 0 & \cdots & 0 & 0 \\
    0 & 0 & 1 & 1 & 0 & \cdots & 0 & 0 \\
    0 & 0 & 0 & 1 & 1 & \cdots & 0 & 0 \\
    \vdots & \vdots & \vdots & \vdots & \vdots & \ddots & \vdots & \vdots \\
    0 & 0 & 0 & 0 & 0 & \cdots & 1 & 0\\
    0 & 0 & 0 & 0 & 0 & \cdots & 1 & 1
  \end{pmatrix} \vec{c} = \mqty(\xmat*{m}{5}{1} \\ \vdots \\ m_{N-1} \\ m_{N}),
\end{equation}
which has a solution \(\vec{c} \in \RR^N\) whose \(n\)-th element is \(c_n =
\sum_{j=1}^n m_j\).
Performing the circuit extraction as described in section \ref{sec:circuit} and
taking \(\a = \b = \g = 0\) at each measured node (that is, we just measure
\(\Po\)), we see that the MBQC is equivalent in the infinite-squeezing limit to
the circuit
\begin{center}
  \leavevmode
  \centering
  \Qcircuit @C=1.2em @R=1.2em {
    \lstick{f_{k_1}} & \gate{\Fo} & \ctrl{1} & \qw & \qw          & \qw & \qw & \qw & \qw & \qw & \qw & \qw\\
    \lstick{f_{k_2}} & \qw & \control \qw & \gate{\Fo} & \ctrl{1} & \qw & \qw & \qw & \qw & \qw & \qw & \qw\\           
    \lstick{f_{k_3}} & \qw & \qw & \qw & \control \qw & \gate{\Fo} & \ctrl{1}
    \qw & \qw & \qw & \qw & \qw & \qw \\
    & & & & & & & & & \\
    & & & & & \vdots & & & & \\
    & & & & & & & & & \\
    \lstick{f_{k_{N-1}}} & \qw & \qw & \qw & \qw & \qw & \qw & \ctrl{-1} \qw &
    \gate{\Fo} & \ctrl{1} & \qw & \qw \\
    \lstick{f_{k_{N}}}   & \qw & \qw & \qw & \qw & \qw & \qw & \qw & \qw &
    \control \qw & \gate{\Fo} \qw & \rstick{\Fo f_{\sum_{j=1}^N k_j}} \qw \\ 
  }
  \vspace{3mm}
\end{center}
where we only care about the last output.
Since the graph state can be generated in two steps provided access to the input
states (generating the auxiliary squeezed states then performing the entangling
unitaries), the CV-MBQC yields a quantum circuit for generating the output state
\(\Fo f_{\sum_{j=1}^N k_j}\) in constant depth and with using \(2N\) qumodes. In
particular, this allows us to determine \(\sum_{j=1}^N k_j\) for any input
\((k_j)_{j=1}^N \in \ZZ^N\), since performing a momentum measurement on the
output state \(\Fo f_{\sum_{j=1}^N k_j}\) is guaranteed to return a value in
\([\sum_{j=1}^N k_j,\sum_{j=1}^N k_j+1)\).

Now, from  \cite{broadbent_parallelizing_2009} we have:
\begin{Prop}
  Let \(\So\) a unitary operator on \(\mathcal{H}\) that restricts to the map
  \begin{equation}
    \So (\bigotimes_{j=1}^N f_{k_j}) = \bigotimes_{j=1}^{N-1} f_{k_j} \otimes \left( f_{\sum_{j=1}^N k_j} \right)
  \end{equation}
  on the subspace of \(\mathcal{H}\) generated by the \(f_k\) for \(k \in \ZZ\).
  Then any circuit of \(1\) and \(2\)-qumode gates that implements \(\So\) has
  depth in \(\Omega(\log_2(N))\).
\end{Prop}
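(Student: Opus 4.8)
The plan is to use the standard causal-cone (lightcone) argument for circuit-depth lower bounds, following \cite{broadbent_parallelizing_2009}. This argument is essentially dimension-independent and so transfers directly to qumodes. The core observation is that the output state on the last wire depends on \emph{all} $N$ inputs, whereas a shallow circuit of two-qumode gates can only propagate causal influence to exponentially few wires in its depth.

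First I would fix the definition of depth as the number of layers of mutually disjoint-support (hence commuting) gates, and set up the backward causal cone of the distinguished output wire --- the $N$-th wire, which carries $f_{\sum_j k_j}$. Working in the Heisenberg picture, for any bounded observable $M$ supported on the output wire I would track the support of $\So^* M \So$ through the circuit, layer by layer. Within a single layer a one-qumode gate never enlarges the support, while a two-qumode gate can spread the support of $M$ from one wire it touches to at most one additional wire; since the gates in a layer act on pairwise-disjoint pairs, the number of wires in the support at most doubles per layer. Hence after $d$ layers $\So^* M \So$ is supported on a set $S$ of input wires with $\abs{S} \leqslant 2^d$.

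Next I would show that $S$ must in fact contain every input wire. Suppose some input wire $j_0 \notin S$. Since $\tr(M\, \So\, \rho_{\mathrm{in}}\, \So^*) = \tr(\So^* M \So\, \rho_{\mathrm{in}})$ depends only on the reduced input state on the wires in $S$, it is unchanged if we alter only the input on wire $j_0$. But I can choose two product inputs $\bigotimes_j f_{k_j}$ and $\bigotimes_j f_{k_j'}$ that agree on every wire of $S$ and differ only on wire $j_0$, with $\sum_j k_j \neq \sum_j k_j'$. The corresponding output states on the last wire are then $f_a$ and $f_b$ with $a \neq b$, which are \emph{orthogonal} because their supports $[a,a+1)$ and $[b,b+1)$ are disjoint. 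Taking $M$ to be the projector onto $f_a$ gives expectation $1$ in one case and $0$ in the other --- a contradiction. Therefore every input influences the output, $S$ is all of the inputs, and $N \leqslant 2^d$, i.e. $d \geqslant \log_2 N$.

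I expect the only delicate point to be making the causal-cone bookkeeping precise: one must verify the ``support at most doubles per layer'' step against the gate-by-gate structure of a general circuit rather than an idealised layered one, and confirm that $\So$ being specified only on the subspace generated by the $f_k$ suffices, since we only ever feed the circuit inputs from that subspace. The infinite-dimensional nature of $\Hcal$ plays no essential role here: the argument uses only the tensor-product structure of the wires, the locality of one- and two-qumode gates, and the orthogonality of the $f_k$, none of which require finite dimension.
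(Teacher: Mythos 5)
Your proposal is correct and follows essentially the same route as the paper, which proves this proposition by the same backwards light-cone argument from \cite{broadbent_parallelizing_2009}: the output of a circuit of depth $d$ built from $2$-qumode gates can depend on at most $2^d$ inputs, forcing $2^d \geqslant N$. Your write-up is in fact more complete than the paper's sketch, since you make precise the step the paper leaves implicit --- that the last output genuinely depends on all $N$ inputs, via the orthogonality of $f_a$ and $f_b$ for $a \neq b$.
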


This result follows from an argument on the ``backwards light-cone'' of the
outputs.
In general, the output of a \(k\)-ary gate can depend only on the \(k\) inputs,
so that composing \(n\) times, the output can depend on at most \(k^n\) inputs.
As a result, the number of gates needed to treat \(N\) inputs is at least
\(\log_k(N)\).

Thus, this example demonstrates a clear depth advantage for CV-MBQC: at the cost
of doubling the width of the circuit, the unitary \(\So\) is implemented in
constant rather than logarithmic depth.

\end{document}